\DeclareMathOperator{\Ima}{Im}
\DeclareMathOperator{\diag}{diag}
\DeclareMathOperator{\Diag}{Diag}
\DeclareMathOperator{\vect}{vec}
\DeclareMathOperator{\vecth}{vech}
\DeclareMathOperator{\tr}{tr}
\DeclareMathOperator{\cov}{Cov}
\DeclareMathOperator*\argmin{\arg\!\min}
\newcommand{\muf}{{\bm b}}
\newcommand{\sy}{{\rm sy}}
\newcommand{\id}{{\rm id}}
\newcommand{\hsy}{h^\sy}
\newcommand{\hid}{h^\id}
\newcommand{\Hsy}{\Hcal^\sy}
\newcommand{\Hid}{\Hcal^\id}
\newcommand{\Gsy}{\Gcal^\sy}
\newcommand{\Gid}{\Gcal^\id}
\newcommand{\Ksy}{K^\sy}
\newcommand{\Kid}{K^\id}
\newcommand{\ksy}{k^\sy}
\newcommand{\kid}{k^\id}
\newcommand{\lsy}{\lambda^\sy}
\newcommand{\lid}{\lambda^\id}
\newcommand{\E}{{\mathbb E}}
\newcommand{\Pa}{{\mathbb P}}
\newcommand{\R}{{\mathbb R}}
\newcommand{\Sy}{{\mathbb S}}
\newcommand{\M}{{\mathbb M}}
\newcommand{\N}{{\mathbb N}}
\newcommand{\Ccal}{{\mathcal C}}
\newcommand{\Dcal}{{\mathcal D}}
\newcommand{\Ecal}{{\mathcal E}}
\newcommand{\Fcal}{{\mathcal F}}
\newcommand{\Gcal}{{\mathcal G}}
\newcommand{\Hcal}{{\mathcal H}}
\newcommand{\Lcal}{{\mathcal L}}
\newcommand{\Rcal}{{\mathcal R}}
\newcommand{\Scal}{{\mathcal S}}
\newcommand{\Ucal}{{\mathcal U}}
\newcommand{\Zcal}{{\mathcal Z}}
\newcommand{\rolling}{r}
\newcommand{\rollnum}{ 24 }
\newtheorem{proposition}{Proposition}[section]
\newtheorem{lemma}[proposition]{Lemma}
\newtheorem{theorem}[proposition]{Theorem}
\newtheorem{exampleemph}[proposition]{Example}   
\newenvironment{example}{\begin{exampleemph}\begin{upshape}}{\end{upshape}\end{exampleemph}} 
\author{Damir\ Filipovi\'{c}\footnote{
 \'Ecole Polytechnique F\'ed\'erale de Lausanne and Swiss Finance Institute, Email: \texttt{damir.filipovic@epfl.ch}}
 \and
 Paul\ Schneider\footnote{
 Universit\`a della Svizzera italiana and Swiss Finance Institute, Email: \texttt{paul.schneider@usi.ch}
 }}
\date{26 March 2025}
\begin{document}

\title{Joint Estimation of Conditional Mean and Covariance for Unbalanced Panels\footnote{We thank Sam Cohen, Daniel Kuhn, Andreas Neuhierl, Olivier Scaillet, Raman Uppal, Michael Wolf, and participants of the Oxford Man Institute Machine Learning in Quantitative Finance Conference 2024, the 17th International Conference on Computational and Financial Econometrics (CFE 2023), the York Asset Pricing Workshop (2024), Numerical Methods for Finance and Insurance (2025),  and seminars at the Universities of St.~Gallen,  Geneva,  Lugano, Cambridge, Rotterdam and Z\"urich, and WU Vienna for helpful comments. Paul Schneider gratefully acknowledges the Swiss National Science Foundation grant 105218\_215528 ``Large-scale kernel methods in financial economics''.}}

\maketitle

\begin{abstract}
We develop a nonparametric, kernel-based joint estimator for conditional mean and covariance matrices in large and unbalanced panels. The estimator is supported by rigorous consistency results and finite-sample guarantees, ensuring its reliability for empirical applications. We apply it to an extensive panel of monthly US stock excess returns from 1962 to 2021, using macroeconomic and firm-specific covariates as conditioning variables. The estimator effectively captures time-varying cross-sectional dependencies, demonstrating robust statistical and economic performance. We find that idiosyncratic risk explains, on average, more than 75\% of the cross-sectional variance.

\vspace{2ex}

\noindent\textbf{Keywords:} nonparametric estimation, conditional mean, conditional covariance matrix, unbalanced panels, mean-variance efficient portfolio

\vspace{1ex}

\noindent\textbf{JEL classification:} C14, C58, G11 
\end{abstract}

\clearpage






\section{Introduction}\label{sec_introduction}

The relationship between conditional expected returns, conditional risk, and both asset-level and macroeconomic covariates has been a central topic in financial economics for decades. Yet, inference in this domain remains constrained by the unbalanced and high-dimensional nature of real-world data. In this paper, we address these challenges by introducing a nonparametric, kernel-based framework for the joint estimation of conditional mean and covariance matrices, providing a powerful and tractable solution to the econometric inference problem highlighted by \citet{cochrane11}. Our framework is specifically designed to deliver positive semidefinite covariance matrices across any state and for cross sections of varying sizes, filling a significant gap in the literature.\footnote{Many papers in finance build on conditional means and covariances, which are typically assumed to be exogenously given. For example, \citet{goy_etal_24} take the first and second moments of security returns as given when constructing mean–variance efficient portfolios net of trading costs.}

Since \citet{famamacbeth73}, empirical researchers studying unbalanced panels have primarily relied on tools such as portfolio sorts \citep{famafrench93, famafrench19, kozaknagelsantosh20}, models for expected returns applicable to both balanced and unbalanced panels \citep{connorhagmannlinton12, fanliaowang16, freybergerneuhierlweber20, gukellyxiu20b, kellypruittsu19, kozaknagel24}, and econometric inference methods for linear factor models \citep{zaffaroni19, fortinetal24024, fortingagliardiniscaillet24}. While recent econometric literature has introduced conditional covariance estimators tailored for high-dimensional settings \citep{fanliaomincheva13,Fan03072018,engleledoitwolf19,FAN20195}, there remains a critical gap: a scalable, nonparametric framework capable of jointly and consistently estimating conditional means and covariances in a large-scale, unbalanced context.\footnote{We are not the first to model conditional means and covariances for asset returns. The approach most closely related to ours is \citet{gao_11}, who estimates these quantities sequentially using a local nonparametric smoothing method. \citet{kirby18} proposes a model based on a parametric GARCH volatility specification. In a different direction, \citet{clarkelinn24} represent covariances as a superposition of indicator functions and likewise estimate means and covariances sequentially. None of these studies provide finite-sample guarantees for their estimators, as we do in this paper. Moreover, our framework is formulated in a general conditional modeling setting and applies beyond financial data.}

We address this problem by proposing a novel, nonparametric, kernel-based model for jointly estimating conditional first and second moments for unbalanced panels of arbitrary size, requiring only that these conditional moments can be represented within a large and flexible hypothesis space. Our model uniquely ensures that, at any point in time and across any cross-sectional dimension, conditional return covariances incorporate both systematic and idiosyncratic components, and remain symmetric and positive semidefinite, despite their nonparametric construction. We refer to this model as the \emph{joint conditional mean and covariance} (COCO) estimator. Our approach tackles a significant limitation in the literature, which typically focuses on either covariances or first moments independently. Moreover, our model’s functional form is optimal with respect to the mean squared loss and, while it is broadly applicable, aligns precisely with the characterization of economies that can be spanned by factor portfolios, as discussed in \citet{kozaknagel24}.

The COCO estimator is computationally efficient and scalable, capable of handling large datasets on standard desktop computing hardware. Its parsimonious structure means that finite-dimensional specifications may not even require validation, enhancing practical applicability. Furthermore, the model provides a natural low-rank representation with controlled approximation error, leading to a \citet{chamberlainrothschild83}-type conditional factor structure, where the rank of the conditional covariance matrix corresponds to the number of systematic factors. Crucially, the estimator emerges from a convex optimization problem, ensuring reproducibility---a distinct advantage over non-convex models prevalent in deep learning and other econometric frameworks.

We empirically test the COCO estimator on an extensive unbalanced panel of monthly US stock returns from 1962 to 2021, incorporating both asset-level and macroeconomic covariates. Our results indicate that the COCO estimator offers moderate predictability for realized excess returns, with stronger and more reliable predictability for their squares and mixed products, which correspond to conditional second moments, especially in the early sample period. By jointly assessing both moments, the COCO estimator significantly outperforms a baseline model that accounts only for idiosyncratic risk. On the other hand, we find that idiosyncratic risk explains, on average, more than 75\% of the cross-sectional variance. The conditional mean-variance efficient (cMVE) portfolio constructed from the COCO estimates achieves substantial annualized out-of-sample Sharpe ratios, markedly outperforming equal-weighted portfolios over the entire sample period. Furthermore, cMVE returns exhibit weak correlations with the Fama–-French five factors \citep{famafrench15}. As the number of systematic factors in our model increases, the connection to the Fama--French factors diminishes, ultimately rendering the variation in cMVE portfolio returns largely unrelated to the traditional five-factor model. The empirical findings are complemented by a simulation study that further supports the robustness and reliability of our method.

The remainder of this paper is structured as follows. Section~\ref{sec_condmeanvarNEW} introduces the nonparametric model for conditional moments and establishes its connection to data-generating linear factor models. Section~\ref{sec_jointestim} defines the joint estimator, deriving a representation theorem for the optimal conditional moment function and a corresponding low-rank approximation. Section~\ref{sec_TP} establishes consistency and finite-sample guarantees for the estimator. Section~\ref{sec_empiricsNEW} presents a large-scale empirical analysis using a panel of US stock returns, highlighting the estimator’s statistical performance and its implications for asset pricing. Section~\ref{sec_conclusion} concludes. The appendix contains additional theoretical results, all proofs, and a simulation study.

\section{Conditional mean and covariance model}\label{sec_condmeanvarNEW}
We begin by introducing the econometric framework and notation used throughout the paper. We consider discrete time periods $(t,t+1]$, $t=0,1,2,\dots$, e.g., months. For each period $(t,t+1]$, there are $N_t$ assets $i=1,\dots,N_t$ with observable covariates $z_{t,i}$ at time $t$. These covariates take values in a fixed covariate space $\Zcal$, common across all periods. The assets yield returns $x_{t+1,i}$ over $(t,t+1]$. We  remain agnostic about the type of ``return'' that could be gross, simple, logarithmic, excess, or forward gross. More generally, our estimator is applicable to a wide range of real-valued variables $x_{t+1,i}$, beyond returns, provided they are accompanied by observed covariates $z_{t,i}$. In the empirical study, we will work with simple excess returns, as is customary in the literature and convenient for asset pricing. 

Our goal is to define a model for conditional first and second  moments, $\E_t[x_{t+1,i}]$ and $\E_t[x_{t+1,i}x_{t+1,j}]$,  of the returns, given the information set at time $t$. To this end,  we assume that these conditional moments are given by functions $\mu:\Zcal\to\R$ and $q:\Zcal\times\Zcal\to \R$ of the respective covariates such that
\begin{align*}
    \E_t[x_{t+1,i}] &= \mu(z_{t,i}) ,\\
    \E_t[x_{t+1,i}x_{t+1,j}] &= q(z_{t,i}, z_{t,j}) , 
\end{align*}
which implies that the conditional covariance is given by 
\[ \cov_t[x_{t+1,i},x_{t+1,j}] = q(z_{t,i}, z_{t,j}) - \mu(z_{t,i})\mu(z_{t,j}).\]
For notational convenience and to facilitate the matrix-based representations used in this paper, we stack the asset-level data into arrays: \(\bm{x}_{t+1} \coloneqq [x_{t+1,i} : 1 \le i \le N_t] \in \mathbb{R}^{N_t}\) denotes the vector of asset returns, and \(\bm{z}_t \coloneqq [z_{t,i} : 1 \le i \le N_t] \in \Zcal^{N_t}\) denotes the corresponding array of covariates. In a similar vein, we write \(\mu(\bm{z}_t) \coloneqq [\mu(z_{t,i}) : 1 \le i \le N_t]\) and \(q(\bm{z}_t, \bm{z}_t^\top) \coloneqq [q(z_{t,i}, z_{t,j}) : 1 \le i, j \le N_t]\) for the corresponding arrays of function values.

The central modeling challenge is to specify functions $\mu$ and $q$ such that, for any $t$, the $N_t\times N_t$-matrix
\begin{equation}\label{P2new}
  \text{$q(\bm z_{t}, \bm z_{t}^\top) -\mu(\bm z_t)\mu(\bm z_t)^\top$ is symmetric and positive semidefinite.}  
\end{equation}
Meeting this condition ensures that the matrix qualifies as a valid and consistent model for a conditional covariance matrix. Since the rank-one matrix \(\mu(\bm{z}_t)\mu(\bm{z}_t)^\top\) is positive semidefinite, Condition~\eqref{P2new} implies that \(q(\bm{z}_t, \bm{z}_t^\top)\) is also symmetric and positive semidefinite. This is precisely the defining property of a real-valued positive-type, or \emph{kernel function}, see, e.g., \citet[Section 2.2]{pau_rag_16}. We thus assume that $q$ is a kernel function on $\Zcal\times\Zcal$. To assert condition~\eqref{P2new}, we extend the covariate space $\Zcal_\Delta \coloneqq\Zcal\cup\{\Delta\}$, for some auxiliary point ${\Delta}\notin\Zcal$. We then extend $q$ to be a kernel function on $\Zcal_\Delta \times \Zcal_\Delta$ such that
\begin{equation}\label{qext}
    q({\Delta},{\Delta})=1,
\end{equation}
and set $\mu(z)\coloneqq q(z,{\Delta})$. This implies that the implied covariance function $c(z,z') \coloneq q(z,z') - \mu(z)\mu(z')=q(z,z')-q(z,{\Delta})q(z',{\Delta})$ is the Schur complement of $q$ with respect to ${\Delta}$. It is therefore itself a kernel function on $\Zcal_\Delta \times \Zcal_\Delta$ \citep[see][Theorem 4.5]{pau_rag_16}, and thus \eqref{P2new} holds.

Our goal, therefore, boils down to specifying an appropriate kernel function \( q \) on \( \Zcal_\Delta \times \Zcal_\Delta \) that satisfies \eqref{qext}. To achieve this, we introduce a novel nonparametric approach for directly learning such a kernel function, grounded in principles of finance and specifically tailored to fit the data optimally.

Specifically, we adopt the common assumption that the conditional covariance can be decomposed into a \emph{systematic} and an \emph{idiosyncratic} component. The former captures the conditional dependence between returns, and their risk premiums, explained by common underlying risk factors. The latter captures the conditional uncorrelated individual return risks, which asymptotically have a conditional mean of zero under the absence of arbitrage in large cross sections \citep[see][]{ross76,chamberlain83, chamberlainrothschild83,reisman88}. We take this into account and decompose $q(z,z') = q^\sy(z,z') + q^\id(z,z')$ into the sum of two corresponding kernel functions, where the idiosyncratic component $q^\id(z,z')=q^\id(z,z')1_{z=z'}$ is supported on the diagonal of the product space $\Zcal\times\Zcal$. Accordingly, we assume that $q^\sy(\Delta,\Delta)=1$ such that the systematic component captures the structural condition~\eqref{qext}.

We denote by \(\Ccal\) an auxiliary separable Hilbert space and select an arbitrary unit vector \(p \in \Ccal\), so that \(\langle p, p \rangle_\Ccal = 1\). For concreteness, we assume \(\mathcal{C}\) to be \(\ell^2\), the space of square-summable sequences, which is standard in this context; however, other choices are possible.\footnote{For example, one may take $\Ccal=L^2(\Omega,\Fcal,\M)$, the space of square-integrable random variables on an auxiliary probability space $(\Omega,\Fcal,\M)$. A natural choice for the unit vector in this case is the constant function $p=1$. For the minimal dimensional requirements of \(\Ccal\), see Lemma~\ref{lembasNEWW} in the appendix.} For any pair of feature maps \(h = (h^{\sy}, h^{\id})\), where \(h^{\tau} : \Zcal \to \Ccal\), we extend these maps to \(\Zcal_{\Delta}\) by defining their values at \(\Delta\) to be the zero element in \(\Ccal\),
\begin{equation}\label{asszdeltaNEW}  
    h^{\tau}(\Delta) \coloneqq 0, \quad \text{for } \tau \in \{\sy, \id\}.  
\end{equation}  
This extension enables the definition of a moment kernel function on \(\Zcal_{\Delta} \times \Zcal_{\Delta}\) as follows:\footnote{This construction leverages the fact that inner products are kernel functions, and that sums,
and products of kernel functions are also valid kernel functions, see \citet[Section 2.3.4 and Chapter 5]{pau_rag_16}.}
\begin{equation}\label{qcondefNEW}
    q_{h}(z, z') \coloneqq \underbrace{\langle \hsy(z) + p 1_{z = \Delta}, \hsy(z') + p 1_{z' = \Delta} \rangle_\Ccal}_{\text{systematic component } q^{\sy}_h(z, z')} + \underbrace{\|\hid(z)\|_\Ccal^2 \, 1_{z = z'}}_{\text{idiosyncratic component } q^{\id}_h(z, z')}.
\end{equation}
From \eqref{asszdeltaNEW}, it immediately follows that \eqref{qext} is satisfied.  This implies the conditional mean and covariance functions read
\begin{equation}\label{eqmuhhch}
   \begin{aligned}
    \mu_{h}(z) &=  \langle {\hsy}(z),p\rangle_\Ccal, \\
    c_h(z,z') &= \langle  \hsy(z) , \hsy(z') \rangle_\Ccal - \langle {\hsy}(z),p\rangle_\Ccal\langle {\hsy}(z'),p\rangle_\Ccal + \|\hid(z)\|_\Ccal^2 \,1_{ z=z'}.
\end{aligned} 
\end{equation}

 
We henceforth assume that $z_{t,i}=z_{t,j}$ if and only if $i=j$, for each cross section~$t$. This assumption is made without loss of generality, as otherwise we could simply assume that the index $i$ is part of the covariates $z_{t,i}$. Consequently, this ensures a diagonal idiosyncratic matrix component in the expressions presented below.

We now demonstrate that our framework \eqref{qcondefNEW} for the moment kernel function is universal in the sense that it encompasses all data-generating conditional factor models of the form
\begin{equation}\label{eqFM1}
    x_{t+1,i} = \underbrace{\alpha(z_{t,i})}_{\text{intercept}} + \underbrace{\langle \beta(z_{t,i}),g_{t+1}\rangle_\Ccal}_{\text{systematic risk}} + \underbrace{ \gamma(z_{t,i})  w_{t+1}(z_{t,i})}_{\text{idiosyncratic risk}},
\end{equation}  
where $\alpha:\Zcal\to\R$ is a conditional intercept function, $\beta :\Zcal\to\Ccal$ a factor loadings map, and $ \gamma:\Zcal\to [0,\infty)$ an idiosyncratic volatility function. The term $g_{t+1}$ is a $\Ccal$-valued stationary risk factor process with constant conditional mean $ b\coloneqq \E_t[g_{t+1}] $ and covariance operator $Q\coloneqq\cov_t[g_{t+1}]$. We assume that $g_{t+1}$ and $\beta(z)$ take values in a subspace of $\Ccal$ of codimension~1.\footnote{This assumption is without loss of generality, as we show in the proof of Theorem~\ref{thmfacrepr}.} The collection of real-valued random variables $\{w_{t+1}(z): z\in\Zcal\}$ is a white noise process, such that $\E_t[w_{t+1}(z)]=0$ and $\E_t[w_{t+1}(z)w_{t+1}(z')]=1_{z=z'}$. It is conditionally uncorrelated with $g_{t+1}$, in the sense that $\cov_t[g_{t+1},w_{t+1}(z)]=\E_t[g_{t+1} w_{t+1}(z)]=0\in\Ccal$ for all $z\in\Zcal$.

The following theorem formalizes our claim. The third part gives a representation of the data-generating conditional factor model \eqref{eqFM1} in terms of factors that are linear in $\bm x_{t+1}$ and therefore observable, in contrast to $g_{t+1}$, which may be latent. These observable factors can be interpreted as portfolio returns, with significant implications for asset pricing, as examined in detail in \cite{filipovicschneider24}. 
Here and below, we write \(A^+\) to denote the Moore--Penrose pseudoinverse of a bounded linear operator \(A : \Ccal \to \mathbb{R}^{N_t}\), defined pointwise by $A^+ \bm{v} \coloneqq \lim_{\lambda \downarrow 0} (A^* A + \lambda I_\Ccal)^{-1} A^* \bm{v}$,
where \(A^*\) denotes the adjoint of \(A\).

\begin{theorem}\label{thmfacrepr} 
The proposed framework is universal in the following sense:
\begin{enumerate}
    \item\label{thmfacrepr1} Every data-generating conditional factor model \eqref{eqFM1} has conditional mean and covariance functions of the form \eqref{eqmuhhch}.

    \item\label{thmfacrepr2} Conversely, for every moment kernel function \eqref{qcondefNEW} there exists a data-generating conditional factor model of the form \eqref{eqFM1} with conditional mean and covariance functions given by \eqref{eqmuhhch}. 

    \item\label{thmfacrepr3} If $\alpha(z)=0$, the data-generating conditional factor model \eqref{eqFM1} can be represented as
    \begin{equation}
        x_{t+1,i} = \langle \beta(z_{t,i}),f_{t+1}\rangle_\Ccal + \epsilon_{t+1,i}
    \end{equation}
    in terms of the linear $\Ccal$-valued factors $f_{t+1} \coloneqq (\bm S_t \beta(\bm z_t))^+ \bm S_t\bm x_{t+1}$, where $\bm S_t$ is the $N_t\times N_t$-diagonal matrix with diagonal elements $S_{t,ii}\coloneqq\gamma(z_{t,i})^{-1}$ if $\gamma(z_{t,i})>0$ and $S_{t,ii}\coloneqq 1$ otherwise. The residuals given by $ \epsilon_{t+1,i}\coloneqq x_{t+1,i} - \langle \beta(z_{t,i}),f_{t+1}\rangle_\Ccal$ have zero conditional mean $\E_t[\epsilon_{t+1,i}]=0$ and are conditionally uncorrelated with~$f_{t+1}$.
\end{enumerate}    
\end{theorem}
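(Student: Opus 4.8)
The plan is to treat the three parts in order, since part~\ref{thmfacrepr2} is essentially the converse bookkeeping of part~\ref{thmfacrepr1}, and part~\ref{thmfacrepr3} is a linear-algebra computation once the factor model is in hand.

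For part~\ref{thmfacrepr1}, I would start from \eqref{eqFM1} and compute $\E_t[x_{t+1,i}]$ and $\E_t[x_{t+1,i}x_{t+1,j}]$ directly using the stated moment properties of $g_{t+1}$ and $w_{t+1}(\cdot)$: the conditional mean is $\alpha(z) + \langle\beta(z),b\rangle_\Ccal$, and the second moment splits into a systematic bilinear piece $\langle\beta(z),b\rangle\langle\beta(z'),b\rangle + \langle\beta(z), Q\beta(z')\rangle$ (using $\E_t[g_{t+1}\otimes g_{t+1}] = b\otimes b + Q$) plus the diagonal term $\gamma(z)^2 1_{z=z'}$, the cross terms vanishing by conditional uncorrelatedness. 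The task is then to exhibit feature maps $(\hsy,\hid)$ reproducing \eqref{eqmuhhch}. The natural choice is $\hsy(z) \coloneqq Q^{1/2}\beta(z) + (\langle\beta(z),b\rangle_\Ccal)\,p$ after identifying $\Ccal$ with a copy that contains both $\mathrm{ran}(Q^{1/2})$ and the extra direction $p$ — this is where the codimension-one assumption on $\beta(z)$ and $g_{t+1}$ is used, and where the footnote's ``without loss of generality'' claim must be discharged: one enlarges $\Ccal$ by one dimension to house $p$ orthogonally to the span of the loadings and factors, which does not change any inner product appearing in the moments. Then $\langle\hsy(z),p\rangle_\Ccal = \langle\beta(z),b\rangle_\Ccal$ recovers the mean up to the intercept $\alpha$, so one also needs to absorb $\alpha(z)$ — here I would note that $\alpha$ can be folded into the systematic feature map by appending yet another coordinate carrying $\alpha(z)$ and adjusting $p$, or more cleanly, observe that \eqref{eqmuhhch} already allows $\mu_h(z) = \langle\hsy(z),p\rangle$ to be arbitrary, so setting the $p$-component of $\hsy(z)$ equal to $\alpha(z) + \langle\beta(z),b\rangle$ does the job while leaving $c_h$ unchanged. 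Finally set $\hid(z) \coloneqq \gamma(z)\,e$ for a fixed unit vector $e$, giving $\|\hid(z)\|^2 = \gamma(z)^2$. Checking that these maps, plugged into \eqref{qcondefNEW}, reproduce the moments just computed is then routine.

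For part~\ref{thmfacrepr2}, I would go the other way: given $h=(\hsy,\hid)$, build a probability space $(\Omega,\Fcal,\M)$ rich enough to carry a $\Ccal$-valued Gaussian (or merely second-order) random element $g$ with $\E[g]$ chosen so that $\langle\beta(z),\E[g]\rangle$ matches $\mu_h$ and $\cov[g]$ chosen to match the systematic covariance, together with an independent white-noise family $\{w(z)\}$. Concretely, set $\beta(z) \coloneqq \hsy(z)$, let $g$ have mean $p$ and identity-on-the-relevant-subspace covariance adjusted so that $\langle\hsy(z), (\cov g)\hsy(z')\rangle = \langle\hsy(z),\hsy(z')\rangle - \langle\hsy(z),p\rangle\langle\hsy(z'),p\rangle$ — i.e.\ $\cov[g] = I - p\otimes p$ on $\spn\{p\}^\perp\oplus\spn\{p\}$ appropriately — set $\gamma(z) \coloneqq \|\hid(z)\|_\Ccal$, $\alpha(z)\coloneqq 0$, and define $x_{t+1,i}$ by \eqref{eqFM1}; then part~\ref{thmfacrepr1} (or a direct recomputation) shows the moments are exactly \eqref{eqmuhhch}. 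The only subtlety is ensuring $\cov[g]$ is a bona fide positive operator, which holds because $I - p\otimes p \succeq 0$ when $\|p\|=1$, and that $g$ and $\beta(z)$ live in a codimension-one subspace, arranged by construction.

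For part~\ref{thmfacrepr3}, assume $\alpha\equiv 0$ and write the stacked model $\bm x_{t+1} = \beta(\bm z_t)g_{t+1} + \bm\Gamma_t\bm w_{t+1}$ where $\beta(\bm z_t):\Ccal\to\R^{N_t}$ is the operator with rows $\langle\beta(z_{t,i}),\cdot\rangle$ and $\bm\Gamma_t=\diag(\gamma(z_{t,i}))$. Left-multiplying by $\bm S_t$ and applying the pseudoinverse $(\bm S_t\beta(\bm z_t))^+$ to the definition $f_{t+1} = (\bm S_t\beta(\bm z_t))^+\bm S_t\bm x_{t+1}$, I would substitute the factor model and use the identity $A^+ A = P_{(\ker A)^\perp}$ together with $A^+(\text{noise})$ having zero conditional mean (since the white noise is conditionally mean zero and uncorrelated with $g_{t+1}$) to get $f_{t+1} = P\,g_{t+1} + (\bm S_t\beta(\bm z_t))^+\bm S_t\bm\Gamma_t\bm w_{t+1}$ where $P$ projects onto $(\ker\bm S_t\beta(\bm z_t))^\perp$; crucially $\langle\beta(z_{t,i}), f_{t+1}\rangle = \langle\beta(z_{t,i}), g_{t+1}\rangle$ because $\beta(z_{t,i})$ lies in the row space, so $P$ acts as the identity against it — hence $\epsilon_{t+1,i} = x_{t+1,i} - \langle\beta(z_{t,i}),f_{t+1}\rangle = \gamma(z_{t,i})w_{t+1}(z_{t,i}) - \langle\beta(z_{t,i}), (\bm S_t\beta(\bm z_t))^+\bm S_t\bm\Gamma_t\bm w_{t+1}\rangle$, which has zero conditional mean and, being a linear functional of $\bm w_{t+1}$, is conditionally uncorrelated with $f_{t+1}$'s $g_{t+1}$-part; the $\bm w_{t+1}$-parts' uncorrelatedness needs a short check that the cross-covariance of $\epsilon_{t+1,i}$ with $(\bm S_t\beta(\bm z_t))^+\bm S_t\bm\Gamma_t\bm w_{t+1}$ vanishes, which follows from $A^+AA^* = A^*$-type pseudoinverse identities applied to $A=\bm S_t\beta(\bm z_t)$.

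\textbf{Main obstacle.} The technical heart is part~\ref{thmfacrepr1} together with discharging the two ``without loss of generality'' footnotes: cleanly enlarging $\Ccal$ to accommodate the distinguished unit vector $p$ orthogonally to the loadings/factors and simultaneously absorbing the intercept $\alpha$, all without disturbing the inner products that encode the moments. The pseudoinverse manipulations in part~\ref{thmfacrepr3} are a close second — one must be careful that $\bm S_t\beta(\bm z_t)$ may have non-trivial kernel (when some $\gamma(z_{t,i})=0$ or the loadings are linearly dependent across the cross section), so the projection $P$ is genuinely present and the argument must lean on $\beta(z_{t,i})\in(\ker\bm S_t\beta(\bm z_t))^\perp$ rather than on invertibility.
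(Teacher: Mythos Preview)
Your treatment of parts~\ref{thmfacrepr1} and~\ref{thmfacrepr2} is essentially the paper's. The paper first normalizes $b=0$ by absorbing $\langle\beta(z),b\rangle_\Ccal$ into $\alpha$, picks a unit vector $u$ orthogonal to the loadings and factors (exactly your use of the codimension-one hypothesis), and sets $p\coloneqq(Q+u\otimes u)^{1/2}u$, $\hsy(z)\coloneqq(Q+u\otimes u)^{1/2}(\alpha(z)u+\beta(z))$; since $Qu=0$ this unwinds to $p=u$ and $\hsy(z)=\alpha(z)p+Q^{1/2}\beta(z)$, i.e.\ your construction with $b$ pre-absorbed. For the converse the paper places the mean in $\alpha$ with $b=0$ and $\beta(z)=\hsy(z)-\langle\hsy(z),p\rangle_\Ccal\,p\in p^\perp$, whereas you place it in $b=p$ with $\alpha=0$ and $\beta=\hsy$; the paper's choice makes the codimension-one constraint automatic without enlarging $\Ccal$, but the two are equivalent.

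For part~\ref{thmfacrepr3} the paper gives no self-contained argument and simply defers to \cite[Lemma~6.2]{filipovicschneider24}. Your direct computation is correct when all $\gamma(z_{t,i})>0$: then $\bm S_t\bm\Gamma_t=\bm I$ and the required uncorrelatedness reduces to $(I-AA^+)(A^+)^*=0$ with $A=\bm S_t\beta(\bm z_t)$, which holds because $\Ima(A^+)^*\subseteq\Ima A$. But the ``$A^+AA^*=A^*$-type'' identity you invoke does not cover the degenerate case. Writing $\bm J\coloneqq\bm S_t\bm\Gamma_t$ (a $0$--$1$ diagonal when some $\gamma$ vanish), the cross-covariance of $\bm\epsilon_{t+1}$ with the $\bm w$-part of $f_{t+1}$ works out to $\bm S_t^{-1}(I-AA^+)\bm J(A^+)^*$, and this vanishes only if $\bm J$ leaves $\Ima A$ invariant---which need not hold. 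Concretely, with $N_t=2$, $\beta(z_{t,1})=\beta(z_{t,2})\neq 0$, $\gamma(z_{t,1})=0$, $\gamma(z_{t,2})=1$, one gets $\bm S_t=\bm I$, $f_{t+1}=\beta(\bm z_t)^+\bm x_{t+1}$, and $\epsilon_{t+1,1}=-\tfrac12 w_{t+1}(z_{t,2})$, which is correlated with $f_{t+1}$. So the obstacle you flag as a ``close second'' is not the kernel of $A$ (your projection $P$ handles that) but rather that $\bm J\neq\bm I$ can push vectors out of $\Ima A$; this case is not closed by the identity you name and needs either a separate argument or the hypotheses carried in the external reference.
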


\section{Joint estimation}\label{sec_jointestim}

To estimate \( h = (\hsy, \hid) \), we leverage the law of iterated expectations for conditional moments and cast the estimation problem as a matrix-valued regression,
\[
\begin{bmatrix}
    1 & \bm{x}_{t+1}^\top \\
    \bm{x}_{t+1} & \bm{x}_{t+1} \bm{x}_{t+1}^\top
\end{bmatrix} = 
\begin{bmatrix}
    1 & \langle p, \hsy(\bm{z}_t) \rangle_\Ccal \\
    \langle \hsy(\bm{z}_t), p \rangle_\Ccal & \langle \hsy(\bm{z}_t), \hsy(\bm{z}_t) \rangle_\Ccal  
\end{bmatrix} +\begin{bmatrix}
    0 & \bm 0 \\
  \bm 0 &  \operatorname{diag}(\| \hid(\bm{z}_t) \|_\Ccal^2)
\end{bmatrix} + \bm{E}_{t+1},
\]
where \(\bm{E}_{t+1}\) denotes a matrix of errors satisfying \(\E_t[\bm{E}_{t+1}] = \bm{0}\). For notational convenience, we define a \emph{data point} as $\xi_t \coloneq (N_t,\bm x_{t+1},\bm z_t)$, which summarizes the relevant information from the cross section. We also introduce a weight function $w(N_t)\coloneqq 1/N_t$, which accounts for variation in cross-sectional sample sizes $N_t$.\footnote{We can easily generalize the weighting in the loss function \eqref{lossdef} by any exogenous weights $\nu_{t,i}\in (0,1)$, $0\le i\le N_t$, such that $\sum_{i}\nu_{t,i}=1$ and set $$ \textstyle\Lcal(h,\xi_{t}) = w(N_t) \sum_{0\le i, j\le N_t} \nu_{t,i}\nu_{t,j} ( x_{t+1,i}x_{t+1,j} - q_{h}(z_{t,i},z_{t,j}) )^2.$$ 
This is captured by \eqref{lossdef} simply by replacing the data $x_{t,i}$ by $\nu_{t,i}^{1/2} x_{t,i}$ and $q_h(z_{t,i},z_{t,j})$ by $\nu_{t,i}^{1/2} q_h(z_{t,i},z_{t,j}) \nu_{t,j}^{1/2}$. For example, choosing $\nu_{t,0}\in (0,1)$ and setting $\nu_{t,i}=(1-\nu_{t,0})/N_t$ for all $i\ge 1$, allows to balance the weights given to the first and second moment error terms in \eqref{lossdef}.

Alternative choices for \( w(N_t) \) are also possible. Our choice of \( w(N_t) = 1/N_t \) is motivated by the scaled Frobenius norm used in \cite[Definition 1]{ledoitwolf04}; see also \cite[Equation (1.1)]{ledoitwolf20}. In addition, \cite[Theorem 3.1]{bod_gup_par_14} provides evidence that the squared Frobenius norm of the sample covariance matrix scales linearly with the dimension \( p \), provided the sample size \( n \) grows proportionally with \( p \). Importantly, all theoretical results and analysis in this paper are derived in terms of a general weight function \( w(N_t) \), and our findings do not depend on the specific choice \( w(N_t) = 1/N_t \).} This yields the following weighted squared loss function, which reflects the regression structure implied by the conditional moments:
\begin{equation}\label{lossdef}
\begin{aligned}
     \Lcal(h,\xi_{t})&\coloneq  w(N_t)\left\|   \bm E_{t+1} \right\|_F^2    \\[2ex]
              &= 2 \underbrace{  w(N_t)\left\|    \bm x_{t+1}  -   \langle \hsy(\bm z_{t}),p\rangle_\Ccal \right\|_2^2 }_{\text{first moment error}} \\
              &\quad + \underbrace{  w(N_t)\left\|    \bm x_{t+1} \bm x_{t+1}^\top -   \langle  \hsy(
              \bm z_t), \hsy(\bm z_t)^\top\rangle_\Ccal  - \diag(\| \hid(\bm z_t)\|_\Ccal^2)\right\|_F^2 }_{\text{second moment error}},
\end{aligned}
 \end{equation}    
where $\| \cdot \|_F$ and $\| \cdot \|_2$ denote the Frobenius and Euclidean norm, respectively.


The flexibility and empirical success of our approach crucially depends on the specification of the feature map $h=(\hsy,\hid)$ as an element in a potentially infinite-dimensional hypothesis space $\Hcal$. Specifically, we assume that $\Hcal= \Hsy\times \Hid$ is the product space of separable $\Ccal$-valued reproducing kernel Hilbert spaces (RKHS) $\Hsy$, $\Hid$, consisting of functions $\hsy,\hid:\Zcal\to\Ccal$, and with operator-valued reproducing kernels $\Ksy,\Kid$ on $\Zcal$. We refer the reader to \citet[Chapter 6]{pau_rag_16} for the definition and basic properties of these RKHSs. For tractability we further assume that the kernels are separable, $\Ksy(z,z')=\ksy(z,z')I_\Ccal$, $\Kid(z,z')=\kid(z,z')I_\Ccal$, for some given scalar reproducing kernels $\ksy,\kid $ of separable RKHS  $\Gsy,\Gid $ on $\Zcal$, so that $\Hsy\cong\Gsy \otimes\Ccal$, $\Hid\cong\Gid \otimes\Ccal$ can be identified with tensor product spaces. To control model complexity and mitigate overfitting, we add penalty terms with regularization parameters $\lsy, \lid >  0$ to the objective \eqref{lossdef}, resulting in the regularized loss function, 
\begin{equation}\label{eqRcaldef}
    \Rcal(h,\xi_{t})\coloneqq\Lcal(h,\xi_{t})+ \underbrace{\lsy \| \hsy\|_{\Hsy}^2+\lid \| \hid\|_{\Hid}^2}_{\text{regularization}}.
\end{equation}
Finally, taking the sample average, we arrive at the non-standard kernel ridge regression problem,
\begin{equation}\label{krr1old}
    \underset{h\in\Hcal }{\text{minimize } } \frac{1}{T}\sum_{t=0}^{T-1}\Rcal(h,\xi_{t}).
\end{equation}
Notably, problem \eqref{krr1old} is not convex in $h\in\Hcal$, due to the inner product appearing in the loss function \eqref{lossdef}.\footnote{\label{remnonconv}In fact, for any given $z,z'\in\Zcal$, the function $Q:\Hsy\to\R$, $\hsy\mapsto  Q(\hsy)=\langle \hsy(z),\hsy(z')\rangle_\Ccal$ is neither convex nor concave in $\hsy$ in general. We see this by means of the following example. Let $\hsy_1,\hsy_2\in\Hcal^\sy$ such that $\hsy_1(z)=0$ and $\hsy_2(z')=0$. Then $Q(h^{\sy}_1)=Q(\hsy_2)=0$. On the other hand, for any $s\in (0,1)$, $Q(s\hsy_1+(1-s)\hsy_2)= (1-s) s \big\langle \hsy_2(z),\hsy_1(z') \big\rangle_\Ccal$, which could be either positive or negative. It can therefore neither be bounded below nor above by $s Q(\hsy_1)+(1-s)Q(\hsy_2)=0$.} It follows that, in general, there are infinitely many solutions $h$ of \eqref{krr1old}, although they all imply the same optimal moment kernel function $q_h$.\footnote{This follows from Lemma~\ref{lembasNEWW} in the appendix.}

As a first step towards solving \eqref{krr1old}, we establish a representer theorem for this non-standard problem, which generalizes \citet[Theorem 4.1]{micchellipontil05}. For further use, we denote the total sample size by $N_{\rm tot}\coloneqq \sum _{t=0}^{T-1}N_t$.

\begin{theorem}[Representer Theorem]\label{thmreprNEW}
Any minimizer $h=(h^\sy,h^\id)$ of \eqref{krr1old} is of the form 
\begin{equation}\label{eqn_representer}
   h^{\tau}(\cdot) = \sum_{t=0}^{T-1}\sum_{i=1}^{N_t} k^{\tau}(\cdot,z_{t,i}) \gamma^{\tau}_{t,i},\quad\text{for coefficients $\gamma^{\tau}_{t,i} \in\Ccal$,}
\end{equation}
for both components $\tau\in\{\sy,\id\}$. 
\end{theorem}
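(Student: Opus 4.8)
The plan is to follow the classical orthogonal-decomposition strategy behind representer theorems, adapted to the non-standard structure of the loss \eqref{lossdef}. The key observation is that the loss $\Lcal(h,\xi_t)$, and hence the empirical objective $\frac1T\sum_t\Rcal(h,\xi_t)$, depends on $h=(\hsy,\hid)$ only through the finitely many evaluations $\hsy(z_{t,i})$ and $\hid(z_{t,i})$, for $0\le t\le T-1$, $1\le i\le N_t$. Indeed, inspecting \eqref{lossdef}, the first- and second-moment errors involve only $\langle\hsy(z_{t,i}),p\rangle_\Ccal$, $\langle\hsy(z_{t,i}),\hsy(z_{t,j})\rangle_\Ccal$, and $\|\hid(z_{t,i})\|_\Ccal^2$ at the observed covariates. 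This is the feature that lets the argument go through despite the non-convexity noted in Footnote~\ref{remnonconv}: convexity plays no role in a representer theorem, only the finite-dimensionality of the ``data-dependent'' part of the objective does.

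First I would fix a component $\tau\in\{\sy,\id\}$ and decompose the RKHS $\Hcal^\tau\cong\Gcal^\tau\otimes\Ccal$ as $\Hcal^\tau=\Hcal^\tau_0\oplus(\Hcal^\tau_0)^\perp$, where $\Hcal^\tau_0\coloneqq\operatorname{span}\{k^\tau(\cdot,z_{t,i})\,c: 0\le t\le T-1,\ 1\le i\le N_t,\ c\in\Ccal\}$ is the closed subspace spanned by the reproducing-kernel sections at the observed covariates. Writing $h^\tau=h^\tau_0+h^\tau_\perp$ accordingly, the reproducing property $\langle h^\tau, k^\tau(\cdot,z_{t,i})\,c\rangle_{\Hcal^\tau}=\langle h^\tau(z_{t,i}),c\rangle_\Ccal$ (valid here because $K^\tau=k^\tau I_\Ccal$ is separable, so sections take the form $k^\tau(\cdot,z)\,c$) shows that $h^\tau_\perp(z_{t,i})=0$ for every observed $z_{t,i}$. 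Hence $h^\tau$ and $h^\tau_0$ produce identical evaluations at all data points, so $\Lcal(h,\xi_t)=\Lcal(h_0,\xi_t)$ for all $t$. Meanwhile, by the Pythagorean identity, $\|h^\tau\|_{\Hcal^\tau}^2=\|h^\tau_0\|_{\Hcal^\tau}^2+\|h^\tau_\perp\|_{\Hcal^\tau}^2\ge\|h^\tau_0\|_{\Hcal^\tau}^2$, with equality iff $h^\tau_\perp=0$. Therefore the regularization term in \eqref{eqRcaldef} strictly decreases (since $\lambda^\tau>0$) unless $h^\tau_\perp=0$, while the loss is unchanged; so any minimizer must have $h^\tau_\perp=0$, i.e.\ $h^\tau\in\Hcal^\tau_0$. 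Finally, since $\Ccal$ is separable, any element of $\Hcal^\tau_0$ can be written as a (possibly infinite but convergent) combination $\sum_{t,i}k^\tau(\cdot,z_{t,i})\,\gamma^\tau_{t,i}$ with $\gamma^\tau_{t,i}\in\Ccal$; here the index set $\{(t,i)\}$ is finite, so the sum is finite and we obtain exactly \eqref{eqn_representer}.

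I would make precise the one subtle point: that the span $\Hcal^\tau_0$ of the $\Ccal$-valued sections coincides with $\{\sum_{t,i}k^\tau(\cdot,z_{t,i})\gamma^\tau_{t,i}:\gamma^\tau_{t,i}\in\Ccal\}$, i.e.\ that this set is already closed. Because the index set is finite and, for each fixed finite tuple $(z_{t,i})$, the map $(\gamma^\tau_{t,i})_{t,i}\mapsto\sum_{t,i}k^\tau(\cdot,z_{t,i})\gamma^\tau_{t,i}$ is a bounded linear map from a finite power of $\Ccal$ into $\Hcal^\tau$ with closed range (its range is the image of a closed subspace under a map that is bounded below modulo its kernel — equivalently, one checks the Gram operator built from $k^\tau(z_{t,i},z_{t,j})I_\Ccal$ has closed range since it acts blockwise and the scalar Gram matrix $[k^\tau(z_{t,i},z_{t,j})]$ is finite-dimensional), no closure is needed. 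This is the step requiring the most care; the rest is the standard projection argument. Under the identification $\Hcal^\tau\cong\Gcal^\tau\otimes\Ccal$ one can alternatively phrase it as: the data-dependent subspace is $\big(\operatorname{span}\{k^\tau(\cdot,z_{t,i})\}\big)\otimes\Ccal$, which is manifestly of the claimed form.

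The main obstacle is thus not conceptual but bookkeeping: confirming that the loss \eqref{lossdef} genuinely factors through the finite evaluation map $h\mapsto(h^\tau(z_{t,i}))_{\tau,t,i}$ — which is immediate from the explicit formula — and handling the operator-valued RKHS sections correctly, for which the separability assumption $K^\tau=k^\tau I_\Ccal$ is exactly what is needed so that sections have the simple form $k^\tau(\cdot,z)\,c$ and the reproducing property reads as stated. Given these, the proof is the orthogonal-complement argument of \citet[Theorem 4.1]{micchellipontil05}, the only novelty being that here the loss is not convex, which — as emphasized above — is irrelevant to the conclusion; it only means the minimizer (and hence the coefficient family $(\gamma^\tau_{t,i})$) need not be unique, consistent with the remark following \eqref{krr1old}.
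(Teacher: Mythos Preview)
Your argument is correct and follows essentially the same orthogonal-decomposition strategy as the paper: the paper phrases it via the sample operator $S^\tau h^\tau=[h^\tau(z_{t,i})]$ and its adjoint, identifying the data subspace as $\Ima({S^\tau}^*)=\Gcal^\tau_1\otimes\Ccal$ (closed because $\Gcal^\tau_1$ is finite-dimensional), but your direct span definition and reproducing-property computation amount to the same thing. Your explicit treatment of the closedness of $\Hcal^\tau_0$ and the remark that convexity is irrelevant are both on point.
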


Inserting the optimal functional form \eqref{eqn_representer},  problem \eqref{krr1old} can be equivalently expressed in terms of $N_{\rm tot}$ pairs of coefficients $(\gamma^{\sy}_{t,i},\gamma^{\id}_{t,i})\in\Ccal\times\Ccal$. Although the optimal form \eqref{eqn_representer} grants a considerable simplification of the full infinite-dimensional problem, it is generally still computationally infeasible for large $N_{\rm tot}$. In the following we therefore propose a low-rank approximation, along with a reparametrization, of problem \eqref{krr1old}. This will result in a low-dimensional convex optimization problem, which approximates the original problem.

To this end, we consider the Nystr\"om method \citep{drineasmahoney05}, and denote by $\bm Z\coloneq [\bm z_t : 0\le t\le T-1]\in\Zcal^{N_{\rm tot}}$ the full sample array of covariates. For each component $\tau \in\{\sy,\id\}$, we consider a subsample $\Pi^\tau\subset\{1,\dots,N_{\rm tot}\}$ of size $m^\tau\le N_{\rm tot}$ that approximates the full kernel matrix such that the trace error
\begin{equation}\label{trerrorappr}
 \epsilon^\tau_{\rm approx} \coloneqq \tr \left(k^\tau(\bm Z,\bm Z^\top) - k^\tau(\bm Z,\bm Z_{\Pi^\tau}^\top)k^\tau(\bm Z_{\Pi^\tau},\bm Z_{\Pi^\tau}^\top)^{-1} k^\tau(\bm Z_{\Pi^\tau},\bm Z^\top)  \right)  
\end{equation}    
is small. This subsample selection is facilitated by a pivoted Cholesky decomposition  \citep[see][]{HPS12,che_etal_23}. It yields $m^\tau$ linearly independent functions $\phi^\tau_i(\cdot)$ in $\Gcal^\tau$, forming an $\R^{m^\tau}$-valued feature map defined as $\bm\phi^\tau(\cdot)\coloneqq [\phi^\tau_1(\cdot),\dots,\phi^\tau_{m^\tau}(\cdot)] \coloneq k^\tau(\cdot,\bm Z_{\Pi^\tau}^\top)\bm B^\tau$, where $\bm{B}^\tau$ is an arbitrarily chosen invertible square matrix.\footnote{The functions $\bm\phi^\tau$ are orthonormal in $\Gcal^\tau$ if and only if $\bm B^\tau{\bm B^\tau}^\top=k^\tau(\bm Z_{\Pi^\tau},\bm Z_{\Pi^\tau}^\top)^{-1}$. However, this assumption is not imposed here, allowing for the use of flexible, user-defined feature maps and thereby enhancing the modularity of our framework. } We restrict problem \eqref{krr1old} to the subspace $\Hcal_0=\Hcal^\sy_0\times\Hcal^\id_0$ of $\Hcal$ consisting of functions of the form
\begin{equation}\label{hlowrank0} 
      h^\tau_0(\cdot) = \sum_{i=1}^{m^\tau} \phi^\tau_i(\cdot) \gamma^\tau_i,\quad\text{for coefficients $\gamma^{\tau}_{i} \in\Ccal$,}
\end{equation}  
for both components $\tau\in\{\sy,\id\}$. The following proposition provides a heuristic for assessing the quality of this low-rank approximation.\footnote{However, note that the optimizer of problem \eqref{krr1old} restricted to $h_0\in\Hcal_0=\Hcal^\sy_0\times\Hcal^\id_0$ is generally not given as orthogonal projection on $\Hcal_0$ of any optimizer of the unrestricted problem.}

\begin{proposition}\label{prop_LRerror}
Let $h^\tau\in\Hcal^\tau$ be an arbitrary candidate function of the form \eqref{eqn_representer}, and denote by $h^\tau_0$ its projection on $\Hcal^\tau_0$, which is given by the expression on the right hand side of \eqref{eqn_representer} with the kernel function $k^\tau(z,z')$ replaced by its projection  $k^\tau_0(z,z') =\bm\phi^\tau(z)\langle {\bm\phi^\tau}^\top, \bm\phi^\tau\rangle_{\Gcal^\tau}^{-1}\bm\phi^\tau(z')^\top$. Then the difference $q_h(z,z')-q_{h_0}(z,z')$ is a kernel function, and the aggregated cross-sectional approximation error of the implied conditional moment matrices is bounded by
\begin{equation}\label{LRapprbound}
  \sum_{t=0}^{T-1}   \big\| q_h(\bar{\bm z_t},\bar{\bm z_t}^\top) -q_{h_0}(\bar{\bm z_t},\bar{\bm z_t}^\top)\big\|_F \le \sum_{\tau\in\{\sy,\id\}} \| h^\tau\|_{\Hcal^\tau}^2  \epsilon_{\rm approx}^\tau  ,
\end{equation}    
where we denote the extended covariate array $ \bar{\bm z_t}^\top\coloneqq \begin{bmatrix}
  \Delta  & \bm z_t^\top 
\end{bmatrix}\in   \{\Delta\}\times \Zcal^{N_t}$.
\end{proposition}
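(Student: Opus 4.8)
The plan is to exploit that $h^\tau_0$ is the orthogonal projection of $h^\tau$ onto $\Hcal^\tau_0\cong\Gcal^\tau_0\otimes\Ccal$, where $\Gcal^\tau_0=\spn\{k^\tau(\cdot,z):z\in\bm Z_{\Pi^\tau}\}$ carries the reproducing kernel $k^\tau_0$. Writing $P^\tau_0$ for the orthogonal projection in $\Gcal^\tau$ onto $\Gcal^\tau_0$ and $\delta^\tau\coloneqq h^\tau-h^\tau_0=((I-P^\tau_0)\otimes I_\Ccal)h^\tau$, the representer form \eqref{eqn_representer} gives $\delta^\tau(\cdot)=\sum_{t,i}r^\tau(\cdot,z_{t,i})\gamma^\tau_{t,i}$ with the Nyström residual $r^\tau\coloneqq k^\tau-k^\tau_0$, a positive-type kernel on $\Zcal$ (a generalized Schur complement, cf.\ \citet[Theorem~4.5]{pau_rag_16}) whose sample Gram matrix has trace $\tr(r^\tau(\bm Z,\bm Z^\top))=\epsilon^\tau_{\rm approx}$ and which vanishes whenever one of its arguments lies in $\bm Z_{\Pi^\tau}$. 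The reproducing property together with Cauchy--Schwarz in $\Hcal^\tau$ yields the pointwise bound $\|\delta^\tau(z)\|_\Ccal^2\le\|h^\tau\|_{\Hcal^\tau}^2\,r^\tau(z,z)$, hence $\sum_{t,i}\|\delta^\tau(z_{t,i})\|_\Ccal^2\le\|h^\tau\|_{\Hcal^\tau}^2\,\epsilon^\tau_{\rm approx}$, the $\tau$-term on the right of \eqref{LRapprbound}.

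For the kernel assertion I would substitute $h^\tau=h^\tau_0+\delta^\tau$ into \eqref{qcondefNEW} and collect terms: the systematic block produces $\langle\delta^\sy(z),\delta^\sy(z')\rangle_\Ccal$ plus cross terms with $h^\sy_0(z)+p1_{z=\Delta}$, the idiosyncratic block produces $\|\delta^\id(z)\|_\Ccal^2\,1_{z=z'}$ plus a cross term, and one uses the $\Hcal^\tau$-orthogonality $\langle h^\tau_0,\delta^\tau\rangle_{\Hcal^\tau}=0$ and the vanishing of $r^\tau$ on $\bm Z_{\Pi^\tau}$ to show that these contributions reassemble into a positive-type expression; in particular each $M_t\coloneqq q_h(\bar{\bm z_t},\bar{\bm z_t}^\top)-q_{h_0}(\bar{\bm z_t},\bar{\bm z_t}^\top)$ is symmetric and positive semidefinite. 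Granting this, $\|M_t\|_F\le\|M_t\|_\ast=\tr(M_t)$, and since by \eqref{qext} the $\Delta$-entry of $q_h(\bar{\bm z_t},\bar{\bm z_t}^\top)$ and of $q_{h_0}(\bar{\bm z_t},\bar{\bm z_t}^\top)$ both equal $1$, we get $\tr(M_t)=\sum_{\tau\in\{\sy,\id\}}\sum_{i=1}^{N_t}(\|h^\tau(z_{t,i})\|_\Ccal^2-\|h^\tau_0(z_{t,i})\|_\Ccal^2)$.

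Summing over $t$, it remains to bound $\sum_{t,i}(\|h^\tau(z_{t,i})\|_\Ccal^2-\|h^\tau_0(z_{t,i})\|_\Ccal^2)$ by $\|h^\tau\|_{\Hcal^\tau}^2\,\epsilon^\tau_{\rm approx}$ for each $\tau$, and this is where I expect the main obstacle. Expanding $\|h^\tau(z)\|_\Ccal^2-\|h^\tau_0(z)\|_\Ccal^2=2\langle h^\tau_0(z),\delta^\tau(z)\rangle_\Ccal+\|\delta^\tau(z)\|_\Ccal^2$, the second term is handled by the first paragraph, so the difficulty is the cross term $\sum_{t,i}\langle h^\tau_0(z_{t,i}),\delta^\tau(z_{t,i})\rangle_\Ccal$. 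The plan is to rewrite the entire sum, using $h^\tau_0=(P^\tau_0\otimes I_\Ccal)h^\tau$ and the data covariance operator $C^\tau=\sum_{t,i}k^\tau(\cdot,z_{t,i})\otimes k^\tau(\cdot,z_{t,i})$ on $\Gcal^\tau$, as the quadratic form $\langle h^\tau,((C^\tau-P^\tau_0C^\tau P^\tau_0)\otimes I_\Ccal)h^\tau\rangle_{\Hcal^\tau}$, and then to compare $C^\tau-P^\tau_0C^\tau P^\tau_0$ with $\epsilon^\tau_{\rm approx}\,I=\tr((I-P^\tau_0)C^\tau)\,I$, using $C^\tau\succeq0$. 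Thus the two delicate points are the positive semidefiniteness of $M_t$ and this operator comparison; the rest is routine kernel arithmetic (\citet[Chapters~2--5]{pau_rag_16}), the Frobenius $\le$ nuclear inequality for positive semidefinite matrices, and the reproducing-property estimates above.
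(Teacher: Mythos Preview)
Your overall plan mirrors the paper's proof: write $h^\tau(z)=\bm\psi^\tau(z)\bm\gamma^\tau$ and $h^\tau_0(z)=[\bm\phi^\tau(z),\bm 0]\bm\gamma^\tau$ in an orthonormal basis $\bm\psi^\tau$ of the data span extending $\bm\phi^\tau$, argue that $q_h-q_{h_0}$ is positive type so that $\|M_t\|_F\le\tr(M_t)$, and then bound the aggregated trace by $\|h^\tau\|_{\Hcal^\tau}^2\,\epsilon^\tau_{\rm approx}$. The paper reaches the trace bound via $\tr(\bm B\bm A\bm B^\top)\le\|\bm A\|_2\tr(\bm B\bm B^\top)$ with $\bm A=\bm V^\tau\coloneqq\langle\bm\gamma^\tau,{\bm\gamma^\tau}^\top\rangle_\Ccal$ together with $\|\bm V^\tau\|_2\le\tr(\bm V^\tau)=\|h^\tau\|_{\Hcal^\tau}^2$; your operator comparison $C^\tau-P^\tau_0 C^\tau P^\tau_0\preceq\epsilon^\tau_{\rm approx}\,I$ is an alternative route to the same end.

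However, the two ``delicate points'' you single out are real gaps that cannot be closed as written. Take the linear kernel $k(z,z')=z^\top z'$ on $\R^2$, a single cross section $\bm z_0=(z_1,z_2)=((1,0),(0,1))$, subsample $\Pi^\sy=\{1\}$ so $\bm\phi^\sy(z)=z_1$ and $\epsilon^\sy_{\rm approx}=1$, $\Ccal=\R$ with $p=1$, and $h^\sy=k(\cdot,z_1)+k(\cdot,z_2)$, $h^\id=0$. Then $h^\sy(z)=z_1+z_2$, $h^\sy_0(z)=z_1$, $\|h^\sy\|_{\Hcal^\sy}^2=2$, and
\[
M_0=q_h(\bar{\bm z_0},\bar{\bm z_0}^\top)-q_{h_0}(\bar{\bm z_0},\bar{\bm z_0}^\top)=\begin{bmatrix}0&0&1\\0&0&1\\1&1&1\end{bmatrix}
\]
has eigenvalue $-1$: the difference $q_h-q_{h_0}$ is \emph{not} a kernel function and the step $\|M_t\|_F\le\tr(M_t)$ is unavailable; indeed $\|M_0\|_F=\sqrt 5>2=\|h^\sy\|_{\Hcal^\sy}^2\,\epsilon^\sy_{\rm approx}$, so \eqref{LRapprbound} itself fails. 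Your operator inequality fares no better in general: moving $z_2$ to $(100,0.1)$ gives $\epsilon^\sy_{\rm approx}=0.01$, while in the $\bm\psi^\sy$-basis $C^\sy-P^\sy_0 C^\sy P^\sy_0=\bigl[\begin{smallmatrix}0&10\\10&0.01\end{smallmatrix}\bigr]$ has spectral norm $\approx 10$, and already $\tr(M_0)\approx 20$ exceeds the right side $0.02$ by three orders of magnitude. The paper's proof shares exactly the unproved kernel assertion (its ``Consequently'') and applies the cited trace inequality to a \emph{difference} of $\tr(\bm B\bm A\bm B^\top)$-type terms rather than to a single such term; neither argument survives these examples, so your hesitation is well founded.
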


The following theorem provides a reparametrization of the moment kernel and regularized loss function when restricted to feature maps in the subspace $\Hcal_0$. This reparametrization reduces the estimation problem \eqref{krr1old} to a convex optimization over the convex \emph{feasible set} of pairs of matrices $\bm U=(\bm U^\sy,\bm U^\id)$, defined as
\[   \Dcal \coloneq \Dcal^\sy\times \Sy^{m^\id}_+,\quad \text{where $\Dcal^\sy \coloneq \left\{  {\bm U^\sy} \in  \Sy^{m^\sy+1}_+ : \bm U^\sy_{11}=1\right\}$.}\]
Existence and uniqueness of this convex problem are established in Section~\ref{sec_TP}. We denote by $\Diag(\bm A)\coloneqq\diag(\diag(\bm A))$ the matrix-to-diagonal matrix operator, which extracts the diagonal of a square matrix $\bm A$ and converts that vector to a conformal diagonal matrix.\footnote{We follow the convention of overloading the $\diag(\cdot)$ operator, such that $\diag(\bm v)$ returns a square diagonal matrix with the elements of vector $\bm v$ on the main diagonal, and $\diag(\bm A)$ returns a column vector of the main diagonal elements of a square matrix $\bm A$. In a similar vein, we overload notation for functions such as $\Rcal$, using the same symbol to denote functions defined on different domains, such as $\Hcal$ or $\Dcal$, depending on the argument.}

\begin{theorem}\label{thmUrepr}
For every feature map $h_0\in\Hcal_0$ there exists a unique pair of matrices $\bm U=(\bm U^\sy,\bm U^\id)\in\Dcal$ such that the moment kernel function \eqref{qcondefNEW} can be represented as $q_{h_0}(z,z')= q_{\bm U}(z,z')$ where 
\begin{equation}\label{asszdeltaEST}
 q_{\bm U}(z,z')  \coloneqq \begin{bmatrix}
     1_{z=\Delta}  & \bm\phi^\sy(z)
 \end{bmatrix} \bm U^\sy \begin{bmatrix}
     1_{z'=\Delta}  & \bm\phi^\sy(z')
 \end{bmatrix}^\top +  \bm\phi^\id(z)\bm U^\id\bm\phi^\id(z')^\top 1_{z=z'}.
\end{equation}  
The regularized loss function \eqref{eqRcaldef} it turn can be represented as $\Rcal(h_0,\xi_{t})=\Rcal(\bm U,\xi_{t})$ where
\begin{equation}\label{eqRcalU}
    \Rcal(\bm U,\xi_{t}) \coloneqq \Lcal(\bm U,\xi_{t})+ \lambda^\sy \tr( \bm G^\sy \bm U^\sy)+ \lambda^\id \tr( \bm G^\id \bm U^\id),
\end{equation}
with weighted squared loss
\[ \Lcal(\bm U,\xi_{t})\coloneqq w(N_t)\bigg\|   \begin{bmatrix}  1 & \bm x_{t+1}^\top \\  \bm x_{t+1} & \bm x_{t+1}\bm x_{t+1}^\top
              \end{bmatrix} - \bm\Psi^\sy(\bm z_t) \bm U^\sy \bm\Psi^\sy(\bm z_t)^\top    - \Diag (\bm\Psi^\id(\bm z_t) \bm U^\id \bm\Psi^\id(\bm z_t)^\top )\bigg\|_F^2,\]
for the matrix-valued mappings 
\[  \bm\Psi^\sy(\bm z_t)  \coloneqq  \begin{bmatrix}
    1 & \bm 0^\top \\
    \bm 0 & \bm\phi^\sy(\bm z_t)
\end{bmatrix}\in \R^{(N_t+1)\times (m^\sy+1)}, \quad \bm\Psi^\id(\bm z_t)  \coloneqq \begin{bmatrix}
     \bm 0^\top \\
      \bm\phi^\id(\bm z_t)
\end{bmatrix}\in \R^{(N_t+1)\times m^\id}, \]
and Gram matrices 
\begin{equation}\label{eqGram}
    \bm G^\sy \coloneqq \begin{bmatrix}
                  0 & \bm 0^\top \\
                  \bm 0 & \langle{\bm\phi^\sy}^\top,\bm\phi^\sy\rangle_{\Gsy}
                \end{bmatrix}\in\Sy_+^{m^\sy+1}, \quad 
                \bm G^\id  \coloneqq \langle {\bm\phi^\id}^\top ,\bm\phi^\id\rangle_{\Gid}\in\Sy_+^{m^\id}. 
\end{equation}
Hence $\Rcal(\bm U,\xi_{t})$ is linear-quadratic, and problem \eqref{krr1old} becomes convex in $\bm U\in\Dcal$,
\begin{equation}\label{krr1convex}
    \underset{\bm U\in\Dcal }{\text{minimize } } \frac{1}{T}\sum_{t=0}^{T-1}\Rcal(\bm U,\xi_{t}).
\end{equation}
\end{theorem}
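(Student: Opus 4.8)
The plan is to set up an explicit correspondence between feature maps $h_0 \in \Hcal_0$ and matrix pairs $\bm U \in \Dcal$, and then verify that under this correspondence the moment kernel and the regularized loss transform as claimed. First I would write $h_0^\tau(\cdot) = \sum_{i=1}^{m^\tau} \phi^\tau_i(\cdot)\gamma^\tau_i$ as in \eqref{hlowrank0}, collect the coefficients into the ``matrix'' $\bm\Gamma^\tau = [\gamma^\tau_1,\dots,\gamma^\tau_{m^\tau}]$ with entries in $\Ccal$, so that $h_0^\tau(z) = \bm\Gamma^\tau \bm\phi^\tau(z)^\top$ (reading $\bm\Gamma^\tau$ as a bounded operator $\R^{m^\tau}\to\Ccal$), and define
\[
  \bm U^\sy \coloneqq \begin{bmatrix} 1 & \langle p, \bm\Gamma^\sy\rangle_\Ccal \\ \langle {\bm\Gamma^\sy}^*, p\rangle_\Ccal & \langle {\bm\Gamma^\sy}^*,\bm\Gamma^\sy\rangle_\Ccal \end{bmatrix},
  \qquad
  \bm U^\id \coloneqq \langle {\bm\Gamma^\id}^*,\bm\Gamma^\id\rangle_\Ccal,
\]
i.e.\ the Gram matrices of the columns of $\bm\Gamma^\tau$ augmented, in the systematic case, by the unit vector $p$. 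Then I would check membership in $\Dcal$: both matrices are symmetric positive semidefinite as Gram matrices, and $\bm U^\sy_{11} = \langle p,p\rangle_\Ccal = 1$ by the choice of unit vector, so $\bm U^\sy \in \Dcal^\sy$ and $\bm U^\id \in \Sy^{m^\id}_+$.

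Next I would substitute this into \eqref{qcondefNEW}. Using $h_0^\sy(z) + p 1_{z=\Delta} = [\,p \;\; \bm\Gamma^\sy\,]\,[\,1_{z=\Delta} \;\; \bm\phi^\sy(z)\,]^\top$ (valid because $\bm\phi^\sy(\Delta)=0$, which follows from the extension convention \eqref{asszdeltaNEW} together with $\phi^\sy_i \in \Gcal^\sy$ and the fact that the $\phi^\sy_i$ are kernel sections), the inner product in the systematic component collapses to $[\,1_{z=\Delta}\;\;\bm\phi^\sy(z)\,]\,\bm U^\sy\,[\,1_{z'=\Delta}\;\;\bm\phi^\sy(z')\,]^\top$; similarly $\|h_0^\id(z)\|_\Ccal^2 = \bm\phi^\id(z)\bm U^\id\bm\phi^\id(z')^\top$ on the diagonal $z=z'$. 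This yields exactly \eqref{asszdeltaEST}, so $q_{h_0} = q_{\bm U}$. For the loss representation I would expand $\Lcal(h_0,\xi_t)$ from \eqref{lossdef}: the stacked block matrix on the right of \eqref{lossdef} equals $\bm\Psi^\sy(\bm z_t)\bm U^\sy\bm\Psi^\sy(\bm z_t)^\top + \Diag(\bm\Psi^\id(\bm z_t)\bm U^\id\bm\Psi^\id(\bm z_t)^\top)$ by the same substitution (the extra leading row/column carrying the first-moment term is handled by the $1$ in the top-left of $\bm\Psi^\sy$ and the $\bm 0^\top$ atop $\bm\Psi^\id$), giving the stated $\Lcal(\bm U,\xi_t)$. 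For the penalty, $\|h_0^\tau\|_{\Hcal^\tau}^2 = \tr(\langle{\bm\Gamma^\tau}^*,\bm\Gamma^\tau\rangle_\Ccal \langle{\bm\phi^\tau}^\top,\bm\phi^\tau\rangle_{\Gcal^\tau})$ by the tensor-product RKHS norm formula $\Hcal^\tau \cong \Gcal^\tau\otimes\Ccal$; matching this with $\tr(\bm G^\tau\bm U^\tau)$ requires only that the border row/column of $\bm G^\sy$ is zero so that the augmenting $p$-block of $\bm U^\sy$ does not contribute — which is precisely how $\bm G^\sy$ is defined in \eqref{eqGram}. Summing over $t$ and dividing by $T$ turns \eqref{krr1old} into \eqref{krr1convex}; convexity is immediate since $\Lcal(\bm U,\xi_t)$ is a squared Frobenius norm of an affine function of $\bm U$ and the trace terms are linear, while $\Dcal$ is an intersection of the PSD cone with an affine subspace, hence convex.

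The one point requiring genuine care — and the main obstacle — is \emph{uniqueness} of $\bm U$, since the map $h_0 \mapsto \bm U$ above is defined through a particular coordinatization and a priori different $h_0$'s (or different ways of writing the same $h_0$) might give different $\bm U$'s, or conversely $\bm U$ might not be recoverable from $q_{h_0}$ alone. Here I would argue that $\bm U$ is determined by $q_{\bm U}$: the functions $1_{z=\Delta},\phi^\sy_1,\dots,\phi^\sy_{m^\sy}$ are linearly independent on $\Zcal_\Delta$ (the $\phi^\sy_i$ are linearly independent in $\Gcal^\sy$ by the pivoted Cholesky construction, and none is supported at $\Delta$), so by a standard polarization/evaluation argument the quadratic form $[\,1_{z=\Delta}\;\;\bm\phi^\sy(z)\,]\bm U^\sy[\,\cdots\,]^\top$ determines $\bm U^\sy$ uniquely, and likewise the diagonal term determines $\bm U^\id$ via linear independence of the $\phi^\id_i$. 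Combined with the fact (footnoted after \eqref{krr1old}, via Lemma~\ref{lembasNEWW}) that distinct minimizers share the same $q_h$, this gives a well-defined bijection-onto-its-image and the claimed uniqueness. I would also note in passing that the representation is valid for \emph{any} choice of the invertible matrices $\bm B^\tau$ entering $\bm\phi^\tau$, since changing $\bm B^\tau$ just conjugates $\bm U^\tau$ and $\bm G^\tau$ compatibly, leaving $q_{\bm U}$ and $\tr(\bm G^\tau\bm U^\tau)$ invariant.
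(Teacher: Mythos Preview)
Your approach is essentially the paper's: write $h_0^\tau(\cdot) = \bm\phi^\tau(\cdot)\bm\gamma^\tau$, define $\bm U$ as the Gram matrix of the coefficient array (augmented by $p$ in the systematic block), and verify the kernel and loss representations by direct substitution, with the penalty handled through $\|h_0^\tau\|_{\Hcal^\tau}^2 = \tr(\bm G^\tau\bm U^\tau)$. Your uniqueness argument via linear independence of $1_{z=\Delta},\phi^\sy_1,\dots,\phi^\sy_{m^\sy}$ on $\Zcal_\Delta$ is in fact more explicit than the paper, which simply defines $\bm U$ as a function of $\bm\gamma$ and leaves well-definedness implicit.

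There is one genuine omission. For the final assertion --- that the restriction of problem \eqref{krr1old} to $\Hcal_0$ ``becomes'' the convex problem \eqref{krr1convex} over all of $\Dcal$ --- you need the correspondence $h_0 \mapsto \bm U$ to be \emph{surjective} onto $\Dcal$; otherwise \eqref{krr1convex} is a strict relaxation, and its minimizer could lie outside the image of $\Hcal_0$, so solving it would not recover a feature map. The paper closes this gap via Lemma~\ref{lembasNEWW}, which shows (using $\dim\Ccal \ge \max\{m^\sy+1,m^\id\}$, guaranteed here since $\Ccal=\ell^2$) that every $\bm U\in\Dcal$ arises as $\bm U(\bm\gamma)$ for some coefficient array $\bm\gamma$. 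You cite Lemma~\ref{lembasNEWW} only in its many-to-one direction (distinct $h$'s share the same $q_h$); you also need its surjectivity conclusion to establish that the feasible sets coincide and hence that \eqref{krr1convex} is an honest reparametrization rather than a relaxation.
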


The representation of the moment kernel \eqref{asszdeltaEST} induces the conditional mean and covariance (COCO) functions \eqref{eqmuhhch} expressed in terms of $\bm U=(\bm U^\sy,\bm U^\id)\in\Dcal$ as
\begin{equation}\label{qcondefEST}
 \begin{aligned}
  \mu_{\bm U}(z) &= \bm\phi^\sy(z)\muf, \\
  c_{\bm U}(z,z') &= \bm\phi^\sy(z)\big(\bm V - \muf\muf^\top\big)\bm\phi^\sy(z')^\top    + \bm\phi^\id(z)\bm U^\id\bm\phi^\id(z')^\top 1_{z=z'},\quad \text{for $\begin{bmatrix}    1 & \muf^\top \\ \muf & \bm V  \end{bmatrix}\coloneqq {\bm U^\sy}$.}
\end{aligned}
\end{equation}
Given a cross section $\xi_t=(N_t,\bm x_{t+1},\bm z_t)$ we obtain the corresponding COCO estimates
\begin{equation}\label{eqmutSigmatest}
\begin{aligned}
     \bm \mu_t &= \bm\phi^\sy(\bm z_t)\muf,\\
    \bm \Sigma_t &= \underbrace{\bm\phi^\sy(\bm z_t)\big(\bm V - \muf\muf^\top\big)\bm\phi^\sy(\bm z_t)^\top}_{\eqqcolon\bm\Sigma^\sy_t} + \underbrace{\Diag (\bm\phi^\id(\bm z_t) \bm U^\id \bm\phi^\id(\bm z_t)^\top )}_{\eqqcolon \bm\Sigma^\id_t}.
   \end{aligned}
\end{equation}
with systematic and idiosyncratic components $\bm\Sigma_t^\sy$ and $\bm\Sigma_t^\id$. 

It follows from \eqref{eqmutSigmatest} that $\bm\mu_t\in\Ima(\bm\Sigma_t)$ if $\bm\Sigma_t^\id$ (and hence $\bm\Sigma_t$) is invertible, an assumption we adopt henceforth.\footnote{It is always satisfied in the empirical study below.} This implies that the conditional mean-variance efficient (cMVE) portfolio, with weights $\bm w_t= \bm\Sigma_t^+\bm \mu_t$, is well-defined and attains the maximum Sharpe ratio, which is given by $\sqrt{\bm\mu_t^\top \bm\Sigma_t^+\bm\mu_t}$.\footnote{\cite[Proposition 6.3]{filipovicschneider24} demonstrate that the cMVE portfolio can be replicated by trading exclusively in the $m^\sy$ factor portfolios $\bm f_{t+1}$ defined in Theorem~\ref{thmfacrepr}\ref{thmfacrepr3}.} 

From the COCO estimates \eqref{eqmutSigmatest}, we can also deduce the linear factor representation
\begin{equation}\label{eq:facrepsimpl}
\bm x_{t+1}=\bm\phi^\sy(\bm z_t) \bm g_{t+1} + (\bm\Sigma_t^\id)^{1/2}w_{t+1}(\bm z_t),
\end{equation}
which holds in terms of conditional first and second moments. Here, $\bm g_{t+1}$ represents an $m^\sy$-dimensional systematic risk factor process with constant conditional mean $\E_t[\bm g_{t+1}]=\bm b$ and covariance matrix $\cov_t[\bm g_{t+1}]=\bm V-\bm b\bm b^\top$, and $w_{t+1}(\bm z_t)$ is a conditionally uncorrelated white noise process, as specified after \eqref{eqFM1}. This result aligns with and constitutes a special case of Theorem~\ref{thmfacrepr}\ref{thmfacrepr2}.

In the empirical study below we specify the idiosyncratic component as follows.
\begin{example}\label{ex:simplediagonal}
Arguably, the simplest idiosyncratic specification is in dimension $m^\id=1$, with constant feature map $\bm \phi^\id(\cdot)= \phi^\id_1(\cdot)\coloneqq 1$, and $\bm U^\id = u^\id \in [0,\infty)$. The idiosyncratic component of the covariance function in \eqref{qcondefEST} becomes $ u^\id 1_{z=z'}$, and the estimate in \eqref{eqmutSigmatest} reads $\bm\Sigma_t^\id = u^\id\bm I_{N_t}$.
\end{example}

\section{Properties of the COCO estimator}\label{sec_TP}

In this section, we establish the uniqueness, consistency, and finite-sample guarantees of the COCO estimator. To facilitate the analysis and subsequent implementation, we first express the regularized loss function in vectorized form. All theoretical results are then stated in terms of these vectorized parameters.

\subsection{Vectorization of the loss function}

We use the (half-)vectorization of (symmetric) matrices $\bm A\in\R^{n\times n}$ defined as
\begin{align*}
    \vect(\bm A)&\coloneqq [
              A_{11} , A_{21} , \dots , A_{n1} , A_{21} , \dots ,A_{nn}
             ]^{\top}\in\R^{n^2},\\
    \vecth(\bm A)&\coloneqq[
               A_{11} , A_{21} , \dots , A_{n1} , A_{22} , A_{23} , \dots ,A_{nn}
              ]^{\top}\in \R^{n(n+1)/2}, 
\end{align*}
as well as the duplication matrix $\bm D_n\in\R^{n^2\times n(n+1)/2}$, defined such that $\vect(\bm A)=\bm D_n \vecth(\bm A) $ for all $\bm A\in\Sy^n$. The composition of $\vect$ and $\diag$ can be expressed as $\vect\diag(\bm v) = \bm R_n\bm v$ for the $n^2\times n$-matrix $\bm R_n$ whose $i$th column is the standard basis vector $\bm e_{(i-1)n+i}$ in $\R^{n^2}$. In turn, $\bm R_n^\top (\bm A\otimes \bm A) $ is the $n\times m^2$-matrix whose $i$th row is $\bm A_{i,\cdot}\otimes \bm A_{i,\cdot}$, for a $n\times m$-matrix $\bm A$. Note that $\bm R_n^\top\bm R_n=\bm I_n$ and $\bm R_n\bm R_n^\top$ is the orthogonal projection in $\R^{n^2}$ on the $n$-dimensional subspace spanned by $\bm e_{(i-1)n+i}$, $i=1,\dots,n$.

The data points $\xi_t = (N_t,\bm x_{ t+1}, \bm z_t)$ take values in the set $\Xi \coloneqq \bigcup_{n\in\N} \{ \{ n\}\times  \R^{n}\times \Zcal^n\}$, which represents the union over all possible cross-sectional sizes. We write $\xi=(N,\bm x,\bm z)$ for a generic point in $\Xi$ and denote the vectorized return product matrix as
\[\bm y(\bm x ) \coloneqq \vect\bigg(\begin{bmatrix}  1 & \bm x ^\top \\  \bm x  & \bm x \bm x^\top \end{bmatrix}\bigg)\in \R^{(N+1)^2}.\]
We define the vectorized Gram matrices \eqref{eqGram}
\[\bm g^\sy \coloneqq \vect(\bm G^\sy) \in\R^{(m^\sy+1)^2},\quad
\bm g^\id \coloneqq \vect(\bm G^\id)\in\R^{(m^\id)^2},\]
and vectorized parameters $\bm u^\sy \coloneqq  \vecth(\bm U^\sy)$, $\bm u^\id \coloneqq  \vecth(\bm U^\id)$, taking values in the vectorized feasible set 
\[\Ucal \coloneqq \vecth(\Dcal) = \vecth(\Dcal^\sy)\times \vecth(\Sy_+^{m^\id}) \subset\R^M,\] for the total parameter dimension $M\coloneqq (m^\sy+1)(m^\sy+2)/2 + m^\id(m^\id+1)/2$.

Using the above notation, we can then express the regularized loss function in \eqref{eqRcalU} as a quadratic polynomial in the vectorized parameters as stated in the following lemma.

\begin{lemma}\label{lemRuxivec}
The regularized loss function \eqref{eqRcaldef} can be represented in terms of the vectorized parameter $\bm u=\begin{bmatrix}
    \bm u^\sy \\ \bm u^\id
\end{bmatrix}\in\R^M$ as $\Rcal(\bm U,\xi ) =\Rcal(\bm u,\xi )$ where
\begin{equation}\label{eqRuxipoly}
   \Rcal(\bm u, \xi)\coloneqq  \frac{1}{2}\bm u^\top \bm A(\xi)\bm u + \bm b(\xi)^\top\bm u + c(\xi) ,
\end{equation}  
for the coefficients
\begin{align*}
    \bm A(\xi)&\coloneqq \nabla_{\bm u}^2\Rcal(\bm u,\xi)=2 w(N)\bm Q(\xi)^\top \bm Q(\xi) ,\\
    \bm b(\xi)&\coloneqq \nabla_{\bm u}\Rcal(\bm 0,\xi)=-2 w(N)\bm Q(\xi)^\top \bm y(\bm x)+ \begin{bmatrix}
  \lambda^\sy  \bm D_{m^\sy+1}^\top {\bm g^\sy}\\
  \lambda^\id   \bm D_{m^\id}^\top {\bm g^\id}  
\end{bmatrix}  ,\\
    c(\xi)&\coloneqq \Rcal(\bm 0,\xi)= w(N) \|\bm y(\bm x)\|_2^2,
\end{align*}
and where we define the matrix-valued mappings
\begin{align*}
    \bm P(\xi)&\coloneqq \begin{bmatrix}
    \bm\Psi^\sy(\bm z)\otimes\bm\Psi^\sy(\bm z) & \bm R_{N+1} \bm R_{N+1}^\top( \bm\Psi^\id(\bm z)\otimes \bm\Psi^\id(\bm z))
\end{bmatrix}\in \R^{(N+1)^2 \times  ((m^\sy+1)^2+(m^\id)^2 )},\\
\bm Q(\xi)&\coloneqq   \bm P(\xi) \begin{bmatrix}
    \bm D_{m^\sy+1} & \bm 0 \\ \bm 0 & \bm D_{m^\id}
\end{bmatrix}\in \R^{(N+1)^2 \times M}.
\end{align*}
\end{lemma}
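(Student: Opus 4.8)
The plan is to start from the matrix-form regularized loss $\Rcal(\bm U,\xi)$ in \eqref{eqRcalU}, vectorize the Frobenius-norm term $\Lcal(\bm U,\xi)$ using the standard identity $\vect(\bm A\bm X\bm B^\top)=(\bm B\otimes\bm A)\vect(\bm X)$, and then rewrite everything in the half-vectorized coordinates $\bm u^\tau=\vecth(\bm U^\tau)$ via the duplication matrices $\bm D_{m^\sy+1}$, $\bm D_{m^\id}$. Concretely, first I would apply $\vect$ to the matrix inside $\|\cdot\|_F$ in $\Lcal(\bm U,\xi)$: the term $\bm\Psi^\sy(\bm z)\bm U^\sy\bm\Psi^\sy(\bm z)^\top$ vectorizes to $(\bm\Psi^\sy(\bm z)\otimes\bm\Psi^\sy(\bm z))\vect(\bm U^\sy)$, and the term $\Diag(\bm\Psi^\id(\bm z)\bm U^\id\bm\Psi^\id(\bm z)^\top)$ vectorizes, using $\vect\diag(\bm v)=\bm R_{N+1}\bm v$ together with $\bm R_{N+1}^\top$ extracting the diagonal, to $\bm R_{N+1}\bm R_{N+1}^\top(\bm\Psi^\id(\bm z)\otimes\bm\Psi^\id(\bm z))\vect(\bm U^\id)$. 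Stacking these two contributions identifies the block matrix $\bm P(\xi)$, so that the vectorized model matrix equals $\bm P(\xi)\big[\vect(\bm U^\sy);\vect(\bm U^\id)\big]$, and the Frobenius norm becomes the Euclidean norm $\|\bm y(\bm x)-\bm P(\xi)[\vect(\bm U^\sy);\vect(\bm U^\id)]\|_2^2$.

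Next I would pass from $\vect(\bm U^\tau)$ to $\vecth(\bm U^\tau)=\bm u^\tau$ using the defining property $\vect(\bm U^\tau)=\bm D_{m^\tau}\vecth(\bm U^\tau)$ (with the obvious size bookkeeping $m^\sy+1$ versus $m^\id$), which replaces $\bm P(\xi)$ by $\bm Q(\xi)=\bm P(\xi)\,\mathrm{blockdiag}(\bm D_{m^\sy+1},\bm D_{m^\id})$ and yields $\Lcal(\bm U,\xi)=w(N)\|\bm y(\bm x)-\bm Q(\xi)\bm u\|_2^2$. Expanding this square gives $w(N)\|\bm y(\bm x)\|_2^2 - 2w(N)\bm u^\top\bm Q(\xi)^\top\bm y(\bm x) + w(N)\bm u^\top\bm Q(\xi)^\top\bm Q(\xi)\bm u$, which matches the claimed $c(\xi)$, the first part of $\bm b(\xi)$, and $\tfrac12\bm u^\top\bm A(\xi)\bm u$ with $\bm A(\xi)=2w(N)\bm Q(\xi)^\top\bm Q(\xi)$. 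For the regularization terms, I would observe that $\tr(\bm G^\tau\bm U^\tau)=\vect(\bm G^\tau)^\top\vect(\bm U^\tau)=(\bm g^\tau)^\top\bm D_{m^\tau}\bm u^\tau=(\bm D_{m^\tau}^\top\bm g^\tau)^\top\bm u^\tau$ (using symmetry of $\bm G^\tau$ and $\bm U^\tau$ so that $\tr(\bm G^\tau\bm U^\tau)=\langle\vect\bm G^\tau,\vect\bm U^\tau\rangle$), which contributes the remaining linear block of $\bm b(\xi)$ and no quadratic or constant term. Finally I would note that the gradient/Hessian characterizations $\bm A(\xi)=\nabla_{\bm u}^2\Rcal$, $\bm b(\xi)=\nabla_{\bm u}\Rcal(\bm 0,\xi)$, $c(\xi)=\Rcal(\bm 0,\xi)$ are then immediate from the quadratic form \eqref{eqRuxipoly}.

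The only genuinely delicate point is the idiosyncratic $\Diag(\cdot)$ term: one must check that $\vect\big(\Diag(\bm M)\big)=\bm R_{N+1}\bm R_{N+1}^\top\vect(\bm M)$ for any square $\bm M$ of size $N+1$, i.e.\ that pre-multiplying $\vect(\bm M)$ by the projection $\bm R_{N+1}\bm R_{N+1}^\top$ zeroes out the off-diagonal entries and keeps the diagonal ones — this follows from the stated facts that $\bm R_{N+1}^\top\bm R_{N+1}=\bm I_{N+1}$ and that the columns of $\bm R_{N+1}$ are the basis vectors $\bm e_{(i-1)(N+1)+i}$, so $\bm R_{N+1}^\top\vect(\bm M)=\diag(\bm M)$ and $\bm R_{N+1}\diag(\bm M)=\vect\Diag(\bm M)$. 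Everything else is routine Kronecker/duplication-matrix algebra, and the consistency of the matrix sizes in $\bm P(\xi)$ and $\bm Q(\xi)$ serves as a useful sanity check throughout.
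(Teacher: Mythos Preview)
Your proposal is correct and follows essentially the same route as the paper's own proof: vectorize the Frobenius-norm term via the Kronecker identity, handle the $\Diag$ block through the projection $\bm R_{N+1}\bm R_{N+1}^\top$, pass to half-vectorized coordinates with the duplication matrices, and expand the resulting quadratic. Your treatment is in fact slightly more explicit than the paper's (you spell out the regularization terms and the $\vect\Diag(\bm M)=\bm R_{N+1}\bm R_{N+1}^\top\vect(\bm M)$ identity, whereas the paper justifies the latter via the entrywise formula $(\bm\Psi^\id(\bm z_t)\bm U^\id\bm\Psi^\id(\bm z_t)^\top)_{ii}=(\bm\Psi^\id_{i,\cdot}(\bm z_t)\otimes\bm\Psi^\id_{i,\cdot}(\bm z_t))\vect(\bm U^\id)$), but the substance is identical.
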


Strict and strong convexity of $\Rcal(\bm u, \xi)$ in $\bm u\in\R^M$ are discussed in detail in Appendix~\ref{app_convex}.

\subsection{Consistency and finite-sample guarantees}

We assume that the data points $\xi_t$, $t=0,\dots,T-1$, are i.i.d.\ drawn from a distribution $\Pa$ with support in $\Xi$. We define the sample averages $\bm A_T \coloneqq \frac{1}{T}\sum_{t=0}^{T-1} \bm A(\xi_t)$, $\bm b_T \coloneqq \frac{1}{T}\sum_{t=0}^{T-1} \bm b(\xi_t)$, and $c_T \coloneqq \frac{1}{T}\sum_{t=0}^{T-1} c(\xi_t)$, so that the sample average (empirical) regularized loss in \eqref{krr1convex} is given by
\[ \Rcal_T(\bm u) \coloneqq \frac{1}{T} \sum_{t=0}^{T-1}\Rcal(\bm u,\xi_t) = \frac{1}{2}\bm u^\top \bm A_T\bm u + \bm b_T^\top\bm u + c_T.\]
We next provide conditions under which the population loss is well defined and the law of large numbers applies. 

\begin{lemma}\label{lempoploss}
Assume that the following moments are finite,
 \begin{equation}\label{cond_moments}
    \E[ w(N) \|\bm\phi^\sy(\bm z)\|_F^4]<\infty  ,\quad \E[ w(N) \|\bm\phi^\id(\bm z)\|_F^4]<\infty ,\quad \E[ w(N) \|\bm x\|_2^4]<\infty.
\end{equation}
Then $\|\bm A(\xi)\|_F$, $\|\bm b(\xi)\|_2$, and $|c(\xi)|$ have finite expectation, and thus we can define the population loss, along with its gradient and Hessian, 
\begin{align*}
    \Ecal(\bm u) &\coloneqq \E[\Rcal(\bm u,\xi)] = \frac{1}{2}\bm u^\top {\bm A} \bm u + {\bm b}^\top \bm u +  c,\\
    \nabla_{\bm u}\Ecal(\bm u) &= \E[\nabla_{\bm u}\Rcal(\bm u,\xi)] =  {\bm A} \bm u + {\bm b} ,\\
    \nabla_{\bm u}^2\Ecal(\bm u) &= \E[\nabla_{\bm u}^2\Rcal(\bm u,\xi)] =  {\bm A}  ,
\end{align*}  
for ${\bm A}\coloneqq\E[\bm A(\xi)]$, ${\bm b}\coloneqq\E[\bm b(\xi)]$, and $ c\coloneqq\E[c(\xi)]$. Moreover, the law of large numbers applies and $\Rcal_T(\cdot)\to \Ecal(\cdot)$, $\nabla_{\bm u}\Rcal_T(\cdot)\to \nabla_{\bm u}\Ecal(\cdot)$, and $\nabla_{\bm u}^2\Rcal_T(\cdot)\to \nabla_{\bm u}^2\Ecal(\cdot)$ as $T\to \infty$ uniformly in $\bm u$ on compacts in $\R^{M}$ with probability 1. 
\end{lemma}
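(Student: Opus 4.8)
The plan is to establish the three claimed ingredients in sequence: integrability of the loss coefficients, the definition of the population loss, and then the uniform law of large numbers.

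\textbf{Integrability.} First I would show that the moment conditions \eqref{cond_moments} imply $\E\|\bm A(\xi)\|_F < \infty$, $\E\|\bm b(\xi)\|_2<\infty$, $\E|c(\xi)|<\infty$. The key observation is that $\bm A(\xi) = 2w(N)\bm Q(\xi)^\top\bm Q(\xi)$, so $\|\bm A(\xi)\|_F \le 2 w(N)\|\bm Q(\xi)\|_F^2$, and $\bm Q(\xi)$ is built (via fixed duplication/selection matrices with operator norm bounded by an absolute constant) from $\bm\Psi^\sy(\bm z)\otimes\bm\Psi^\sy(\bm z)$ and $\bm\Psi^\id(\bm z)\otimes\bm\Psi^\id(\bm z)$. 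Since $\|\bm A\otimes\bm A\|_F = \|\bm A\|_F^2$, and $\|\bm\Psi^\sy(\bm z)\|_F^2 \le 1 + \|\bm\phi^\sy(\bm z)\|_F^2$ (similarly for the idiosyncratic block), we get $w(N)\|\bm Q(\xi)\|_F^2 \lesssim w(N)(1 + \|\bm\phi^\sy(\bm z)\|_F^4 + \|\bm\phi^\id(\bm z)\|_F^4)$, whose expectation is finite by \eqref{cond_moments} and $\E[w(N)]\le 1$. For $\bm b(\xi)$, Cauchy--Schwarz on the term $w(N)\bm Q(\xi)^\top\bm y(\bm x)$ together with $\|\bm y(\bm x)\|_2 \lesssim 1 + \|\bm x\|_2^2$ reduces integrability to the finiteness of $\E[w(N)\|\bm Q(\xi)\|_F^2]$ (just shown) and $\E[w(N)\|\bm x\|_2^4]$; the regularization part of $\bm b(\xi)$ is a fixed deterministic vector. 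For $c(\xi) = w(N)\|\bm y(\bm x)\|_2^2 \lesssim w(N)(1 + \|\bm x\|_2^4)$, finiteness is immediate from the third condition in \eqref{cond_moments}.

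\textbf{Population loss.} Given integrability, $\bm A \coloneqq \E[\bm A(\xi)]$, $\bm b\coloneqq\E[\bm b(\xi)]$, $c\coloneqq\E[c(\xi)]$ are well defined (finite), and since $\bm u\mapsto\Rcal(\bm u,\xi)$ is a fixed quadratic polynomial \eqref{eqRuxipoly} with these coefficients, linearity of expectation gives $\Ecal(\bm u) = \E[\Rcal(\bm u,\xi)] = \tfrac12\bm u^\top\bm A\bm u + \bm b^\top\bm u + c$, with $\nabla_{\bm u}\Ecal(\bm u) = \bm A\bm u + \bm b$ and $\nabla^2_{\bm u}\Ecal(\bm u) = \bm A$. (Interchanging $\E$ and $\nabla_{\bm u}$ is trivial here because the gradient/Hessian of the quadratic are affine/constant in $\bm u$ with integrable coefficients.)

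\textbf{Uniform LLN.} Finally, for the uniform convergence of $\Rcal_T$, $\nabla_{\bm u}\Rcal_T$, $\nabla^2_{\bm u}\Rcal_T$ on compacts: the Hessian $\nabla^2_{\bm u}\Rcal_T(\bm u) = \bm A_T$ is constant in $\bm u$, so $\bm A_T\to\bm A$ a.s.\ is just the multivariate strong law of large numbers (entrywise), and it is trivially uniform in $\bm u$. Likewise $\bm b_T\to\bm b$ and $c_T\to c$ a.s.\ by the SLLN. Then for any compact $\Kcal\subset\R^M$, $\sup_{\bm u\in\Kcal}|\Rcal_T(\bm u)-\Ecal(\bm u)| \le \tfrac12\|\bm A_T-\bm A\|_F \sup_{\bm u\in\Kcal}\|\bm u\|_2^2 + \|\bm b_T-\bm b\|_2\sup_{\bm u\in\Kcal}\|\bm u\|_2 + |c_T-c|$, which $\to 0$ a.s.; the same estimate handles $\nabla_{\bm u}\Rcal_T$ via $\sup_{\bm u\in\Kcal}\|(\bm A_T-\bm A)\bm u + (\bm b_T-\bm b)\|_2$. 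I expect the only real work is the integrability bookkeeping in the first step — tracking the absolute constants from the duplication/selection matrices and the Kronecker-norm identity — while the population loss and the uniform LLN are essentially immediate once one exploits that everything is a quadratic polynomial with constant Hessian.
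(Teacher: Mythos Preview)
Your proposal is correct and follows essentially the same approach as the paper: the paper bounds $\|\bm A(\xi)\|_F$, $\|\bm b(\xi)\|_2$, $|c(\xi)|$ via the same chain of elementary inequalities ($\|\bm A\bm B\|_F\le\|\bm A\|_F\|\bm B\|_F$, $\|\bm A\otimes\bm B\|_F=\|\bm A\|_F\|\bm B\|_F$, together with $\|\bm\Psi^\sy(\bm z)\|_F^2=1+\|\bm\phi^\sy(\bm z)\|_F^2$ and the analogous identities for $\bm\Psi^\id$, $\bm D_n$, $\bm R_n$), and then simply states that combining these bounds with \eqref{cond_moments} proves the lemma. Your treatment of the population loss and the uniform LLN (via SLLN on the coefficients $\bm A_T,\bm b_T,c_T$ and the fact that the objective is a quadratic polynomial in $\bm u$) is more explicit than the paper's, which leaves these steps entirely to the reader.
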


The main result of this section is stated below. Unlike standard results in statistical learning, it applies to an estimator constrained by a convex parameter space.

\begin{theorem}\label{thmasy} ~ 

\begin{enumerate}
    \item\label{thmasy1} Consistency: Assume that \eqref{cond_moments} holds and that $\bm A$ is non-singular, so that $\Ecal$ is strictly convex and there exists a unique minimizer $ \bm u^\ast\coloneqq\argmin_{\bm u\in\Ucal}\Ecal(\bm u)$.\footnote{Given Jensen's inequality, $\bm u^\top \bm A\bm u\ge \bm u^\top \E[(N+1)^{-1}\bm Q(\xi)]^\top \E[(N+1)^{-1}\bm Q(\xi)]\bm u$, so that non-singularity of $\bm A$ can be asserted by similar arguments as above Lemma~\ref{lemalphaconNEW}.} Then any sequence of minimizers $\bm u^\ast_T\in \argmin_{\bm u\in\Ucal}\Rcal_T(\bm u)$ converges, $ \bm u^\ast_T\to  \bm u^\ast$ as $T\to \infty$, with probability 1.

\item\label{thmasy2} Mean squared error bound: Assume further that $\Rcal(\bm u,\xi)$ is $\alpha$-strongly convex in $\bm u$ for $\Pa$-a.e.\ $\xi\in\Xi$, for some $\alpha>0$, see Lemma~\ref{lemalphaconNEW}, and
\begin{equation}\label{assSM}
    \E[ \|(\bm A(\xi) -\bm A)\bm u^\ast + \bm b(\xi)-\bm b\|_2^2]\le \sigma^2,
\end{equation}
for some $\sigma>0$. Then $\Ecal$ and $\Rcal_T$ are $\alpha$-strongly convex, so that the minimizers $\bm u^\ast_T= \argmin_{\bm u\in\Ucal}\Rcal_T(\bm u)$ are unique, and 
\[\E[ \|\bm u^\ast_T - \bm u^\ast\|_2^2]\le  \frac{\sigma^2}{\alpha^2 T}.\]

\item\label{thmasy3} Finite-sample guarantees: Assume further that  
\begin{equation}\label{assEM}
    \E[ \exp(\tau^{-2}\|(\bm A(\xi) -\bm A)\bm u^\ast + \bm b(\xi)-\bm b\|_2^2)]\le \exp (1),
\end{equation}
for some $\tau>0$. Then for all $\epsilon>0$, $\Pa[ \|\bm u^\ast_T -  \bm u^\ast \|_2 \ge \epsilon ]\le  2\exp(-  \tau^{-2} T \epsilon^2 \alpha^2/3)$. This can equivalently be expressed as: for any $\delta\in (0,1)$, with sample probability of at least $1-\delta$, it holds that
\[ \|\bm u^\ast_T - \bm u^\ast\|_2 \le  \frac{\sqrt{\log(2/\delta)} \sqrt{3}\tau}{ \alpha \sqrt{T}}.\]

\item\label{thmasy4NEW} Consistency \ref{thmasy1}, mean squared error bound \ref{thmasy2}, and finite-sample guarantees~\ref{thmasy3} extend to the implied moment kernel functions \eqref{asszdeltaEST}, and thus the COCO estimates \eqref{eqmutSigmatest}, using the fact that
    \begin{equation}\label{qboundasy}
     \big | q_{\bm u_T^\ast}(z,z')- q_{\bm u^\ast}(z,z') \big | \le C(z,z') \|\bm u_T^\ast -\bm u^\ast\|_2, \quad\text{for $z,z'\in\Zcal_\Delta$,}
    \end{equation}
    where $C(z,z')\coloneqq \sqrt{2} \big((1_{z=\Delta}+\|\bm\phi^\sy(z)\|_2)(1_{z'=\Delta}+\|\bm\phi^\sy(z')\|_2)+ \|\bm\phi^\id(z)\|_2^2 1_{z=z'}\big)$.

\item\label{thmasy5NEW} Condition \eqref{assEM} implies \eqref{assSM} for $\sigma^2=\tau^2$. A sufficient condition for \eqref{assEM} to hold is that \(\bm{\phi}^\sy\) and \(\bm{\phi}^\id\) are uniformly bounded functions on \(\Zcal\), the individual returns \(x_{t+1,i}\) are uniformly bounded, and \(N_t^2 w(N_t)\) is uniformly bounded, \(\Pa\)-almost surely.

\item\label{thmasy6NEW} All statements of this theorem hold verbatim if $\Ucal$ is replaced by any closed convex subset of $\Ucal$.
\end{enumerate}

\end{theorem}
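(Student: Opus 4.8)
The backbone of the plan is the quadratic structure from Lemma~\ref{lemRuxivec}: $\Rcal_T$ and $\Ecal$ are quadratics built from the same data, $\Rcal_T(\bm u)=\tfrac12\bm u^\top\bm A_T\bm u+\bm b_T^\top\bm u+c_T$ and $\Ecal(\bm u)=\tfrac12\bm u^\top\bm A\bm u+\bm b^\top\bm u+c$, minimized over the \emph{closed convex} set $\Ucal$. Because the domain is constrained one cannot simply set a gradient to zero; instead I would use the variational-inequality characterization of constrained convex minima: $\bm u_T^\ast$ minimizes $\Rcal_T$ over $\Ucal$ iff $\langle\nabla\Rcal_T(\bm u_T^\ast),\bm u-\bm u_T^\ast\rangle\ge0$ for all $\bm u\in\Ucal$, and likewise for $\bm u^\ast$ and $\Ecal$. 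Plugging $\bm u=\bm u^\ast$ into the first inequality and $\bm u=\bm u_T^\ast$ into the second, adding them, and using $\nabla\Rcal_T(\bm u_T^\ast)-\nabla\Rcal_T(\bm u^\ast)=\bm A_T(\bm u_T^\ast-\bm u^\ast)$ together with $\bm A_T\succeq\alpha\bm I$ (which holds since $\alpha$-strong convexity of every $\Rcal(\cdot,\xi)$, Lemma~\ref{lemalphaconNEW}, forces $\bm A(\xi)\succeq\alpha\bm I$, hence $\bm A_T\succeq\alpha\bm I$), gives the master inequality
\[
\alpha\,\|\bm u_T^\ast-\bm u^\ast\|_2\ \le\ \|\nabla\Rcal_T(\bm u^\ast)-\nabla\Ecal(\bm u^\ast)\|_2\ =\ \Big\|\tfrac1T{\textstyle\sum_{t=0}^{T-1}}\bm v_t\Big\|_2,\qquad \bm v_t\coloneqq(\bm A(\xi_t)-\bm A)\bm u^\ast+\bm b(\xi_t)-\bm b,
\]
where the $\bm v_t$ are i.i.d.\ with $\E[\bm v_t]=\bm 0$ (the deterministic regularization term in $\bm b(\xi)$ cancels, and $\E[\bm A(\xi)]=\bm A$, $\E[\bm b(\xi)]=\bm b$ by Lemma~\ref{lempoploss}). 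Part~\ref{thmasy1} is handled separately below; parts~\ref{thmasy2}--\ref{thmasy3} follow by squeezing this inequality, and parts~\ref{thmasy4NEW}--\ref{thmasy6NEW} are corollaries.

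For consistency (part~\ref{thmasy1}) I would argue without strong convexity. Non-singularity of $\bm A$ makes $\Ecal$ strictly convex and coercive, so with $\Ucal$ closed it has a unique minimizer $\bm u^\ast$. By Lemma~\ref{lempoploss}, $\bm A_T\to\bm A$ and $\bm b_T\to\bm b$ almost surely, so almost surely for $T$ large $\bm A_T\succeq\tfrac12\lambda_{\min}(\bm A)\bm I$; then $\bm u_T^\ast$ is the unique solution of the associated strongly monotone variational inequality over $\Ucal$, whose solution depends Lipschitz-continuously on $(\bm A_T,\bm b_T)$ by the standard perturbation result for such problems, whence $\bm u_T^\ast\to\bm u^\ast$. (Alternatively: $\Rcal_T\to\Ecal$ uniformly on compacts by Lemma~\ref{lempoploss}, the minimizers eventually lie in a fixed compact by the uniform coercivity just noted, and the usual convergence-of-minimizers argument applies.)

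For the mean-squared-error bound (part~\ref{thmasy2}) I would take expectations in the master inequality: by independence and $\E[\bm v_t]=\bm 0$ the cross terms vanish, so $\E\big\|\tfrac1T\sum_t\bm v_t\big\|_2^2=\tfrac1{T^2}\sum_t\E\|\bm v_t\|_2^2\le\sigma^2/T$ under \eqref{assSM}, giving $\E\|\bm u_T^\ast-\bm u^\ast\|_2^2\le\sigma^2/(\alpha^2T)$; uniqueness of $\bm u_T^\ast$ and strong convexity of $\Ecal,\Rcal_T$ are immediate from $\bm A(\xi)\succeq\alpha\bm I$. For the finite-sample guarantee (part~\ref{thmasy3}) the master inequality reduces the claim to a tail bound for $\big\|\tfrac1T\sum_t\bm v_t\big\|_2$, a sum of independent, mean-zero, $\R^M$-valued vectors whose norm has a controlled squared-exponential moment by \eqref{assEM}; I would obtain $\Pa[\|\tfrac1T\sum_t\bm v_t\|_2\ge\alpha\epsilon]\le2\exp(-\tau^{-2}T\epsilon^2\alpha^2/3)$ from a dimension-free Bernstein/Pinelis-type concentration inequality for Hilbert-space-valued independent sums in the sub-Gaussian-norm regime, and the ``for any $\delta\in(0,1)$'' restatement is just the algebraic inversion of $2\exp(-\tau^{-2}T\epsilon^2\alpha^2/3)=\delta$. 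The hard part will be exactly this concentration step---getting the clean, dimension-free exponent with constant $1/3$ out of \eqref{assEM} alone, rather than under almost-sure boundedness; everything else is convex-analysis bookkeeping.

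The remaining parts are short. Part~\ref{thmasy4NEW}: $q_{\bm u}(z,z')$ in \eqref{asszdeltaEST} is \emph{affine} in $\bm u$, so $q_{\bm u_T^\ast}(z,z')-q_{\bm u^\ast}(z,z')$ is linear in $\bm u_T^\ast-\bm u^\ast$; bounding the associated coefficient vector by Cauchy--Schwarz---with the $\sqrt2$ absorbing the off-diagonal entries that $\vecth$ records once but that enter the quadratic form twice---gives \eqref{qboundasy}, whence consistency, the MSE bound, and the tail bound transfer to $q$, and thus to the COCO estimates \eqref{eqmutSigmatest} (which are $q$ evaluated at the observed covariates), by composition. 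Part~\ref{thmasy5NEW}: applying Jensen to $\E[\exp(\tau^{-2}\|\bm v\|_2^2)]\le\exp(1)$ forces $\E\|\bm v\|_2^2\le\tau^2$, i.e.\ \eqref{assSM} with $\sigma^2=\tau^2$; and under uniform boundedness of $\bm\phi^\sy,\bm\phi^\id$ on $\Zcal$, of the $x_{t+1,i}$, and of $N_t^2w(N_t)$, one has $\|\bm\Psi^\sy(\bm z)\|_F^2,\|\bm\Psi^\id(\bm z)\|_F^2=O(N)$ and $\|\bm y(\bm x)\|_2=O(N)$, so $\|\bm A(\xi)\|$ and $\|\bm b(\xi)\|$ are $O(N^2w(N))$ and hence almost surely bounded, which gives $\|\bm v\|_2\le B$ almost surely and \eqref{assEM} for any $\tau\ge B$. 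Part~\ref{thmasy6NEW}: the arguments above use only that $\Ucal$ is closed and convex (for existence and uniqueness of the constrained minimizers and for the variational inequalities), so every statement holds verbatim with $\Ucal$ replaced by any closed convex subset.
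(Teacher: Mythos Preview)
Your proposal is correct and, at the conceptual level, follows the same route as the paper; the difference is purely one of presentation. The paper delegates parts \ref{thmasy1}--\ref{thmasy3} entirely to external references---\cite[Theorem~5.4]{sha_den_rus_21} for consistency of sample-average-approximation minimizers over a closed convex set, and \cite[Theorem~3]{mil_23} (applied with the convex indicator of $\Ucal$ as ``$\bm\Psi$'') for the mean-squared-error and tail bounds---whereas you unpack these arguments from scratch. Your variational-inequality master estimate $\alpha\|\bm u_T^\ast-\bm u^\ast\|_2\le\big\|\tfrac1T\sum_t\bm v_t\big\|_2$ is precisely the inequality that underlies Milz's Theorem~3, and your uniform-convergence-plus-coercivity argument for \ref{thmasy1} is the standard content of the Shapiro--Dentcheva--Ruszczy\'nski result. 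For \ref{thmasy4NEW}--\ref{thmasy6NEW} your treatment coincides with the paper's (affinity of $q_{\bm U}$ in $\bm u$, the $\sqrt 2$ from the duplication-matrix operator norm, Jensen, and the bounds \eqref{boundsAbc}). You have correctly identified the only non-routine step: the dimension-free sub-Gaussian concentration for $\big\|\tfrac1T\sum_t\bm v_t\big\|_2$ under \eqref{assEM} with the exact constant $1/3$. The paper does not derive this either---it simply invokes \cite{mil_23}---so if you want the stated constants without reproving Milz's concentration lemma you will need the same citation (or an equivalent Pinelis-type inequality).
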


As an example of a closed convex subset of $\Ucal$ mentioned in Theorem~\ref{thmasy}\ref{thmasy6NEW}, consider the block parametrization
\[\bm U ^{\sy}_{\diag} \coloneq \begin{bmatrix}1 & \bm b ^{\top} \\ \bm b & \diag \bm c\end{bmatrix},
\]
for a $\R ^{m^\sy}$-vector $\bm c$.
This parametrization allows replacing the semidefinite constraint $\bm U\in\Dcal$, which may restrict $m^{\sy}\leq 100$, due to the quadratic growth of the number of parameters. We show below that the number of  quadratic constraints associated with the diagonal specification grows only linearly, which would allow to solve large problems, with essentially unrestricted $m^{\sy}$. The following result substantiates this claim and provides a constructive description of the quadratic constraints.
\begin{lemma}\label{lemdiagonal}
Parameter matrix $\bm U ^{\sy}_{\diag}\in \Sy _+^{m^{\sy}+1}$ if and only if $c_1,\ldots, c_{m^{\sy}}\geq 0$ and there are parameters $\tilde c _1,\ldots, \tilde c_{m^{\sy}}\geq 0$ such that $\sum _{i=1}^{m^{\sy}}\tilde c_i\leq 1$ and $b_i^2\leq c_i \tilde c_i$. The set $\Dcal ^{\sy}_{\diag}\coloneq\{\bm U ^{\sy}_{\diag} \in \Sy _+^{m^{\sy}+1}\}\subset \Dcal ^{\sy}$ is convex and closed.
\end{lemma}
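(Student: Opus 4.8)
The plan is to reduce the semidefiniteness of $\bm U^{\sy}_{\diag}$ to the nonnegativity of the quadratic form it induces, and then construct the auxiliary parameters $\tilde c_i$ explicitly. Writing a generic vector in $\R^{m^{\sy}+1}$ as $(t,\bm v)$ with $t\in\R$, $\bm v\in\R^{m^{\sy}}$, positive semidefiniteness of $\bm U^{\sy}_{\diag}$ is, by definition, equivalent to
\[ Q(t,\bm v)\coloneqq t^2 + 2t\,\bm b^\top\bm v + \sum_{i=1}^{m^{\sy}} c_i v_i^2 \ \ge\ 0 \qquad\text{for all }(t,\bm v).\]
I would then prove the two implications of the asserted equivalence separately, the point being that this direct route handles the degenerate entries $c_i=0$ cleanly, whereas the (equivalent) Schur-complement reformulation $\diag(\bm c)-\bm b\bm b^\top\succeq 0$ would force one to track ranges and pseudoinverses.

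For the ``if'' direction I would assume $c_i\ge 0$ and that there are $\tilde c_i\ge 0$ with $\sum_i\tilde c_i\le 1$ and $b_i^2\le c_i\tilde c_i$. Splitting $t^2=\bigl(1-\sum_i\tilde c_i\bigr)t^2+\sum_i\tilde c_i t^2$ and discarding the first, nonnegative term gives
\[ Q(t,\bm v)\ \ge\ \sum_{i=1}^{m^{\sy}}\bigl(\tilde c_i t^2 + 2t b_i v_i + c_i v_i^2\bigr),\]
and each summand is the quadratic form of the $2\times2$ matrix $\begin{bmatrix}\tilde c_i & b_i\\ b_i & c_i\end{bmatrix}$, which is positive semidefinite precisely because $\tilde c_i,c_i\ge 0$ and $\tilde c_i c_i-b_i^2\ge 0$. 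Hence $Q\ge 0$ and $\bm U^{\sy}_{\diag}\in\Sy^{m^{\sy}+1}_+$.

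For the ``only if'' direction I would assume $\bm U^{\sy}_{\diag}\in\Sy^{m^{\sy}+1}_+$. Nonnegativity of the $(i{+}1,i{+}1)$ diagonal entry gives $c_i\ge 0$, and nonnegativity of the $2\times2$ principal minor on rows and columns $\{1,i{+}1\}$, namely $\det\begin{bmatrix}1 & b_i\\ b_i & c_i\end{bmatrix}=c_i-b_i^2$, forces $b_i=0$ whenever $c_i=0$. Setting $I\coloneqq\{i:c_i>0\}$ and evaluating $Q$ at $t=1$, $v_i=-b_i/c_i$ for $i\in I$, $v_i=0$ otherwise, yields $0\le Q=1-\sum_{i\in I}b_i^2/c_i$. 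Taking $\tilde c_i\coloneqq b_i^2/c_i$ for $i\in I$ and $\tilde c_i\coloneqq 0$ for $i\notin I$ then gives $\tilde c_i\ge0$, $b_i^2=c_i\tilde c_i$ for every $i$ (using $b_i=0$ off $I$), and $\sum_i\tilde c_i=\sum_{i\in I}b_i^2/c_i\le 1$, which is the required certificate.

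Finally, for the topological claims I would note that $\Dcal^{\sy}_{\diag}$ is the image under the continuous affine map $(\bm b,\bm c)\mapsto\bm U^{\sy}_{\diag}$ of the intersection of the closed convex cone $\Sy^{m^{\sy}+1}_+$ with the affine subspace of matrices having this block-diagonal pattern and top-left entry $1$; equivalently it is cut out of that affine subspace by the PSD cone, hence closed and convex, and the inclusion $\Dcal^{\sy}_{\diag}\subset\Dcal^{\sy}$ is immediate since every such matrix is positive semidefinite with $(1,1)$-entry $1$. As a by-product the $\tilde c_i$-description certifies membership through the single linear inequality $\sum_i\tilde c_i\le 1$ together with the $m^{\sy}$ rotated second-order-cone constraints $b_i^2\le c_i\tilde c_i$, so the constraint count grows only linearly in $m^{\sy}$. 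I do not anticipate a real obstacle; the only step requiring a little care is the degenerate case $c_i=0$, which is why I would organise the argument around the quadratic form $Q$ rather than the Schur complement.
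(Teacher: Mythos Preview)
Your argument is correct and complete, but it takes a genuinely different route from the paper. The paper proves the equivalence via Sylvester's criterion for positive semidefiniteness: it checks that all principal minors of $\bm U^{\sy}_{\diag}$ have nonnegative determinant, computing the leading ones through the block-determinant formula $\bigl(1-\sum_{j\le l} b_j^2/c_j\bigr)\prod_{j\le l} c_j$ and bounding this below using $b_j^2/c_j\le\tilde c_j$; for the converse it reads off the constraints from the nonnegativity of those same minors. Your approach instead works directly with the quadratic form $Q(t,\bm v)$: for sufficiency you split $t^2$ according to the $\tilde c_i$'s and exhibit $Q$ as a nonnegative remainder plus a sum of $2\times2$ PSD forms, and for necessity you evaluate $Q$ at the minimising point $v_i=-b_i/c_i$ to obtain $\sum_{i\in I} b_i^2/c_i\le 1$ and then read off $\tilde c_i$ explicitly. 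Your route is more elementary (no determinants, no appeal to Sylvester) and, as you note, handles the degenerate case $c_i=0$ transparently, whereas the paper's formula $b_j^2/c_j$ tacitly assumes $c_j>0$ and leaves that case implicit. For closedness and convexity the paper argues via the $\tilde c$-description, identifying the constraints $b_i^2\le c_i\tilde c_i$ as rotated quadratic cones; your observation that $\Dcal^{\sy}_{\diag}$ is simply the intersection of the PSD cone with an affine subspace is shorter, while the paper's formulation has the practical payoff of exposing the second-order-cone structure used for implementation.
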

 
The next section presents a large-scale implementation of the COCO estimator.

\section{Empirical study}\label{sec_empiricsNEW}

This section empirically evaluates the COCO estimator. We first describe the data comprising  US stock returns (1962–2021), outline the model specifications, and then assess both statistical performance and asset pricing implications.

\subsection{Data and model specification}

We use unbalanced monthly stock data compiled by \citet{gukellyxiu20}, covering March 1957 to December 2021. This dataset includes approximately 30,000 stocks, with an average of 6,200 stocks per month. It also contains Treasury bill data for calculating monthly excess returns. The dataset comprises 94 stock-level characteristics (61 updated annually, 13 quarterly, and 20 monthly), 74 industry dummies based on the first two digits of Standard Industrial Classification (SIC) codes, and eight macroeconomic predictors from \citet{welchgoyal08}. We restrict the sample to data from 1962 onward, include only common stocks of corporations (sharecodes 10 and 11), and discard months where less than 30\% of the covariates are observed.

Figure \ref{fig:numstocks} displays the number of stocks per month (in blue) alongside the running average (in red). Early in the sample period, several months have fewer than thirty stock observations. The cross-sectional sample size peaks in the years leading up to 2000, with the running average stabilizing around 4,000 stocks per month toward the end of the sample.  

\begin{figure}
\begin{center}
 \includegraphics[scale=0.8]{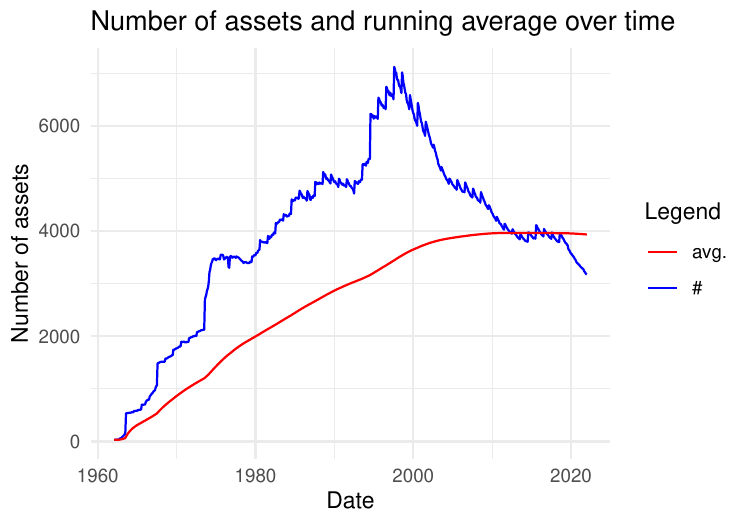}
 \end{center}
 \caption{\label{fig:numstocks}Size of cross section. The blue line shows the number of assets $N_t$ over time. The red line shows a running average. The sample consists of stock data compiled by \citet{gukellyxiu20}, covering the period from 1962 to 2021.}
\end{figure}

We specify the systematic RKHS $\Gsy$ using the cosine kernel \citep[see][]{schoelkopfsmola18} and  the Gaussian kernel \citep[see][]{rasmussen05} given by,\footnote{We also implemented Laplace and inverse multi-quadric kernels $k^{lap}(z,z')\coloneqq e^{-\|z-z'\|_2/{\rho_{lap}}}$, $k^{imq}(z,z')\coloneqq 1/\sqrt{\|z-z'\|_2^2+\rho _{imq}}$, which perform similarly. Results are available upon request.}
\[ k^{cos}(z,z')\coloneqq \frac{\langle z,z' \rangle _2}{\|z\|_2\, \|z'\|_2}, \quad k^{gauss}(z,z')\coloneqq e^{-\frac{\|z-z'\|^2_2}{2 \,\rho_{gauss}}}.\]
The cosine (or correlation) kernel is a finite-dimensional quasi-linear kernel with no hyperparameters, while the Gaussian kernel is non-linear, generating an infinite-dimensional space of smooth, rapidly decaying functions and includes a length-scale hyperparameter, $\rho_{gauss} > 0$. We specify the idiosyncratic RKHS $\Gid$ using the simplest configuration, with dimension $m^\id = 1$, as outlined in Example~\ref{ex:simplediagonal}.

For the systematic component, we adopt the low-rank framework introduced in Section~\ref{sec_jointestim}, using ranks \(m \coloneqq m^{\sy} = 5,\, 10,\, 20,\, 40\). For simplicity, we set both regularization parameters to the boundary values, \(\lambda^\sy = \lambda^\id = 0\), in the implementation. Although this choice lies outside the assumptions of the Representer Theorem \ref{thmreprNEW}, which formally requires positive regularization, it remains well defined in our setting because the low-rank approximation~\eqref{hlowrank0} implicitly regularizes the solution by restricting it to a subspace of fixed dimension \(m\). As a result, the setup based on the cosine kernel involves no tunable hyperparameters, whereas the Gaussian kernel requires validation of a single length-scale parameter.

To this end, we use the statistical scoring rule \(\Scal: \mathbb{R}^n \times \mathbb{R}^n \times \mathbb{S}^{n}_{++} \to \mathbb{R}\) proposed by \citet{dawidsebastiani99},
\begin{equation}\label{eq_scoringloss}
\Scal(\bm{x}, \bm{\mu}, \bm{\Sigma}) \coloneqq \log \det \bm{\Sigma} + (\bm{x} - \bm{\mu})^{\top} \bm{\Sigma}^{-1} (\bm{x} - \bm{\mu}),
\end{equation}
which we evaluate for each validation month \( t+1 \) using observed returns \( \bm{x}_{t+1} \) and the COCO estimates in \eqref{eqmutSigmatest} at time $t$. To compute \( \bm{\Sigma}_t^{-1} \), we use the Woodbury formula,
\[
\bm{\Sigma}_t^{-1} = \frac{1}{u^{\id}} \bigg[ \bm{I}_{N_t} - \bm{\phi}^\sy(\bm{z}_t) \bigg( (\bm{V} - \bm{b} \bm{b}^\top)^{-1} + \frac{\bm{\phi}^\sy(\bm{z}_t)^\top \bm{\phi}^\sy(\bm{z}_t)}{u^{\id}} \bigg)^{-1} \bm{\phi}^\sy(\bm{z}_t)^\top \frac{1}{u^{\id}} \bigg],
\]
exploiting the fact that \(\bm{\Sigma}_t^\id\) is diagonal and full-rank for \( u^{\id} > 0 \). Additionally, we efficiently compute the determinant in \eqref{eq_scoringloss} using the formula by \citet{sylvester51},
\[
\det \bm \Sigma_t = (u^\id)^{N_t - m} \det \big(\bm I_m u^\id + \bm \phi^\sy(\bm z_t)^\top \bm \phi^\sy(\bm z_t)(\bm V - \bm b \bm b^\top)\big).
\]
We solve the semidefinite convex problem \eqref{krr1convex} using \citet{mosek}, constraining the smallest eigenvalue of $\bm U^{\sy}$ to be greater than $1e-3$, and $u^{\id}\geq 1e-6$, respectively.

For model evaluation, we use an eight-year training window (96 months) and one month for validation, with all out-of-sample tests conducted on the first month following the validation month. Leveraging the high computational efficiency of the procedure in Section~\ref{sec_condmeanvarNEW}, we roll the training, validation, and test windows forward each month, iteratively repeating the training, validation, and testing steps.

\subsection{Statistical performance }\label{sec_predictiveNEW}

We assess the statistical performance of the COCO model by comparing it to a simple benchmark model, as no established benchmark exists for jointly estimating conditional first and second moments for unbalanced panels. Our benchmark is a purely idiosyncratic model with zero mean and constant covariance, defined through
\begin{equation}\label{eq_gkx}
 \bm\mu_t^{\text{bm}} \coloneqq   \bm 0, \quad \bm\Sigma_t^{\text{bm}} \coloneqq \sigma^2_{\text{bm}} \bm I_{N_t},
\end{equation}
with a single parameter, \(\sigma^2_{\text{bm}}\), to be estimated. This model is nested within \eqref{qcondefNEW}, with  zero systematic component (\( h^\sy = 0 \)) and a one-dimensional idiosyncratic specification (\( m^{\id} = 1 \)). The minimizer of \eqref{krr1old} for this idiosyncratic specification \eqref{eq_gkx} has closed-form solution,
\[
 \sigma^2_{\text{bm}} = \frac{\sum_{t=0}^{T-1}  w(N_t) \|\bm x_{t+1}\|_2^2  }{\sum_{t=0}^{T-1}  w(N_t) N_t }.
\]

We  compare out-of-sample realizations to conditional moments, evaluating first and second moments separately before jointly assessing them with the scoring rule \eqref{eq_scoringloss}. For first moments, we use the predictive out-of-sample $R$-squared measure,
\begin{equation}\label{eq_kellyoos}
 R^{2}_{t,T,\text{OOS}} \coloneqq 1 - \frac{\sum_{s=t}^{T-1} w(N_s) \| \bm x_{s+1} - \bm\phi^\sy(\bm z_s) \muf \|_2^2}{\sum_{s=t}^{T-1} w(N_s) \| \bm x_{s+1} \|_2^2},
\end{equation} 
where both numerator and denominator are weighted by \( w(N_s) \) as in the first-moment error component of \eqref{lossdef}. 
For second moments, we compute an out-of-sample predictive $R$-squared measure as,
\begin{equation}\label{eq_kellysecmomoos}
 R^{2,2}_{t,T,\text{OOS}} \coloneqq 1 - \frac{\sum_{s=t}^{T-1} w(N_s) \| \bm x_{s+1} \bm x_{s+1}^\top - \bm\phi^\sy(\bm z_s) \bm V \bm\phi^\sy(\bm z_s)^\top - u^{\id} \bm I_{N_s} \|_F^2}{\sum_{s=t}^{T-1} w(N_s) \| \bm x_{s+1} \bm x_{s+1}^\top - \sigma^2_{\text{bm}} \bm I_{N_s} \|_F^2},
\end{equation}
aligning with the second-moment error component in \eqref{lossdef}. Note that the parameters $\bm b$, $\bm V$, $u^{\id}$, $\sigma^2_{\text{bm}}$, and the feature maps $\bm\phi^\sy$ in \eqref{eq_kellyoos} and \eqref{eq_kellysecmomoos} in fact vary with $s$, as they are re-estimated and updated with each rolling training and validation window.

Figure \ref{fig_r2} shows  out-of-sample \( R^2 \) over time. The top row displays a rolling estimates over \( \rollnum \) months, while the bottom row shows expanding estimates. Results indicate high persistence, with  slightly positive \( R^2 \) on average, as seen in the expanding averages. Higher-\( m \) specifications tend to perform slightly worse than lower-\( m \) ones. Figure \ref{fig_r2} also highlights four major stock market crashes (defined by \citet{adr_nat_qur_23} from pre-crash peak to post-crash trough): the 1987 Crash (08/1987–12/1987), the Dot-Com Bubble (03/2000–10/2002), the Global Financial Crisis (10/2007–03/2009), and the COVID-19 Pandemic (02/2020–03/2020). No clear pattern is observed in \( R^2 \) across these crashes, with positive \( R^2 \) during the first two and negative during the last two.

\begin{figure}
\centering
\begin{subfigure}[c]{0.46\textwidth}
\includegraphics[scale=0.75]{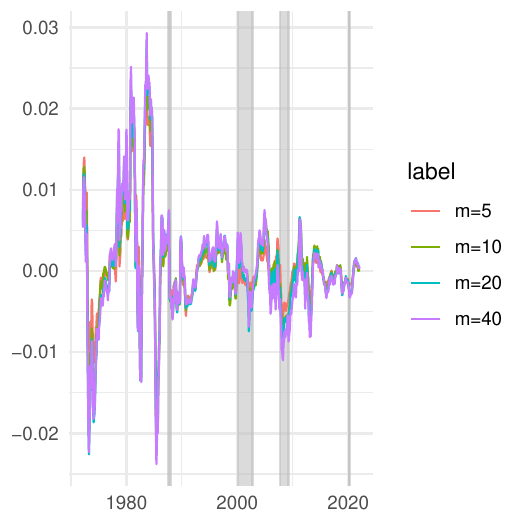}
\caption{\label{fig:r2:cosine}Cosine kernel}
\end{subfigure}
\begin{subfigure}[c]{0.46\textwidth}
 \includegraphics[scale=0.75]{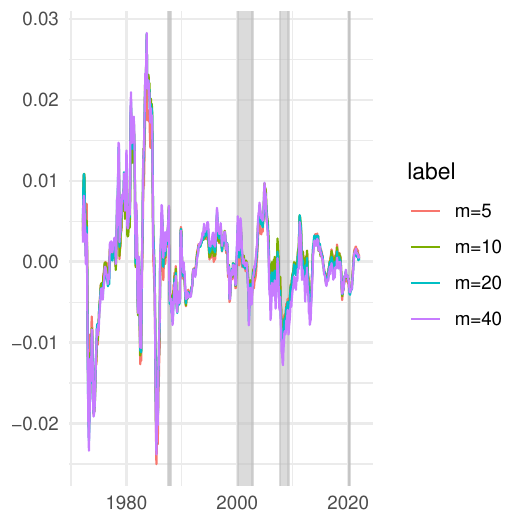}
 \caption{\label{fig:r2:gauss}Gauss kernel}
 \end{subfigure}\\
 \begin{subfigure}[c]{0.46\textwidth}
 \includegraphics[scale=0.75]{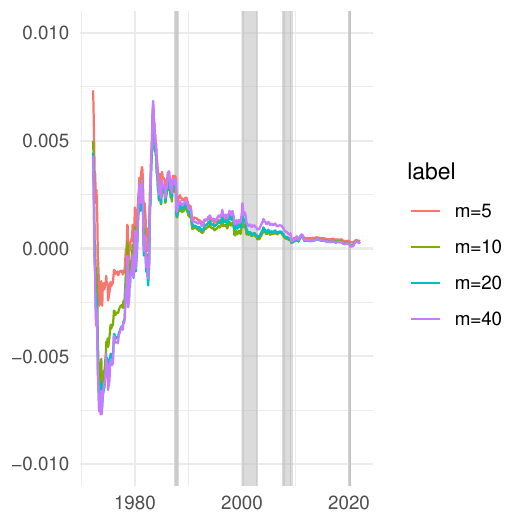}
  \caption{\label{fig:r2:cosineexp}Cosine kernel (expanding)}
 \end{subfigure}
 \begin{subfigure}[c]{0.46\textwidth}
 \includegraphics[scale=0.75]{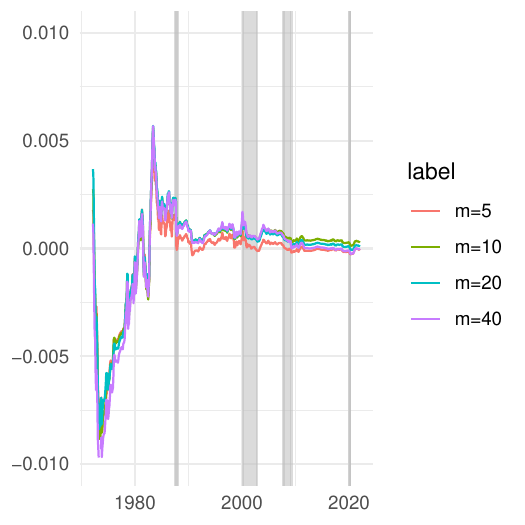}
   \caption{\label{fig:r2:gaussexp}Gauss kernel (expanding)}
 \end{subfigure}
 \caption{\label{fig_r2}Out-of-sample predictive \( R^{2} \) performance. The panels display rolling \( R^{2}_{t-r,t,\text{OOS}} \) (over \(\rolling = \rollnum\) months), and expanding \( R^{2}_{0,t,\text{OOS}} \) as defined in \eqref{eq_kellyoos}, using the COCO model with \( m = 5,\, 10,\, 20,\, 40 \) systematic factors. The analysis is based on unbalanced US common stock excess returns and associated covariates from 1962 to 2021. Shaded areas indicate major market crashes: the 1987 Crash, the Dot-Com Bubble, the Global Financial Crisis, and the COVID-19 Pandemic.}
\end{figure}

Figure \ref{fig_r2r2} presents the corresponding out-of-sample \( R^{2,2} \) over time, showing strong persistence with higher-\( m \) specifications outperforming lower-\( m \) ones across both kernels. The idiosyncratic specification performs better leading up to the Global Financial Crisis, after which the systematic specification shows marked improvement. Positive \( R^{2,2} \) is observed during the 1987 Crash, the Dot-Com Bubble, and the COVID-19 Pandemic, with an overall positive \( R^{2,2} \) across the sample period. Specifications with higher $m$ perform better on average.

\begin{figure}
\centering
\begin{subfigure}[c]{0.46\textwidth}
\includegraphics[scale=0.75]{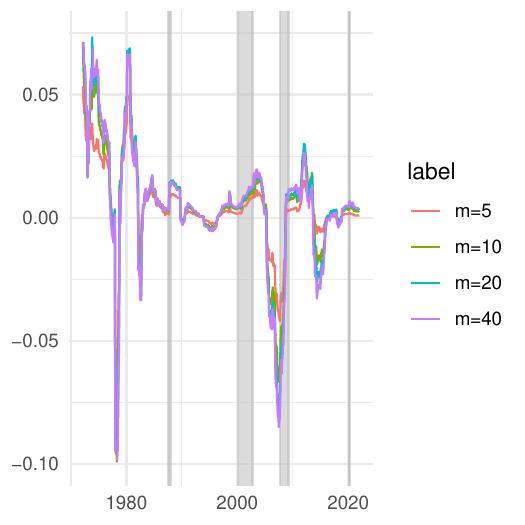}
\caption{\label{fig:r2r2:cosine}Cosine kernel}
\end{subfigure}
\begin{subfigure}[c]{0.46\textwidth}
 \includegraphics[scale=0.75]{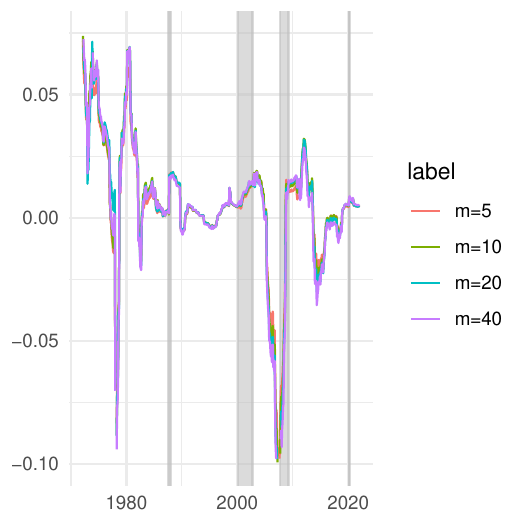}
 \caption{\label{fig:r2r2:gauss}Gauss kernel}
 \end{subfigure}\\
 \begin{subfigure}[c]{0.46\textwidth}
 \includegraphics[scale=0.75]{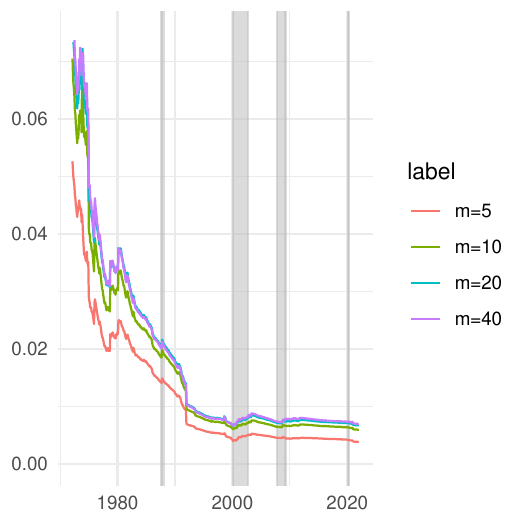}
  \caption{\label{fig:r2r2:cosineexp}Cosine kernel (expanding)}
 \end{subfigure}
 \begin{subfigure}[c]{0.46\textwidth}
 \includegraphics[scale=0.75]{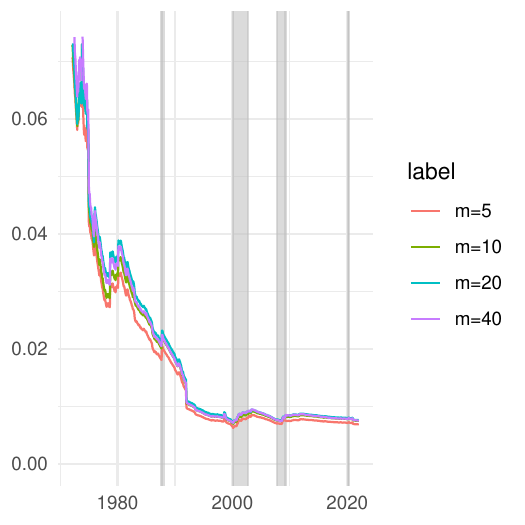}
   \caption{\label{fig:r2r2:gaussexp}Gauss kernel (expanding)}
 \end{subfigure}
 \caption{\label{fig_r2r2}Out-of-sample predictive \( R^{2,2} \) performance. The panels display the rolling \( R^{2,2}_{t-r,t,\text{OOS}} \) (over \(\rolling = \rollnum\) months) and expanding \( R^{2,2}_{0,t,\text{OOS}} \) as defined in \eqref{eq_kellysecmomoos}, using the COCO model with \( m = 5,\, 10,\, 20,\, 40 \) systematic factors. The analysis is based on unbalanced US common stock excess returns and associated covariates from 1962 to 2021. Shaded areas indicate major market crashes: the 1987 Crash, the Dot-Com Bubble, the Global Financial Crisis, and the COVID-19 Pandemic.}
\end{figure}

The COCO model jointly estimates first and second moments. To evaluate the joint fit across both moments, we use the scoring rule \(\Scal\) defined in \eqref{eq_scoringloss} from \citet{dawidsebastiani99}, which is designed for this purpose. To assess the added value of the systematic specification over a purely idiosyncratic model, we define the scoring loss differential as
\begin{equation}\label{eqRcaltT}
 \Scal_{t,T,\text{OOS}} \coloneqq \frac{1}{T - t} \sum_{s=t}^{T-1} \big(  \Scal(\bm x_{s+1}, \bm 0, \sigma^2_{\text{bm}} \bm I_{N_s})-\Scal(\bm x_{s+1}, \bm \mu_s, \bm \Sigma_s)  \big).
\end{equation}
Figure \ref{fig_scoring} shows that our model incorporating both systematic and idiosyncratic risk consistently outperforms the purely idiosyncratic benchmark \eqref{eq_gkx} over most periods (a higher score differential is better). Differences between higher-\( m \) and lower-\( m \) specifications are substantial, with the higher $m$ specifications performing better. The COCO estimator outperforms the idiosyncratic model uniformly across all data points,  underscoring that there is  statistical value in the specification. Expanding-window results further validate this, confirming the uniform preference for the full model across time and kernel specifications. Notably, the scoring rule differential between the full and idiosyncratic models is especially pronounced during the four crisis periods.

\begin{figure}
\centering
\begin{subfigure}[c]{0.46\textwidth}
\includegraphics[scale=0.75]{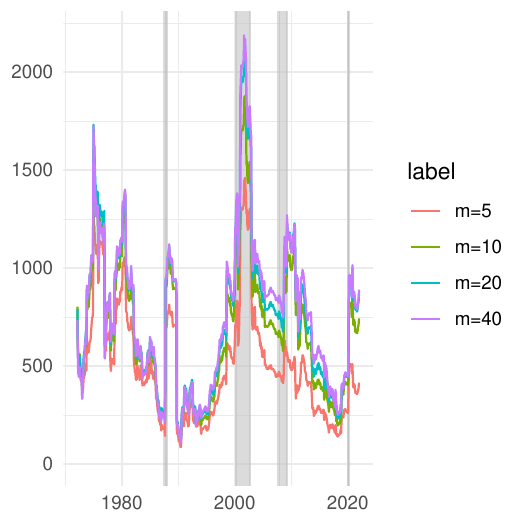}
\caption{\label{fig:scoring_cos}Cosine kernel}
\end{subfigure}
\begin{subfigure}[c]{0.46\textwidth}
 \includegraphics[scale=0.75]{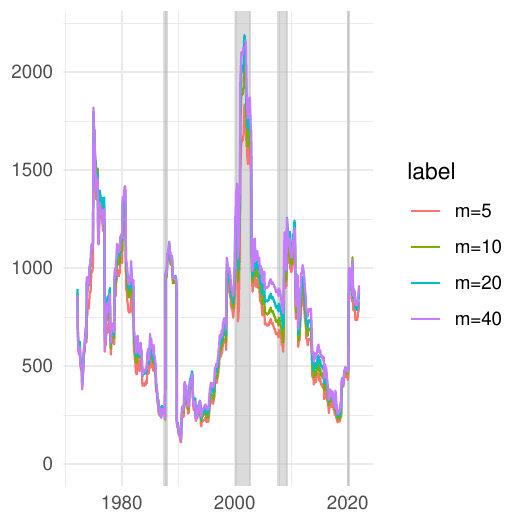}
 \caption{Gauss kernel}
 \end{subfigure}\\
 \begin{subfigure}[c]{0.46\textwidth}
 \includegraphics[scale=0.75]{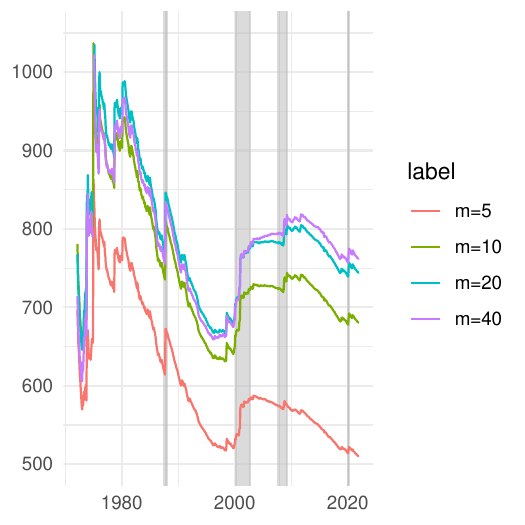}
  \caption{Cosine kernel (expanding)}
 \end{subfigure}
 \begin{subfigure}[c]{0.46\textwidth}
 \includegraphics[scale=0.75]{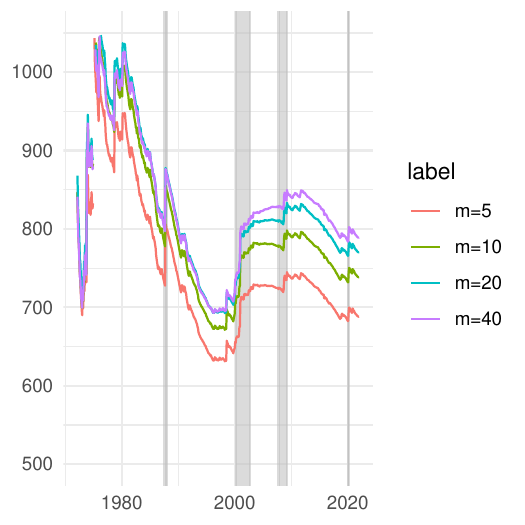}
   \caption{Gauss (expanding)}
 \end{subfigure}
 \caption{\label{fig_scoring}Out-of-sample scoring loss differential performance. The panels display the rolling \( \Scal_{t-r,t,\text{OOS}} \) (over \(\rolling = \rollnum\) months) and expanding \( \Scal_{0,t,\text{OOS}} \) from \eqref{eqRcaltT}, using the COCO model with \( m = 5,\, 10,\, 20,\, 40 \) systematic factors. The analysis is based on unbalanced US common stock excess returns and associated covariates from 1962 to 2021. Shaded areas indicate major market crashes: the 1987 Crash, the Dot-Com Bubble, the Global Financial Crisis, and the COVID-19 Pandemic.}
\end{figure}

Following up on the factor representation discussed after \eqref{eqmutSigmatest}, where the systematic risk factors \(\bm{g}_{t+1}\) are typically latent, we apply Theorem \ref{thmfacrepr}\ref{thmfacrepr3} to derive the portfolio factor representation \(\bm{x}_{t+1} = \bm{\phi}^\sy(\bm{z}_t) \bm{f}_{t+1} + \bm{\epsilon}_{t+1}\). The portfolio factors, defined as \(\bm{f}_{t+1} = \bm{\phi}^\sy(\bm{z}_t)^+ \bm{x}_{t+1}\), serve as proxies for the systematic risk factors. The conditionally uncorrelated residuals are implicitly defined by \(\bm{\epsilon}_{t+1} = \bm{x}_{t+1} - \bm{\phi}^\sy(\bm{z}_t) \bm{f}_{t+1}\). We evaluate the explanatory power of these observable factors through the explained cross-sectional variation, measured by the total \( R^2 \),

\begin{equation}\label{eq_totalrsqr}
 R^{2,\bm f}_{t,T,\text{OOS} } \coloneqq 1 - \frac{1}{T-t} \sum_{s=t}^{T-1}\frac{ \| \bm x_{s+1} - \bm\phi^\sy(\bm z_s) \bm f_{s+1}\|^2_2}{\| \bm x_{s+1}\|_2^2}.
\end{equation}

For the same reasons outlined below \eqref{eq_kellysecmomoos}, the feature maps $\bm\phi^\sy$ in \eqref{eq_totalrsqr} also vary with~$s$. Figure~\ref{fig_totalr2} presents the out-of-sample total \( R^2 \) over time, which is significantly positive,  maintaining a running average of up to 15\% for the $m=40$ specifications, and 10\% for $m=5$. Total \( R^2 \) is monotonically increasing in the number of factors. The explained variation is particularly elevated during market crashes, aligning with the scoring rules and underscoring the significance of the systematic components during periods of market turbulence.

\begin{figure}
\centering
\begin{subfigure}[c]{0.46\textwidth}
\includegraphics[scale=0.75]{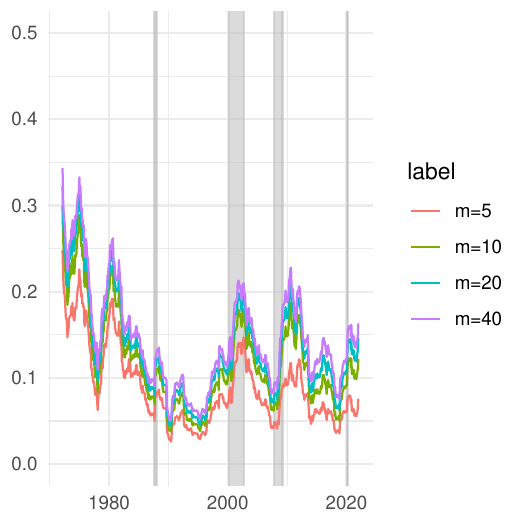}
\caption{\label{fig:totalr2:cosine}Cosine kernel}
\end{subfigure}
\begin{subfigure}[c]{0.46\textwidth}
 \includegraphics[scale=0.75]{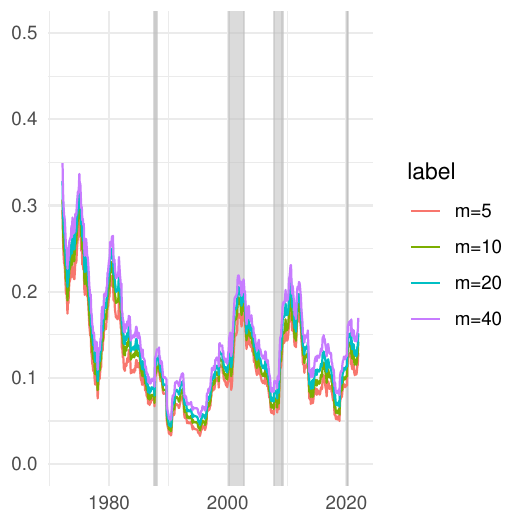}
 \caption{\label{fig:totalr2:gauss}Gauss kernel}
 \end{subfigure}\\
 \begin{subfigure}[c]{0.46\textwidth}
\includegraphics[scale=0.75]{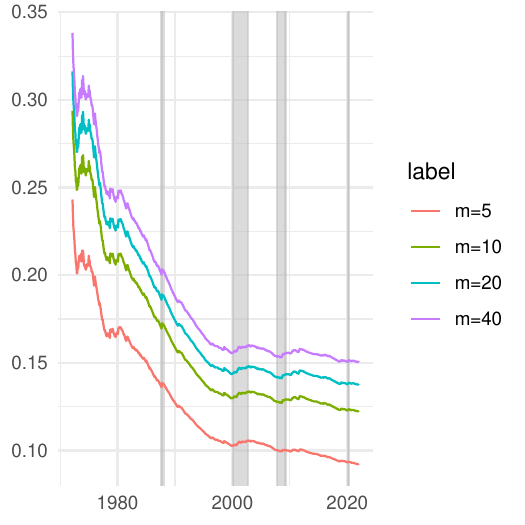}
\caption{\label{fig:totalr2:cosineexp}Cosine kernel (expanding)}
\end{subfigure}
\begin{subfigure}[c]{0.46\textwidth}
 \includegraphics[scale=0.75]{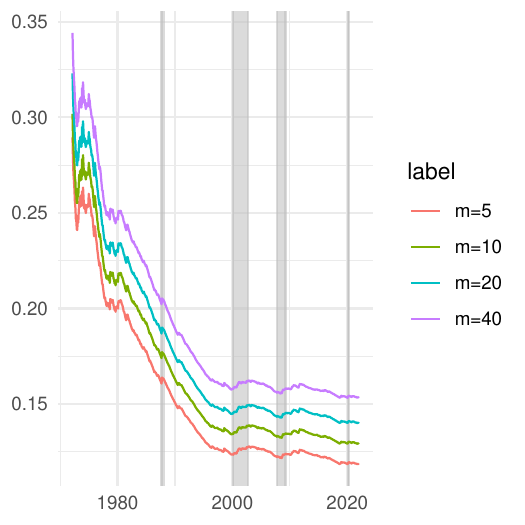}
 \caption{\label{fig:totalr2:gaussexp}Gauss kernel (expanding)}
 \end{subfigure}
 \caption{\label{fig_totalr2}Out-of-sample explained variation by portfolio factors. The panels display the rolling \( R^{2,\bm f}_{t-r,t,\text{OOS} }\) (over \(\rolling = \rollnum\) months) and expanding \( R^{2,\bm f}_{0,t,\text{OOS} }\)  as defined in \eqref{eq_totalrsqr}, using the COCO model with \( m = 5,\, 10,\, 20,\, 40 \) systematic factors. The analysis is based on unbalanced US common stock excess returns and associated covariates from 1962 to 2021. Shaded areas indicate major market crashes: the 1987 Crash, the Dot-Com Bubble, the Global Financial Crisis, and the COVID-19 Pandemic.}
\end{figure}

To contrast total \( R^2 \) with the contributions of systematic and idiosyncratic components to the conditional covariance estimates,  we compute the ratio 
\begin{equation}\label{defrhort}
    \rho_t^{\bm f}\coloneqq \frac{\tr (\bm\phi^\sy(\bm z_t)\cov_t[\bm f_{t+1}]\bm\phi^\sy(\bm z_t)^\top) }{\tr \bm\Sigma_t},
\end{equation}
which quantifies the proportion of factor-related variance relative to total variance in our model. This ratio provides a direct measure of the predicted explained variation attributable to the systematic factors. Figure~\ref{fig_eigdist} reveals that the systematic component was most pronounced early in the sample period, initially exceeding 40\% and then decreasing to a running average below 25\%, which is consistent with Figure~\ref{fig_totalr2}. Hence the idiosyncratic risk explains, on average, more than 75\% of the cross-sectional variance.\footnote{Strictly speaking, the contribution of the systematic component to the total cross-sectional conditional variance is given by the ratio \(\frac{\tr \bm{\Sigma}_t^\sy}{\tr \bm{\Sigma}_t}\), which is smaller than \(\rho_t^{\bm f}\). However, the difference is negligible in this empirical study. Indeed, based on the general result in \cite[Lemma 6.2]{filipovicschneider24}, one can infer the bounds $\frac{\tr \bm{\Sigma}_t^\sy}{\tr \bm{\Sigma}_t} \le \rho_t^{\bm{f}} \le \frac{\tr \bm{\Sigma}_t^\sy}{\tr \bm{\Sigma}_t} + \frac{m}{N_t}$.} Similar to the observed explained variation by the portfolio factors, the systematic component is monotonically increasing in the number of factors, and intensifies during market crashes, underscoring its importance during such periods.

 \begin{figure}
\centering
\begin{subfigure}[c]{0.46\textwidth}
\includegraphics[scale=0.75]{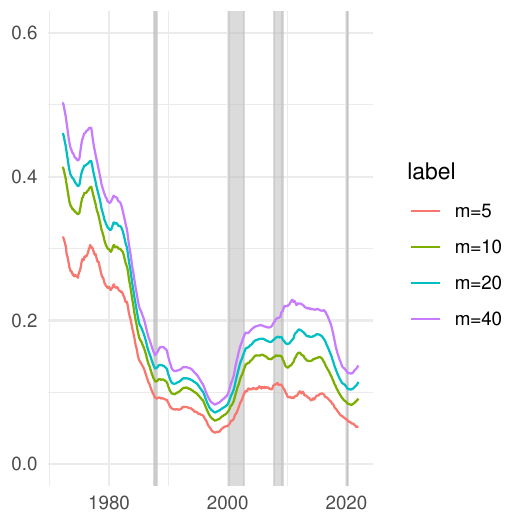}
\caption{Cosine kernel}
\end{subfigure}
\begin{subfigure}[c]{0.46\textwidth}
 \includegraphics[scale=0.75]{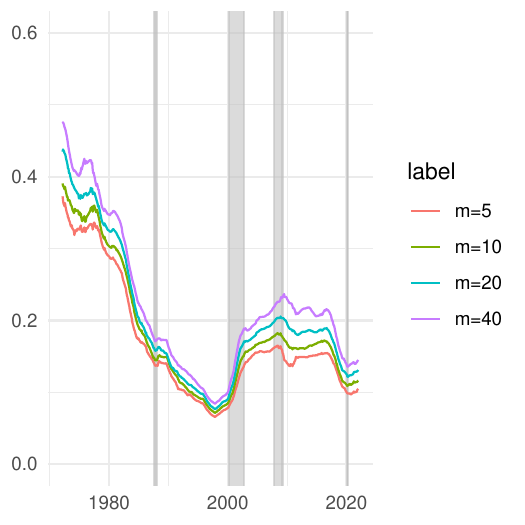}
 \caption{Gauss kernel}
 \end{subfigure} \\
 \begin{subfigure}[c]{0.46\textwidth}
\includegraphics[scale=0.75]{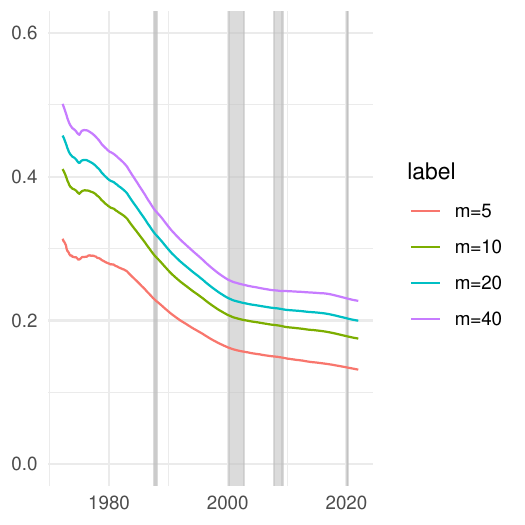}
\caption{Cosine kernel (expanding)}
\end{subfigure}
\begin{subfigure}[c]{0.46\textwidth}
 \includegraphics[scale=0.75]{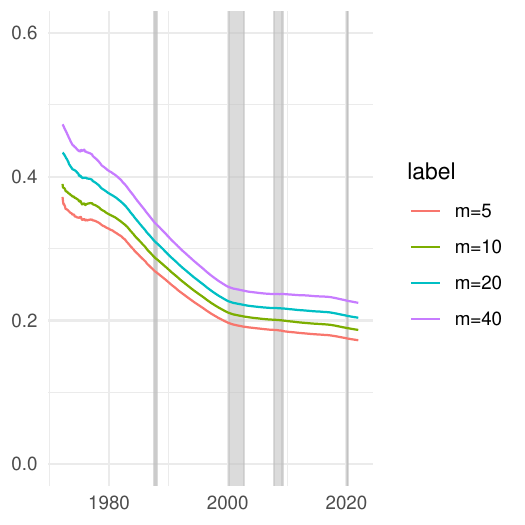}
 \caption{Gauss kernel (expanding)}
 \end{subfigure}
 \caption{\label{fig_eigdist}Systematic and idiosyncratic risks. The panels show the rolling (over $\rolling=\rollnum$ months) and expanding average of the ratio $\rho_t^{\bm f}$ as defined in \eqref{defrhort}, representing the proportion of factor-explained to total variance as a measure of idiosyncratic risk, as calculated using the COCO model with \( m = 5,\, 10,\, 20,\, 40 \) systematic factors. The analysis is based on unbalanced US common stock excess returns and associated covariates from 1962 to 2021. Shaded areas indicate major market crashes: the 1987 Crash, the Dot-Com Bubble, the Global Financial Crisis, and the COVID-19 Pandemic.}
\end{figure}

\subsection{Asset pricing implications}\label{sec_assetpricing_empirical}

Given the strong statistical performance of the COCO estimator, we next assess its effectiveness in an out-of-sample portfolio setting. The literature has documented that mean-variance efficient portfolios often underperform out-of-sample, especially when compared to the naive \(1/N_t\) portfolio rule  \citep{basakjagannathanma09}. Such underperformance is frequently attributed to inaccuracies in estimated moments. Here, we revisit the conditional cMVE portfolio problem, utilizing conditional means and covariances from the COCO estimator, as detailed following~\eqref{eqmutSigmatest}. All Sharpe ratios reported below are annualized to facilitate comparison with existing literature.

Figure \ref{fig_hj} displays time series of the predicted maximum Sharpe ratios, calculated in annualized terms as \( \sqrt{12}\times\sqrt{\bm\mu_{t}^\top \bm\Sigma_{t}^+ \bm\mu_{t}} \), for the two kernels considered in this study. For our model specification, which accounts for both systematic and idiosyncratic risk, the predicted maximum Sharpe ratio is generally positive, whereas the purely idiosyncratic benchmark model in \eqref{eq_gkx} predicts it to be zero, as the conditional mean is identically zero in that case. Both panels in Figure \ref{fig_hj} show a natural ordering, with higher values of \( m \) yielding higher Sharpe ratios across all data points. The levels generated by the two kernels are substantially different, with the Gauss kernel achieving peaks of a predicted maximum Sharpe ratio of up to eight for \( m=40 \). For \( m=5 \), the smallest predicted maximum Sharpe ratio averages above one for both kernel specifications. Notably, there are no distinct patterns observed during major market crashes. We next analyze how these predicted Sharpe ratios translate into realized Sharpe ratios.

\begin{figure}
\centering
\begin{subfigure}[c]{0.46\textwidth}
\includegraphics[scale=0.75]{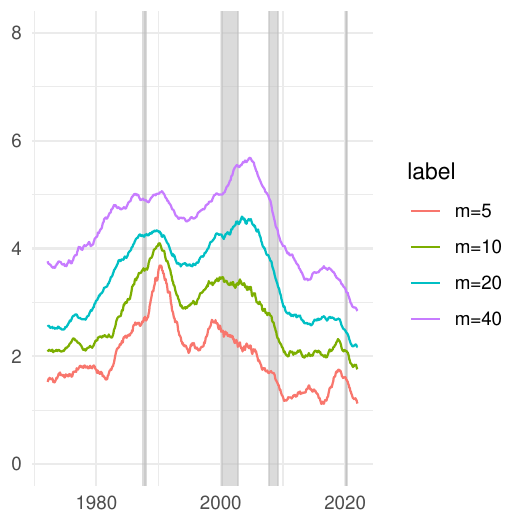}
\caption{Cosine kernel}
\end{subfigure}
\begin{subfigure}[c]{0.46\textwidth}
 \includegraphics[scale=0.75]{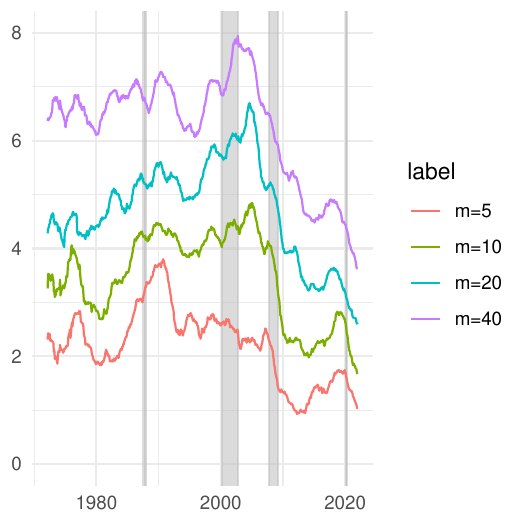}
 \caption{Gauss kernel}
 \end{subfigure}
 \caption{\label{fig_hj}Predicted maximum Sharpe ratios. The panels show the rolling average (over $\rolling=\rollnum$  months) of annualized predicted maximum Sharpe ratios based on monthly returns, calculated using the COCO model with \( m = 5,\, 10,\, 20,\, 40 \) systematic factors. The analysis is based on unbalanced US common stock excess returns and associated covariates from 1962 to 2021. Shaded areas indicate major market crashes: the 1987 Crash, the Dot-Com Bubble, the Global Financial Crisis, and the COVID-19 Pandemic.}
\end{figure}

In Figure \ref{fig_sharperatios}, we plot rolling estimates of out-of-sample realized Sharpe ratios from the cMVE portfolios with monthly excess returns \( \bm\mu_{t}^\top \bm\Sigma_{t}^+ \bm x_{t+1} \) over time, alongside the Sharpe ratios from \( 1/N_t \) portfolios. Higher values of \( m \) tend to generate higher realized Sharpe ratios, with peaks exceeding five for both kernels. Remarkably, these realized Sharpe ratios remain high despite only modest evidence of predictability in the first and second moments, underscoring the quality of the joint moment estimates, as also suggested by the scoring rule results in Figure \ref{fig_scoring}. The \( 1/N_t \) portfolios generally exhibit lower Sharpe ratios, showing little correlation with those implied by the cMVE portfolios. The bottom row of Figure \ref{fig_sharperatios} shows Sharpe ratios estimated over expanding windows, consistently positive toward the end of the sample, with some values exceeding two for both kernels. While it remains challenging to pinpoint an optimal number of factors, both kernels perform least well with \( m=5 \), which nonetheless outperforms the \( 1/N_t \) portfolio. The largest Sharpe ratio declines occur during the Global Financial Crisis.

\begin{figure}
\centering
\begin{subfigure}[c]{0.46\textwidth}
\includegraphics[scale=0.75]{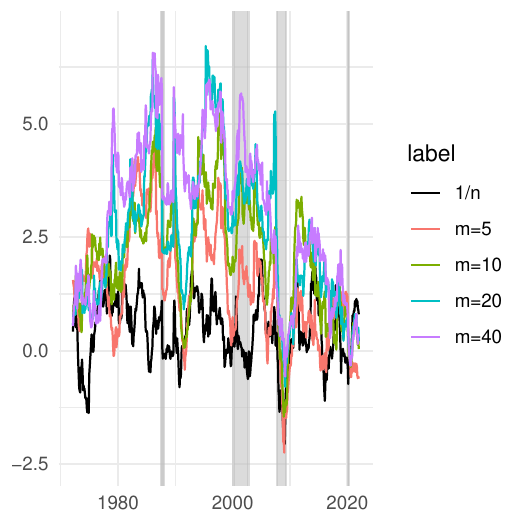}
\caption{\label{fig:sharperatios:cos}Cosine kernel}
\end{subfigure}
\begin{subfigure}[c]{0.46\textwidth}
 \includegraphics[scale=0.75]{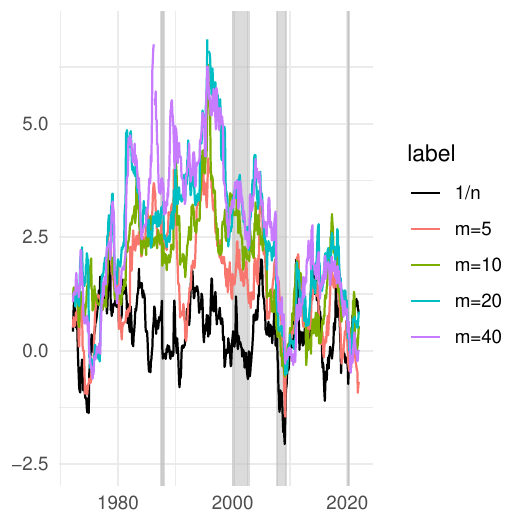}
 \caption{Gauss kernel}
 \end{subfigure}\\
 \begin{subfigure}[c]{0.46\textwidth}
 \includegraphics[scale=0.75]{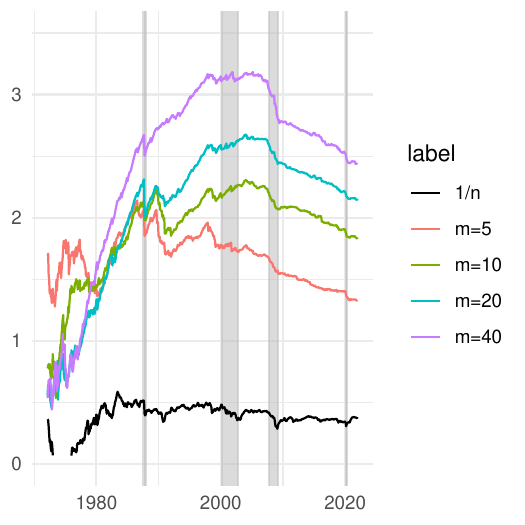}
  \caption{Cosine kernel (expanding)}
 \end{subfigure}
 \begin{subfigure}[c]{0.46\textwidth}
 \includegraphics[scale=0.75]{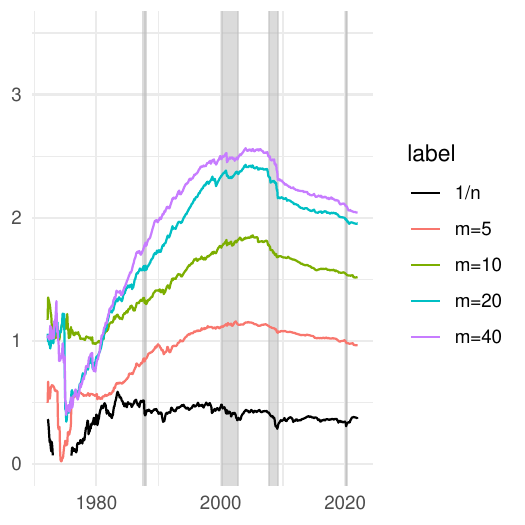}
   \caption{Gauss kernel (expanding)}
 \end{subfigure}
 \caption{\label{fig_sharperatios}Out-of-sample realized Sharpe ratios. The panels show the rolling (over $\rolling=\rollnum$  months) and expanding estimates of the annualized out-of-sample Sharpe ratio of the cMVE portfolio, calculated using the COCO model with \( m = 5,\, 10,\, 20,\, 40 \) systematic factors. The analysis is based on unbalanced US common stock excess returns and associated covariates from 1962 to 2021. Shaded areas indicate major market crashes: the 1987 Crash, the Dot-Com Bubble, the Global Financial Crisis, and the COVID-19 Pandemic.}
\end{figure}

Lastly, we examine the relationship between the cMVE portfolio excess returns \( \bm\mu_{t}^\top \bm\Sigma_{t}^+ \bm x_{t+1} \) and conventional asset pricing factors, specifically the five factors from \citet{famafrench15}, which account for market portfolio exposure, size, value, profitability, and investment patterns. We perform time-series regressions of the out-of-sample cMVE portfolio excess returns on these Fama--French five factors. Table~\ref{tab_ff51} shows significant intercepts for both kernels across \( m=5,\, 10,\, 20,\, 40 \). The market portfolio loads significantly on the cosine kernel but less so on the Gaussian kernel. Other factors, except for HML, are significant across both kernels, though less so as \( m \) increases. Higher values of \( m \) consistently reduce the Fama--French factors' explanatory power for the cMVE portfolio, with adjusted \( R^2 \) values dropping to between 1\% and 5\%.

\begin{table}
 \begin{center}
 \begin{subtable}[c]{0.7\textwidth}

\begin{tabular}{l D{.}{.}{2.5} D{.}{.}{2.5} D{.}{.}{2.5} D{.}{.}{2.5}}
\toprule
 & \multicolumn{1}{c}{m=5} & \multicolumn{1}{c}{m=10} & \multicolumn{1}{c}{m=20} & \multicolumn{1}{c}{m=40} \\
\midrule
(Intercept) & 0.13^{***} & 0.30^{***} & 0.47^{***}  & 0.75^{***} \\
            & (0.02)     & (0.03)     & (0.04)      & (0.05)     \\
Mkt         & 3.58^{***} & 4.00^{***} & 5.25^{***}  & 6.02^{***} \\
            & (0.47)     & (0.69)     & (0.90)      & (1.23)     \\
SMB         & 1.42^{*}   & 3.49^{***} & 4.80^{***}  & 3.44       \\
            & (0.68)     & (1.01)     & (1.31)      & (1.80)     \\
HML         & 2.61^{**}  & 0.91       & 0.13        & 0.30       \\
            & (0.86)     & (1.27)     & (1.65)      & (2.26)     \\
RMW         & 2.89^{**}  & 3.10^{*}   & 6.19^{***}  & 5.34^{*}   \\
            & (0.91)     & (1.35)     & (1.76)      & (2.40)     \\
CMA         & 5.91^{***} & 8.89^{***} & 11.55^{***} & 11.16^{**} \\
            & (1.38)     & (2.03)     & (2.64)      & (3.62)     \\
\midrule
Adj. R$^2$  & 0.16       & 0.10       & 0.10        & 0.05       \\
\bottomrule
\multicolumn{5}{l}{\scriptsize{$^{***}p<0.001$; $^{**}p<0.01$; $^{*}p<0.05$}}
\end{tabular}

\caption{\label{tab:cosff5}Cosine kernel}
\end{subtable}
\begin{subtable}[c]{0.7\textwidth}

\begin{tabular}{l D{.}{.}{2.5} D{.}{.}{2.5} D{.}{.}{2.5} D{.}{.}{2.5}}
\toprule
 & \multicolumn{1}{c}{m=5} & \multicolumn{1}{c}{m=10} & \multicolumn{1}{c}{m=20} & \multicolumn{1}{c}{m=40} \\
\midrule
(Intercept) & 0.10^{***} & 0.29^{***}  & 0.48^{***}  & 0.75^{***} \\
            & (0.02)     & (0.03)      & (0.04)      & (0.07)     \\
Mkt         & 3.00^{***} & 2.58^{***}  & 2.46^{**}   & 3.28^{*}   \\
            & (0.37)     & (0.62)      & (0.90)      & (1.54)     \\
SMB         & 1.72^{**}  & 3.94^{***}  & 3.57^{**}   & 1.44       \\
            & (0.54)     & (0.91)      & (1.32)      & (2.26)     \\
HML         & 2.13^{**}  & 0.37        & -0.67       & -0.71      \\
            & (0.67)     & (1.14)      & (1.65)      & (2.83)     \\
RMW         & 1.14       & 4.86^{***}  & 6.23^{***}  & 8.19^{**}  \\
            & (0.72)     & (1.21)      & (1.76)      & (3.01)     \\
CMA         & 5.79^{***} & 10.67^{***} & 10.88^{***} & 8.71       \\
            & (1.08)     & (1.83)      & (2.65)      & (4.53)     \\
\midrule
Adj. R$^2$  & 0.20       & 0.13        & 0.06        & 0.01       \\
\bottomrule
\multicolumn{5}{l}{\scriptsize{$^{***}p<0.001$; $^{**}p<0.01$; $^{*}p<0.05$}}
\end{tabular}

\caption{\label{tab:gaussff5}Gaussian kernel}
\end{subtable}
\end{center}
\caption{\label{tab_ff51}Fama--French 5 factors and cMVE portfolio. The table shows results from a time-series regression of out-of-sample cMVE portfolio excess returns, $\bm\mu_{t}^\top \bm\Sigma_{t}^+ \bm x_{t+1}$, on the five factors from \citet{famafrench15}. The analysis is based on unbalanced US common stock excess returns and associated covariates from 1962 to 2021.}
\end{table}
 
In summary, the higher-$m$ specifications that also generate higher out-of-sample Sharpe ratios, are largely unrelated to the Fama--French five factors.
The findings in this section underscore the effectiveness of the COCO model in an asset pricing context, highlighting the substantial predictive value of the joint COCO estimates. To complement the empirical findings, Appendix~\ref{app_sim} presents a simulation study that further demonstrates the robustness and reliability of our method.

\section{Conclusion}\label{sec_conclusion}

We introduce a nonparametric, kernel-based estimator for jointly modeling conditional means and covariance matrices in large, unbalanced panels. We term it the joint conditional mean and covariance (COCO) estimator. COCO is rigorously developed and supported by both consistency and finite-sample guarantees, ensuring strong performance in theory and practice. By construction, it produces symmetric, positive semidefinite conditional covariance matrices in all states and leverages infinite-dimensional hypothesis spaces to flexibly capture complex, nonlinear dependencies in the data.
 
Empirically, we apply the COCO estimator to a large panel of US stock returns from 1962 to 2021, conditioning on both macroeconomic and firm-specific covariates to obtain time-varying estimates of expected returns and covariances. The results highlight COCO’s strong statistical performance and practical relevance for asset pricing: it delivers conditional mean–variance efficient portfolios with substantial out-of-sample Sharpe ratios that significantly outperform equal-weighted benchmarks. A simulation study further confirms the robustness and reliability of these findings.

Its computational efficiency and flexibility make COCO well suited for large-scale, reproducible empirical analysis, offering a powerful tool for econometricians working with complex data structures in finance and related fields.

\bibliographystyle{ecta}
\bibliography{master}

\clearpage 
\begin{appendix}

\section{Convexity of the regularized loss function}\label{app_convex}

In this appendix, we discuss the convexity properties of the regularized loss function $\Rcal(\bm u,\xi)$ in \eqref{eqRuxipoly}. The Hessian matrix $\bm A(\xi)$ on the right hand side of \eqref{eqRuxipoly} is positive semidefinite, and hence $\Rcal(\bm u,\xi)$ is convex in $\bm u$ in $\R^{M}$. It is strictly convex if and only if $\bm A(\xi)$ is non-singular, which again holds if and only if $\bm Q(\xi)$ is injective. As the duplication matrices $\bm D_{m^\sy+1}$ and $\bm D_{m^\id}$ are injective, a sufficient (but not necessary) condition for $\bm Q(\xi)$ to be injective is that the $(m^\sy+1)^2 + (m^\id)^2$ column vectors of $\bm P(\xi)$ are jointly linearly independent. Necessary (but not sufficient) for the latter to hold is that $(N+1)^2\ge (m^\sy+1)^2+(m^\id)^2$ and that both $\bm\Psi^\sy(\bm z)$ and $ \bm\Psi^\id(\bm z)$ are injective.  

We qualify this further in the following lemma. Recall that a function $f(\bm u)$ is \emph{$\alpha$-strongly convex} if $f(\bm u)-(\alpha/2) \|\bm u\|_2^2$ is convex. Denote by $\sigma_{\rm min}(\bm B)$ the smallest singular value of a matrix $\bm B$.

\begin{lemma}\label{lemalphaconNEW}
Assume that there exists some $\alpha>0$ such that
\begin{equation}\label{alphaboundNEW2}
      2w(N)\sigma_{\rm min}(\bm P(\xi))^2 \ge  \alpha ,\quad\text{for $\Pa$-a.e.\ $\xi\in\Xi$.}
\end{equation}
Then $\Rcal(\bm u,\xi)$ is $\alpha$-strongly convex in $\bm u$, for $\Pa$-a.e.\ $\xi\in\Xi$.
\end{lemma}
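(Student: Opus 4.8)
The plan is to exploit the explicit quadratic form of $\Rcal(\bm u,\xi)$ given in \eqref{eqRuxipoly}. By definition, $\Rcal(\bm u,\xi)$ is $\alpha$-strongly convex in $\bm u$ if and only if its Hessian $\bm A(\xi)=2w(N)\bm Q(\xi)^\top\bm Q(\xi)$ satisfies $\bm A(\xi)\succeq \alpha \bm I_M$, i.e., $\bm v^\top \bm A(\xi)\bm v\ge \alpha\|\bm v\|_2^2$ for all $\bm v\in\R^M$. So the task reduces to lower-bounding the smallest eigenvalue of $\bm Q(\xi)^\top\bm Q(\xi)$, equivalently $\sigma_{\rm min}(\bm Q(\xi))^2$, by $\sigma_{\rm min}(\bm P(\xi))^2$.

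First I would recall from Lemma~\ref{lemRuxivec} that $\bm Q(\xi)=\bm P(\xi)\bm D$, where $\bm D\coloneqq\operatorname{diag}(\bm D_{m^\sy+1},\bm D_{m^\id})$ is the block-diagonal duplication matrix. For any $\bm v\in\R^M$ set $\bm w\coloneqq \bm D\bm v$. The key elementary fact about the duplication matrix is that $\bm D_n$ is an isometry up to a constant on each coordinate: since the columns of $\bm D_n$ are orthogonal and each has norm either $1$ (columns corresponding to diagonal entries $A_{ii}$) or $\sqrt 2$ (columns corresponding to off-diagonal entries $A_{ij}$, $i\ne j$, which appear twice in $\operatorname{vec}$), we have $\|\bm D_n\bm v\|_2^2\ge \|\bm v\|_2^2$ for every $\bm v$. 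Applied blockwise, this gives $\|\bm w\|_2^2=\|\bm D\bm v\|_2^2\ge\|\bm v\|_2^2$. Consequently
\[
\bm v^\top\bm Q(\xi)^\top\bm Q(\xi)\bm v=\|\bm P(\xi)\bm w\|_2^2\ge \sigma_{\rm min}(\bm P(\xi))^2\|\bm w\|_2^2\ge \sigma_{\rm min}(\bm P(\xi))^2\|\bm v\|_2^2 .
\]
Multiplying by $2w(N)$ and invoking the hypothesis \eqref{alphaboundNEW2} yields $\bm v^\top\bm A(\xi)\bm v\ge \alpha\|\bm v\|_2^2$ for $\Pa$-a.e.\ $\xi$, which is exactly $\alpha$-strong convexity of $\bm u\mapsto\Rcal(\bm u,\xi)$.

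The only mildly delicate point — and the one I would state carefully rather than wave at — is the norm inequality $\|\bm D_n\bm v\|_2\ge\|\bm v\|_2$, i.e., that the duplication matrix does not contract; this follows because $\bm D_n^\top\bm D_n$ is diagonal with entries $1$ or $2$, hence $\bm D_n^\top\bm D_n\succeq \bm I_{n(n+1)/2}$. Everything else is a one-line chain of inequalities. I do not expect any genuine obstacle here; the lemma is essentially a repackaging of the singular-value bound \eqref{alphaboundNEW2} through the fixed, injective linear map $\bm D$, and the proof is short.
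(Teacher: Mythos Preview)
Your proof is correct and follows essentially the same route as the paper: both reduce the claim to $\sigma_{\rm min}(\bm Q(\xi))\ge\sigma_{\rm min}(\bm P(\xi))$ via the non-contractivity $\|\bm D_n\bm v\|_2\ge\|\bm v\|_2$ of the duplication matrix, then invoke \eqref{alphaboundNEW2}. The only difference is that you spell out why $\bm D_n^\top\bm D_n\succeq\bm I$, whereas the paper simply asserts the norm inequality and cites the Rayleigh--Ritz theorem.
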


In general, we cannot give a-priori lower bounds on $\sigma_{\rm min}(\bm P(\xi))$ in terms of the singular values of $ \bm\Psi^\sy(\bm z)$ and $\bm\Psi^\id(\bm z)$ alone, as the former depends on the interaction between these two blocks. On the other hand, from the Rayleigh--Ritz Theorem \cite[Theorem 4.2.2]{hor_joh_85}, it follows that
\begin{align*}
    \sigma_{\rm min}(\bm P(\xi))  &\le \min\{ \sigma_{\rm min}(\bm\Psi^\sy(\bm z)\otimes\bm\Psi^\sy(\bm z)),  \sigma_{\rm min}(\bm R_{N+1} \bm R_{N+1}^\top (\bm\Psi^\id(\bm z)\otimes \bm\Psi^\id(\bm z)))\}\\
    &\le \min\{ \sigma_{\rm min}(\bm\Psi^\sy(\bm z))^2,  \sigma_{\rm min}(\bm\Psi^\id(\bm z))^2\}
\end{align*}  
where we used that $(\bm B\otimes \bm B)^\top (\bm B\otimes \bm B)= (\bm B^\top\bm B)\otimes (\bm B^\top\bm B)$ and $\sigma_{\rm min} (\bm B\otimes \bm B)= \sigma_{\rm min} (\bm B)^2$ for any matrix $\bm B$, and that $\bm R_{N+1} \bm R_{N+1}^\top$ is an orthogonal projection. Hence in order that \eqref{alphaboundNEW2} holds, it is necessary (but not sufficient) that $\sigma_{\rm min}(\bm\Psi^\sy(\bm z))$ and $\sigma_{\rm min}(\bm\Psi^\id(\bm z))$ are properly bounded away from zero.\footnote{For any matrices $\bm A$, $\bm B$ with same number of rows, the Rayleigh--Ritz Theorem implies that $\sigma_{\rm min}([\bm A, \bm B])\le \min\{ \sigma_{\rm min}(\bm A),\sigma_{\rm min}(\bm B)\}$. But while the right hand side can be strictly positive, the left hand side may be zero. For example if $\bm A=\bm B=1$.}

\section{Proofs}

This appendix contains all proofs.

\subsection{Proof of Theorem \ref{thmfacrepr}}

As stated below \eqref{eqFM1}, we assume that \(g_{t+1}\) and \(\beta(z)\) take values in a subspace of \(\Ccal\) of codimension one. This assumption is consistent with Lemma~\ref{lembasNEWW} and can be made without loss of generality, since we may simply extend \(\Ccal\) to \(\mathbb{R} \oplus \Ccal\) if needed. We also note that the representation \eqref{eqFM1} of \(x_{t+1,i}\) is not unique. For instance, we can demean the factors, replacing $g_{t+1}$ by $g_{t+1}-b$ and $\alpha(z)$ by $\alpha(z)+\langle\beta(z),b\rangle_\Ccal$. Further, we can incorporate the intercept in the systematic component, replacing $\beta(z)$ by $\alpha(z){u}+\beta(z)$ and $g_{t+1}$ by $g_{t+1}+{u}$, for some unit vector ${u}\in\Ccal$ that is orthogonal to $g_{t+1}$ and $\beta(z)$, which by assumption exists. Moreover, we can rotate the factors, replacing $g_{t+1}$ by $A g_{t+1}$ and $\beta(z)$ by $B\beta(z)$ for any linear operators $A,B$ on $\Ccal$ such that $B^\ast A = I_\Ccal$. 

We now proceed with the proof of Theorem \ref{thmfacrepr}\ref{thmfacrepr1}: Without loss of generality, we can assume that $b=0$; if not, we simply replace $g_{t+1}$ by $g_{t+1}-b$ and $\alpha(z)$ by $\alpha(z)+\langle\beta(z),b\rangle_\Ccal$. We then incorporate $\alpha(z_{t,i})$ into the scalar product in \eqref{eqFM1} by extending $g_{t+1}$ with an orthogonal unit vector ${u}\in\Ccal$, such that $\langle g_{t+1},{u}\rangle_\Ccal=0$, $\langle \beta(z),{u}\rangle_\Ccal=0$ for all $z\in\Zcal$, and $\|{u}\|_\Ccal=1$. Such a vector ${u}$ always exists by assumption. Consequently, we can express $\alpha(z) + \langle \beta(z) ,g_{t+1}\rangle_\Ccal = \langle \alpha(z){u}+\beta(z) ,{u}+g_{t+1}\rangle_\Ccal$. With regard to the extension \eqref{asszdeltaNEW}, we extend $\alpha$, $\beta$ and $\gamma$ to $\Zcal_\Delta$ by setting them to zero for $z=\Delta$. Additionally, we introduce the auxiliary index $i=0$ by defining $x_{t+1,0}\coloneqq 1$ and $z_{t,0}\coloneqq\Delta$, and include the indicator function $1_{z=\Delta}$. This leads to the consistent extension of \eqref{eqFM1} given by
\begin{equation}\label{repraux}
   x_{t+1,i} =   \langle \alpha(z_{t,i}) {u}+\beta(z_{t,i}) +  {u} 1_{z_{t,i}=\Delta} ,{u}+g_{t+1}\rangle_\Ccal + \gamma(z_{t,i})  w_{t+1}(z_{t,i}).
\end{equation}    
As a result, the conditional first and second moments are given by 
\begin{multline}\label{eqFM3}
  \E_t[x_{t+1,i}\, x_{t+1,j}] = 
   \Big\langle \big(Q + {u}\otimes {u} \big)\big(\alpha(z_{t,j}) {u}+\beta(z_{t,j})+ {u} 1_{z_{t,j}=\Delta}\big),\\ \alpha(z_{t,i}) {u}+\beta(z_{t,i})+ {u} 1_{z_{t,i}=\Delta}\Big\rangle_\Ccal    +  \gamma(z_{t,i})^2 1_{z_{t,i}=z_{t,j}},
\end{multline}
for all $i,j=0,\dots,N_t$. A simple check shows that \eqref{eqFM3} is perfectly captured by \eqref{qcondefNEW} or, equivalently, by \eqref{eqmuhhch}, where we set $p\coloneqq (Q + {u}\otimes {u} )^{1/2} {u}$,  $h^\sy(z)\coloneqq (Q + {u}\otimes {u} )^{1/2} (\alpha(z) {u}+\beta(z))$, and let $h^\id$ be such that $\|h^\id(z)\|_\Ccal = \gamma(z)$. Note that $p$ is a unit vector, as $Q {u}=0$. 

\ref{thmfacrepr2}: Conversely, let the moment kernel function $q_h(z,z')$ be given in terms of a unit vector $p\in\Ccal$ and feature maps $h^\sy,h^\id:\Zcal\to\Ccal$ as in \eqref{qcondefNEW}. Define $\alpha(z)\coloneqq \langle h^\sy(z),p\rangle_\Ccal$, $\beta(z)\coloneqq h^\sy(z)-\alpha(z) p$, $\gamma(z)\coloneqq \|h^\id(z)\|_\Ccal$, and let $\{\zeta_{t+1,i}: i=1,2,\dots\}$ and $\{w_{t+1}(z): z\in\Zcal\}$ be conditionally uncorrelated white noise processes with conditional mean zero and conditional variance one. Let $e_0\coloneqq p, e_1, e_2,\dots$ be an orthonormal basis of $\Ccal$, and define $g_{t+1}\coloneqq \sum_{i\ge 1} e_i \zeta_{t+1,i}$ and ${u}\coloneqq p$. Then $g_{t+1}$ has a constant conditional covariance operator given by $Q p=0$ and $Qe_i=e_i$ for $i=1,2,\dots$. It can now be easily verified that the right hand side of \eqref{eqFM3} equals $q_h(z_{t,i},z_{t,j})$, as desired.

\ref{thmfacrepr3}: This follows from \eqref{repraux} as proved in \cite[Lemma 6.2]{filipovicschneider24}, where also the formal expressions are given for the conditional mean and covariance of $f_{t+1}$ and $\epsilon_{t+1}$. Note that, in contrast to $g_{t+1}$ and the idiosyncratic risk in \eqref{eqFM1}, the factors $f_{t+1}$ are not stationary and the conditional covariance matrix of $\epsilon_{t+1}$ is not diagonal and does not have full rank.

\subsection{Proof of Theorem \ref{thmreprNEW}}

Define the linear sample operator $S^\tau:\Hcal^\tau\to \Ccal^{N_{\rm tot}}$ by 
\[S^\tau h^\tau \coloneqq [h^\tau(z_{t,i}): i=1,\dots,N_t,\, t=0,\dots,T-1].\]  
We claim that its adjoint ${S^\tau}^\ast \bm \gamma^\tau$ is given by the right hand side of \eqref{eqn_representer}, for $\bm\gamma^\tau=[\gamma_{t,i}^\tau:i=1,\dots,N_t,\, t=0,\dots,T-1]$. Indeed, let $f\in\Gcal^\tau$ and $v\in\Ccal$, then
\begin{align*}
  \langle {S^\tau}^\ast \bm \gamma^\tau , f\otimes v\rangle_{\Hcal^\tau} &= \langle  \bm \gamma^\tau , {S^\tau}(f\otimes v)\rangle_{\Ccal^{N_{\rm tot}}}  = \sum_{t=0}^{T-1}\sum_{i=1}^{N_t} \langle k^\tau(\cdot,z_{t,i}),f\rangle_{\Gcal^\tau} \langle \gamma^\tau_{t,i},v\rangle_\Ccal \\
  &= \Big\langle\sum_{t=0}^{T-1}\sum_{i=1}^{N_t} k^\tau(\cdot,z_{t,i})\otimes\gamma^\tau_{t,i} , f\otimes v \Big\rangle_{\Hcal^\tau},
\end{align*}
which proves the claim. We define by $\Gcal_1^\tau$ the subspace in $\Gcal^\tau$ spanned by $\{ k^\tau(\cdot,z_{t,i}): i=1,\dots,N_t,\, t=0,\dots,T-1\}$. It has finite dimension, $\dim(\Gcal_1^\tau)\le N_{\rm tot}$, and thus is closed in $\Gcal^\tau$. Hence $\Ima({S^\tau}^\ast)=\Gcal_1^\tau\otimes\Ccal$ is a closed subspace in $\Hcal^\tau$. Consequently, $\Hcal^\tau = \ker(S^\tau)\oplus\Ima({S^\tau}^\ast)$. Now let $h=(h^\sy,h^\id)$ be any minimizer of \eqref{krr1old}, and decompose $h^\tau = h^\tau_0+ h^\tau_1$ with $h^\tau_0\in\ker(S^\tau)$ and $h^\tau_1 \in \Ima({S^\tau}^\ast)$. Clearly, the loss function $\Lcal(h,\xi_{t})=\Lcal(Sh,\xi_{t})$ is a function of $Sh = (S^\sy h^\sy,S^\id h^\id) = h_1$ only. On the other hand, the norm $\| h^\tau \|_{\Hcal^\tau}\ge  \| h^\tau_1 \|_{\Hcal^\tau} $ is greater than or equal for $h^\tau$ than for $h^\tau_1$, with equality if and only if $h^\tau_0=0$. As the regularization parameters in \eqref{eqRcaldef} are assumed to be positive, $\lsy, \lid >  0$, this completes the proof.

\subsection{Proof of Proposition \ref{prop_LRerror}}

We have $\Hcal^\tau_0\cong\Gcal^\tau_0 \otimes\Ccal$, where $\Gcal^\tau_0$ denotes the subspace of $\Gcal^\tau$ spanned by $\bm\phi^\tau$. In the following, without loss of generality, we assume that the functions $\bm\phi^\tau$ are orthonormal in $\Gcal^\tau$, otherwise we simply replace them by $\bm\phi^\tau\langle {\bm\phi^\tau}^\top, \bm\phi^\tau\rangle_{\Gcal^\tau}^{-1/2}$. We extend $\bm\phi^\tau$ to an orthonormal basis $\bm\psi^\tau = [\psi^\tau_1\coloneqq \phi^\tau_1,\dots,\psi^\tau_{m^\tau}\coloneqq \phi^\tau_{m^\tau},\psi^\tau_{m^\tau+1},\dots, \psi^\tau_{M^\tau}]$ of $\Gcal^\tau_1$, the subspace of $\Gcal^\tau$ spanned by $k^\tau(\cdot,\bm Z)$ with $m^\tau\le {M^\tau}\coloneqq \dim(\Gcal^\tau)\le N_{\rm tot}$, as in the proof of Theorem~\ref{thmreprNEW}. Accordingly, we have $k^\tau(\bm Z,\bm Z^\top)=\bm\psi^\tau(\bm Z)\bm\psi^\tau(\bm Z)^\top$, and by the same token $k_0(z,z')=\bm\phi^\tau(z)\bm\phi^\tau(z')^\top$, see \citet[Theorem 2.10]{pau_rag_16}. Any candidate function of the form \eqref{eqn_representer} can thus be written as $h^\tau(z)=\bm\psi^\tau(z)\bm \gamma^\tau$, and its projection on $\Hcal^\tau_0$ is given by $h_0^\tau(z)=\begin{bmatrix}
 \bm\phi^\tau(z) & \bm 0^\top
\end{bmatrix} \bm \gamma^\tau$, for a coefficient array $\bm\gamma^\tau\in\Ccal^{M^\tau}$. Consequently, $q_h(z,z')-q_{h_0}(z,z')$ is a kernel function. As $\|\bm A\|_F\le \tr(\bm A)$ for any positive semidefinite matrix $\bm A$, the cross-sectional approximation errors of the implied conditional moment matrices $q_h(\bar{\bm z_t},\bar{\bm z_t}^\top )$ can therefore be bounded by the respective trace errors. Concretely, let $E_h$ denote the left hand side of \eqref{LRapprbound} and define $\bm V^\tau\coloneqq \langle\bm\gamma^\tau,{\bm\gamma^\tau}^\top\rangle_\Ccal$. Then 
\begin{equation}\label{LRapprbound2}
  \begin{aligned}
  E_h &  \le \sum_{t=0}^{T-1}  \tr\big( q_h(\bar{\bm z_t},\bar{\bm z_t}^\top) -q_{h_0}(\bar{\bm z_t},\bar{\bm z_t}^\top)\big) =\tr( q_h(\bm Z,\bm Z^\top)) -\tr(q_{h_0}(\bm Z,\bm Z^\top))\\
 &= \sum_{\tau\in\{\sy,\id\}} \tr(\bm\psi^\tau(\bm Z)  \bm V^\tau  \bm\psi^\tau(\bm Z)^\top ) - \tr\bigg(\begin{bmatrix}
   \bm\phi^\tau(\bm Z) & \bm 0^\top
 \end{bmatrix} \bm V^\tau  \begin{bmatrix}\bm\phi^\tau(\bm Z)^\top \\ \bm 0 \end{bmatrix}   \bigg) \\
    &\le \sum_{\tau\in\{\sy,\id\}} \|\bm V^\tau\|_2 \underbrace{\big(  \tr(  \bm\psi^\tau(\bm Z)  \bm\psi^\tau(\bm Z)^\top   ) -\tr( \bm\phi^\tau(\bm Z)  \bm\phi^\tau(\bm Z) ^\top   ) \big)}_{=\epsilon^\tau_{\rm approx}} 
\end{aligned} 
\end{equation}
where we used that $\tr(\bm B\bm A\bm B^\top)=\tr(\bm A \bm B^\top\bm B)\le \|\bm A\|_2\tr(\bm B^\top\bm B) = \|\bm A\|_2\tr(\bm B\bm B^\top)$ for any positive semidefinite matrix $\bm A$ and conformal matrix $\bm B$. The bound \eqref{LRapprbound} now follows because $\|\bm V^\tau\|_2\le \tr(\bm V^\tau)=\|h^\tau\|_{\Hcal^\tau}^2$, which completes the proof.

Note that the last inequality in \eqref{LRapprbound2} is tight, with equality for, e.g., $\bm V^\tau=\bm I_{M^\tau}$. This shows, as a side result, that $\epsilon^\sy_{\rm approx}+\epsilon^\id_{\rm approx}$ equals the worst case approximation error, when we take the maximum over all coefficients $\bm \gamma^\tau$ with $\|\langle\bm\gamma^\tau,{\bm\gamma^\tau}^\top\rangle_\Ccal\|_2\le 1$.

\subsection{Proof of Theorem \ref{thmUrepr}}

We express any feature maps $h_0(\cdot)=(h^\sy_0(\cdot),h^\id_0(\cdot))$ of the form \eqref{hlowrank0} in vector notation as
\begin{equation}\label{hlowrank1} 
      h^\tau_0(\cdot) = \bm\phi^\tau(\cdot)\bm \gamma^\tau,
\end{equation}  
for the corresponding arrays of coefficients $\bm \gamma^\tau\coloneqq [\gamma^\tau_1,\dots, \gamma^\tau_{m^\tau}]^\top\in\Ccal^{m^\tau}$. The regularized loss function \eqref{eqRcaldef} in turn can be represented in terms of the coefficients $\bm \gamma =(\bm\gamma^\sy,\bm\gamma^\id)\in\Ccal^{m^\sy}\times \Ccal^{m^\id}$ as $\Rcal(h_0,\xi_{t})=\Rcal(\bm\gamma,\xi_{t})$ where 
\begin{multline*}
 \Rcal(\bm\gamma,\xi_{t}) \coloneqq w(N_t)\bigg\|   \begin{bmatrix}  1 & \bm x_{t+1}^\top \\  \bm x_{t+1} & \bm x_{t+1}\bm x_{t+1}^\top
              \end{bmatrix} - \bm\Psi^\sy(\bm z_t) \bm U^\sy(\bm\gamma^\sy) \bm\Psi^\sy(\bm z_t)^\top \\
               - \Diag (\bm\Psi^\id(\bm z_t) \bm U^\id(\bm\gamma^\id)\bm\Psi^\id(\bm z_t)^\top )  \bigg\|_F^2  \\ + \lambda^\sy \tr( \bm G^\sy \bm U^\sy(\bm\gamma^\sy))+ \lambda^\id \tr( \bm G^\id \bm U^\id(\bm\gamma^\id)),
\end{multline*}
for the matrix-valued mapping $\bm U(\cdot)=({\bm U}^\sy(\cdot),{\bm U}^\id(\cdot)):\Ccal^{m^\sy}\times\Ccal^{m^\id}\to\Dcal$ given by
\begin{equation}
     {\bm U}^\sy(\bm\gamma^\sy)  \coloneqq     \begin{bmatrix}
    1 & \langle p, {\bm\gamma^\sy}^\top  \rangle_\Ccal \\ \langle p, {\bm\gamma^\sy} \rangle_\Ccal & \langle  {\bm\gamma^\sy} , {\bm\gamma^\sy}^\top\rangle_\Ccal
\end{bmatrix}    , \quad
{\bm U}^\id(\bm\gamma^\id)  \coloneqq \langle  {\bm\gamma^\id} , {\bm\gamma^\id}^\top\rangle_\Ccal , \label{eqUdef}
\end{equation}  
and we used that the norm of $h_0^\tau$ becomes $\| h_0^\tau\|_{\Hcal^\tau}^2 = \sum_{i,j=1}^{m^\tau} \langle \phi^\tau_i,\phi^\tau_j\rangle_{\Gcal^\tau} \langle \gamma^\tau_i,\gamma^\tau_j\rangle_\Ccal = \tr(\bm G^\tau\bm U^\tau(\bm\gamma^\tau))$. By Lemma~\ref{lembasNEWW} below, and as we assumed that $\Ccal=\ell^2$, the mapping $\bm U(\cdot)$ is surjective and hence the regularized loss function can directly be reparametrized in terms of $\bm U=(\bm U^\sy,\bm U^\id)\in\Dcal$, as stated in \eqref{eqRcalU}. The representation of the moment kernel function \eqref{asszdeltaEST} follows by the same token. This completes the proof of Theorem~\ref{thmUrepr}.

The following lemma provides the basis for the proof of Theorem \ref{thmUrepr}. Notably, it holds for any choice of the auxiliary Hilbert space $\Ccal$, which may be finite-dimensional, distinct from $\ell^2$ as considered in the main text.

\begin{lemma}\label{lembasNEWW}
For the mappings $ {\bm U}^\sy: \Ccal^{m^\sy}\to \Dcal^\sy$ and $ {\bm U}^\id: \Ccal^{m^\id}\to \Sy^{m^\id}_+$ defined in \eqref{eqUdef} the following hold;
    \begin{enumerate}
 
   \item\label{lembasNEW1} $ {\bm U}^\sy$ is surjective if and only if $\dim\Ccal\ge m^\sy+1$. If $\dim\Ccal\ge 3$ then for any $\bm\gamma^\sy\in\Ccal^{m^\sy}$ there exist infinitely many $\tilde{\bm\gamma}^\sy\neq \bm\gamma^\sy$ in $\Ccal^{m^\sy}$ such that $\bm U^\sy(\tilde{\bm\gamma}^\sy)=\bm U^\sy(\bm\gamma^\sy)$.

    \item\label{lembasNEW2} $ {\bm U}^\id$ is surjective if and only if $\dim\Ccal\ge m^\id$. If $\dim\Ccal\ge 2$ then for any $\bm\gamma^\id\in\Ccal^{m^\id}$ there exist infinitely many $\tilde{\bm\gamma}^\id\neq \bm\gamma^\id$ in $\Ccal^{m^\id}$ such that $\bm U^\id(\tilde{\bm\gamma}^\id)=\bm U^\id(\bm\gamma^\id)$.
  \end{enumerate}
Hence the minimal dimensional requirements of $\Ccal$ for Theorem \ref{thmUrepr} to apply are $\dim\Ccal \ge \max\{m^\sy+1,m^\id\}$.
\end{lemma}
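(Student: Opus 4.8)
The plan is to recognize both maps in \eqref{eqUdef} as Gram-matrix constructions and reduce everything to standard finite-dimensional linear algebra together with the structure of the orthogonal group of $\Ccal$. First I would observe that $\bm U^\sy(\bm\gamma^\sy)$ is exactly the Gram matrix of the $(m^\sy+1)$-tuple $(p,\gamma^\sy_1,\dots,\gamma^\sy_{m^\sy})$ in $\Ccal$: its $(1,1)$ entry is $\langle p,p\rangle_\Ccal=1$, its first row and column collect the $\langle p,\gamma^\sy_j\rangle_\Ccal$, and its lower-right block is $(\langle\gamma^\sy_i,\gamma^\sy_j\rangle_\Ccal)_{i,j}$. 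Likewise $\bm U^\id(\bm\gamma^\id)$ is the Gram matrix of $(\gamma^\id_1,\dots,\gamma^\id_{m^\id})$. In particular the images indeed lie in $\Dcal^\sy$ and $\Sy^{m^\id}_+$, so only surjectivity and fiber size remain to be analyzed.

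For surjectivity I would invoke the spectral (or Cholesky) factorization: any $\bm A\in\Sy^n_+$ of rank $r$ can be written $\bm A=\bm W^\top\bm W$ with $\bm W\in\R^{r\times n}$ of rank $r$, whose $n$ columns are vectors spanning $\R^r$, and these can be isometrically embedded into $\Ccal$ precisely when $\dim\Ccal\ge r$. Applied to $\bm U^\id$: ranks of matrices in $\Sy^{m^\id}_+$ run up to $m^\id$, so $\bm U^\id$ is onto iff $\dim\Ccal\ge m^\id$, while $\bm I_{m^\id}$ certifies non-surjectivity when this fails. For $\bm U^\sy$ the only extra point is that the first factor-vector is constrained to equal the \emph{given} unit vector $p$: since every $\bm U^\sy\in\Dcal^\sy$ has $(1,1)$ entry $1$, the first column of $\bm W$ in a rank-$r$ factorization is a unit vector of $\R^r$, and one is free to choose the isometric embedding $\R^r\hookrightarrow\Ccal$ so as to send it onto $p$, which is possible exactly when $\dim\Ccal\ge r$, and $r\le m^\sy+1$. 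Hence $\bm U^\sy$ is onto iff $\dim\Ccal\ge m^\sy+1$, with $\bm I_{m^\sy+1}$ again witnessing necessity. Combining these two characterizations gives the concluding minimal-dimension requirement $\dim\Ccal\ge\max\{m^\sy+1,m^\id\}$.

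For the fiber-size statements I would use the fact that two finite tuples in $\Ccal$ share the same Gram matrix if and only if they are related by an orthogonal transformation of $\Ccal$: an equality of Gram matrices makes the linear map $\gamma_i\mapsto\tilde\gamma_i$ on the (finite-dimensional) span a well-defined isometry, which extends by the identity on the orthogonal complement. Thus the fiber of $\bm U^\id$ through a nonzero $\bm\gamma^\id$ contains the orbit $\{(O\gamma^\id_i)_i : O\in O(\Ccal)\}$; when $\dim\Ccal\ge2$ the group $O(\Ccal)$ is positive-dimensional (at least a circle of rotations) and moves any nonzero vector continuously, producing infinitely many distinct preimages. For $\bm U^\sy$ the relevant group is the stabilizer of $p$, namely $O(p^\perp)\cong O(\dim\Ccal-1)$, which is positive-dimensional once $\dim\Ccal\ge3$ and moves any vector with a nonzero component orthogonal to $p$, again yielding infinitely many preimages.

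I expect the delicate part to be the bookkeeping around the prescribed unit vector $p$ in both halves of the $\bm U^\sy$ statement — verifying that $p$ can always be accommodated in the factorization without inflating the dimension budget (this works precisely because $\|p\|_\Ccal^2=1$ matches the constrained $(1,1)$ entry) — together with isolating the genuinely degenerate configurations where the fiber collapses to a single point (all $\gamma^\id_i=0$ for $\bm U^\id$, or all $\gamma^\sy_i\in\spn(p)$ for $\bm U^\sy$), in which the Cauchy--Schwarz equality case forces uniqueness; the non-uniqueness assertion is to be read for the non-degenerate case, and the rest is routine.
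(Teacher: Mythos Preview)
Your approach is essentially the paper's: it too reduces surjectivity to the factorization $\bm\Sigma=\bm C\bm C^\top$ with $\bm C\in\R^{m^\sy\times\nu}$ (possible iff $\nu\ge m^\sy$) after writing $\gamma^\sy_i=b_ip+\sum_jc_{ij}\xi_j$ in an orthonormal basis $p,\xi_1,\dots,\xi_\nu$ of $\Ccal$ and identifying the Schur complement of the $(1,1)$ block, and it obtains non-uniqueness via $\bm C\mapsto\bm C\bm A$ for orthogonal $\bm A\neq\bm I_\nu$ --- your Gram-matrix and stabilizer-of-$p$ language is the coordinate-free packaging of exactly this argument. Your caveat on the degenerate fibers (all $\gamma^\id_i=0$, respectively all $\gamma^\sy_i\in\spn(p)$, i.e.\ $\bm C=\bm 0$) is correct and in fact sharper than the paper, whose own proof passes over the same edge case.
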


\begin{proof}[Proof of Lemma \ref{lembasNEWW}]
\ref{lembasNEW1}: Without loss of generality we can assume that $\dim\Ccal<\infty$, otherwise we replace $\Ccal$ by a finite-dimensional subspace. Define $\nu\coloneq \dim\Ccal - 1$, and consider an orthonormal basis $\xi_0\coloneqq p, \,\xi_1,\dots, \xi_\nu$ of $\Ccal$. Then there is a bijection between $ \Ccal^{m^\sy}$ and $\R^{m^\sy}\times \R^{{m^\sy}\times \nu}$: every $\bm\gamma^\sy\in\Ccal^{m^\sy}$ can be expressed in unique coordinates as $\gamma^\sy_i = b_i p + \sum_{j=1}^\nu c_{ij} \xi_j $ for some vector $\muf=[b_i:1\le i\le m^\sy]\in\R^{m^\sy}$ and matrix $\bm C=[c_{ij}:1\le i\le m^\sy, 1\le j\le \nu]\in\R^{m^\sy\times \nu}$, and vice versa. Expressed in these coordinates, we can write ${\bm U}^\sy(\bm\gamma^\sy)= \begin{bmatrix}
        1 & \muf^\top \\ \muf & \muf\muf^\top + \bm C\bm C^\top
    \end{bmatrix}$. It follows that $\bm C\bm C^\top $ is the Schur complement of the upper left block 1 of the matrix ${\bm U}^\sy(\bm\gamma^\sy)$. Hence ${\bm U}^\sy: \Ccal^{m^\sy}\to \Dcal^\sy$ is surjective if and only if every matrix $\bm\Sigma\in\Sy^{m^\sy}_+$ can be expressed as $\bm\Sigma=\bm C\bm C^\top $ for some $\bm C\in\R^{m^\sy\times \nu}$. This holds if and only if $\nu \ge m^\sy$, see \cite[Theorem 4.7]{pau_rag_16}, which proves the first statement. For the second statement, let $\bm A\neq \bm I_\nu$ be any orthogonal $\nu\times \nu$-matrix, and define $\tilde{\bm C}=\bm C\bm A$ and $\tilde{\bm\gamma}$ accordingly as above. It follows that $\tilde{\bm\gamma}\neq \bm\gamma$ and $\bm U^\sy(\tilde{\bm\gamma})=\bm U^\sy(\bm\gamma)$. If $\nu\ge 2$ then there exists infinitely many such matrices $\bm A$, which proves the claim.

\ref{lembasNEW2}: This follows similarly as part \ref{lembasNEW1}, but without the constraint $\bm U^\id_{11}=1$.
\end{proof}

\subsection{Proof of Lemma~\ref{lemRuxivec}}
Using the introduced notation, we express the regularized loss function $\Rcal(\bm U,\xi_{t})$ in \eqref{eqRcalU} in terms of the vectorized parameter $\bm u=\begin{bmatrix}
    \bm u^\sy \\ \bm u^\id
\end{bmatrix}$ as
\begin{multline*}
 \Rcal(\bm u,\xi_{t}) = w(N_t)\Big\|   \bm y(\bm x_{t+1}) - (\bm\Psi^\sy(\bm z_t)\otimes\bm\Psi^\sy(\bm z_t)) \bm D_{m^\sy+1}\bm u^\sy \\
 - \bm R_{N_t+1} \bm R_{N_t+1}^\top (\bm\Psi^\id(\bm z_t)\otimes \bm\Psi^\id(\bm z_t)) \bm D_{m^\id} \bm u^\id\Big\|_2^2  \\
                + \lambda^\sy   {\bm g^\sy}^\top \bm D_{m^\sy+1}\bm u^\sy + \lambda^\id   {\bm g^\id}^\top \bm D_{m^\id} \bm u^\id,
\end{multline*}
where we used that $\vect (\bm\Psi^\sy(\bm z_t)\bm U^\sy\bm\Psi^\sy(\bm z_t)^\top)=(\bm\Psi^\sy(\bm z_t)\otimes\bm\Psi^\sy(\bm z_t)) \vect(\bm U^\sy)$ and 
\[ \vect(\Diag (\bm\Psi^\id(\bm z_t) \bm U^\id \bm\Psi^\id(\bm z_t)^\top ) ) = \bm R_{N_t+1} \bm R_{N_t+1}^\top (\bm\Psi^\id(\bm z_t)\otimes \bm\Psi^\id(\bm z_t)) \bm D_{m^\id} \bm u^\id,\]
given the $i$th diagonal element $(\bm\Psi^\id(\bm z_t) \bm U^\id \bm\Psi^\id(\bm z_t)^\top)_{ii} =  (\bm\Psi_{i,\cdot}^\id(\bm z_t)\otimes \bm\Psi_{i,\cdot}^\id(\bm z_t))  \vect(\bm U^\id)$. Expanding the squared norm and collecting terms then gives \eqref{eqRuxipoly}, which proves the lemma.

For the simple idiosyncratic specification in dimension $m^\id=1$ in Example \ref{ex:simplediagonal}, the above expression simplifies to
\[ \vect(\Diag (\bm\Psi^\id(\bm z_t) \bm U^\id \bm\Psi^\id(\bm z_t)^\top ) ) = \underbrace{\bm R_{N_t+1}}_{(N_t+1)^2\times (N_t+1)} \underbrace{\begin{bmatrix}
  0 \\\bm 1
\end{bmatrix}}_{(N_t+1)\times 1} u^\id ,\]
and the regularization penalty term reads $\lambda^\id {\bm g^\id}^\top \bm D_{m^\id} \bm u^\id= \lambda^\id u^\id$.

\subsection{Proof of Lemma \ref{lemalphaconNEW}}

From the Rayleigh--Ritz Theorem \cite[Theorem 4.2.2]{hor_joh_85}, and using that $\|\bm D_n \bm v\|_2\ge \|\bm v\|_2$, it follows that
\[   \sigma_{\rm min}(\bm A(\xi)) = 2  w(N) \sigma_{\rm min}(\bm Q(\xi))^2 \ge 2  w(N) \sigma_{\rm min}(\bm P(\xi))^2.\] 
This proves the lemma.

\subsection{Proof of Lemma \ref{lempoploss}}

    We use the elementary facts $\|\bm A \bm B\|_F\le \|\bm A\|_F \|\bm B\|_F$ and $\| \bm A\otimes \bm B\|_F = \|\bm A\|_F \|\bm B\|_F$ for matrices $\bm A$ and $\bm B$, for the Frobenius norm $\|\cdot\|_F$. By construction, it follows that $\|\bm y(\bm x)\|_2\le 1+\|\bm x\|_2^2$, $\|\bm \Psi^\sy(\bm z)\|_F^2 = 1+ \|\bm\phi^\sy(\bm z)\|_F^2$, $\|\bm \Psi^\id(\bm z)\|_F =  \|\bm\phi^\id(\bm z)\|_F$, $\|\bm D_n\|_F = n$, $\|\bm R_n\bm B\|_F = \| \bm B\|_F$, $\|\bm R_n^\top \bm B\|_F\le \|\bm B\|_F$, for any conformal matrix $\bm B$. Hence 
\begin{equation}\label{boundsAbc}
    \begin{aligned}
    \|\bm A(\xi)\|_F &\le 2 w(N) \|\bm Q(\xi)\|_F^2  \\
    & \le 2 w(N) \big((1+ \|\bm\phi^\sy(\bm z)\|_F^2)^2 (m^\sy+1)^2 +   \|\bm\phi^\id(\bm z)\|_F^4 (m^\id)^2\big),\\
    \| \bm b(\xi)\|_2 &\le 2 w(N) \big((1+ \|\bm\phi^\sy(\bm z)\|_F^2)  (m^\sy+1)  +   \|\bm\phi^\id(\bm z)\|_F^2 m^\id \big) \\
    &\quad + \lambda^\sy (m^\sy+1)\|\bm g^\sy\|_2 + \lambda^\id m^\id\|\bm g^\id\|_2,\\
    |c(\xi)|& \le w(N) \big(1+\|\bm x\|_2^2\big)^2.
\end{aligned}
\end{equation}
Combining \eqref{cond_moments} and \eqref{boundsAbc} proves the lemma.

\subsection{Proof of Theorem \ref{thmasy}}

    Clearly, $\Rcal(\bm u,\xi)$ is a Carath\'eodory function, i.e., measurable in $\xi$ and continuous in $\bm u$, and therefore random lower semicontinuous \cite[Section 9.2.4]{sha_den_rus_21}. The set $\Ucal$ is closed and convex in $\R^M$. Now claim~\ref{thmasy1} follows from \cite[Theorem~5.4]{sha_den_rus_21}.

Claims \ref{thmasy2} and \ref{thmasy3} follow from \cite[Theorem 3]{mil_23}, setting ``$\bm\Psi(\bm u)$'' in \cite{mil_23} equal to the convex characteristic function of the feasible set $\Ucal$ in $\R^M$, taking value $0$ for $\bm u\in\Ucal$ and $+\infty$ otherwise.

Claim \ref{thmasy4NEW} follows as \eqref{asszdeltaEST} elementary implies the bound \eqref{qboundasy}, using that the operator norm of the half-vectorization operator is given by $\sup_{\|\bm u\|_2\le 1}\|\vecth(\bm u)\|_F=\sqrt{2}$.

Claim \ref{thmasy5NEW} follows from Jensen's inequality and the bounds in \eqref{boundsAbc}.

Claim \ref{thmasy6NEW} follows as the above proof applies to any closed convex subset of $\Ucal$.

\subsection{Proof of Lemma \ref{lemdiagonal}}
$\bm U _{\diag}^{\sy}$ is clearly symmetric. Furthermore, all (non-leading) principal minors are diagonal matrices with entries along the diagonal that are combinations of $c_1,\ldots, c_{m^{\sy}}\geq 0$ from the premises of the statements, and thus positive semidefinite. The top-left corner is equal to one, and therefore positive. To consider the remaining $l=1,\ldots, m^{\sy}$ leading principal minors, we apply the block determinant formula to obtain for the determinant of the $l$-th leading principal minor,
\[
\left (1-\sum _{j=1}^l{\frac{b_j^2}{c_j}}\right )\prod _{i=1}^{l}c_i\geq (1-\sum _{j=1}^{l}\tilde c_j)\prod _{i=1}^{l}c_i\geq 0,
\]
from the premise of the statement. With all principal minors positive semidefinite, Sylvester's criterion \citep[][Theorem 7.2.5]{hor_joh_85} applies, and yields that $\bm U ^{\sy}_{\diag}$ is symmetric positive semidefinite. Conversely, the matrix $\bm U ^{\sy}_{\diag}$ being positive semidefinite implies the leading principal minors to be non-negative, such that in turn  $c_1,\ldots, c_{m^{\sy}}\geq 0$, $\tilde c _1,\ldots, \tilde c_{m^{\sy}}\geq 0$,  $\sum _{i=1}^{m^{\sy}}\tilde c_i\leq 1$ and $b_i^2\leq c_i \tilde c_i$ for $l=1,\ldots, m^{\sy}$ \citep[][Corollary 7.1.5]{hor_joh_85}.

From the block-diagonal specification clearly $\Dcal ^{\sy}_{\diag}\subset \Dcal ^{\sy}$. Finally, all constraints in the premise of the statement describe closed convex sets (the constraints $b_i^2\leq c_i \tilde c_i, \, i=1,\ldots, m^{\sy}$ are commonly referred to as \emph{rotated quadratic cones}, and jointly convex in $b_i, c_i$ and $\tilde c_i$) and their intersection thus describes a closed convex set.

\section{Simulation study}\label{app_sim}
Simulations are essential for assessing the performance of the proposed method under controlled conditions. In this appendix, we investigate the COCO model in a controlled simulation environment. To this end, we employ the simple form \eqref{eq:facrepsimpl} of the data-generating model from Theorem~\ref{thmfacrepr}, and specify \(\bm{g}_{t+1}\) as an \(m^\sy = 40\)-dimensional normal random vector with constant conditional mean \(\mathbb{E}_t[\bm{g}_{t+1}] = \bm{b}^{\text{pop}}\) and conditional covariance matrix \(\mathrm{Cov}_t[\bm{g}_{t+1}] = \bm{V}^{\text{pop}} - \bm{b}^{\text{pop}} (\bm{b}^{\text{pop}})^\top\). The idiosyncratic component \(w_{t+1}(\bm{z}_t)\) is drawn from a normal distribution with mean vector \(\bm{0}_{N_t}\) and covariance matrix \(u^{\text{pop}\,\id} \bm{I}_{N_t}\). We set the population parameters \(\bm{b}^{\text{pop}}, \bm{V}^{\text{pop}}, \bm{\phi}^{\sy\,\text{pop}}(\cdot)\), and \(u^{\text{pop}\,\id}\) to their full-sample estimates based on the cosine kernel, thereby eliminating the need for validation. The observed covariates are used without modification to generate the simulated return data \(\bm{x}_{t+1}^{\text{sim}}\). The resulting simulated dataset thus combines observed covariates with simulated returns.

Next, we replicate the out-of-sample estimation procedure described in Section~\ref{sec_empiricsNEW}, following exactly the same steps as in the empirical analysis and computing the same out-of-sample statistics. Denoting the (ground truth) population conditional mean and covariance by
\begin{equation}\label{eq:popmom}
\begin{split}
 \bm \mu _t ^{\text{pop}}&\coloneqq \bm\phi^{\sy \, \text{pop}}(\bm z_t)\bm b^{\text{pop}},\\
 \bm \Sigma _t^{\text{pop}}&\coloneqq \bm\phi^{\sy \, \text{pop}}(\bm z_t)(\bm V^{\text{pop}}-\bm b^{\text{pop}}(\bm b^{\text{pop}})^\top)\bm\phi^{\sy \, \text{pop}}(\bm z_t)^{\top}+u^{\text{pop}\, \id}\bm I _{N_t} ,
 \end{split}
\end{equation} 
we define the corresponding out-of-sample evaluation metrics analogously to \eqref{eq_kellyoos}, \eqref{eq_kellysecmomoos}, and \eqref{eqRcaltT} as follows
\begin{align}
 R^{2,\text{pop}}_{t,T,\text{OOS}} &\coloneqq  1 - \frac{\sum_{s=t}^{T-1} w(N_s) \| \bm x_{s+1}^{\text{sim}} - \bm\phi^{\sy \, \text{pop}}(\bm z_s) \bm b^{\text{pop}} \|_2^2}{\sum_{s=t}^{T-1} w(N_s) \| \bm x_{s+1}^{\text{sim}} \|_2^2},\label{eq_kellyoospop} \\
 R^{2,2,\text{pop}}_{t,T,\text{OOS}} &\coloneqq 1 - \frac{\sum_{s=t}^{T-1} w(N_s) \| \bm x_{s+1}^{\text{sim}} \bm x_{s+1}^{\text{sim}\top} - \bm\phi^{\sy \, \text{pop}}(\bm z_s) \bm V^{\text{pop}} \bm\phi^{\sy \, \text{pop}}(\bm z_s)^\top - u^{\id \,  \text{pop}} \bm I_{N_s} \|_F^2}{\sum_{s=t}^{T-1} w(N_s) \| \bm x_{s+1}^{\text{sim}} \bm x_{s+1}^{\text{sim}\top} - \sigma^2_{\text{bm}} \bm I_{N_s} \|_F^2},\label{eq_kellysecmomoospop}\\
 \Scal_{t,T,\text{OOS}}^{\text{pop}} &\coloneqq \frac{1}{T - t} \sum_{s=t}^{T-1} \big(  \Scal(\bm x_{s+1}^{\text{sim}}, \bm 0, \sigma^2_{\text{bm}} \bm I_{N_s})-\Scal(\bm x_{s+1}^{\text{sim}}, \bm \mu_s^{\text{pop}}, \bm \Sigma_s^{\text{pop}})  \big))\label{eqRcaltTpop}. 
\end{align}

\begin{figure}
\centering
\begin{subfigure}[c]{0.46\textwidth}
\includegraphics[scale=0.75]{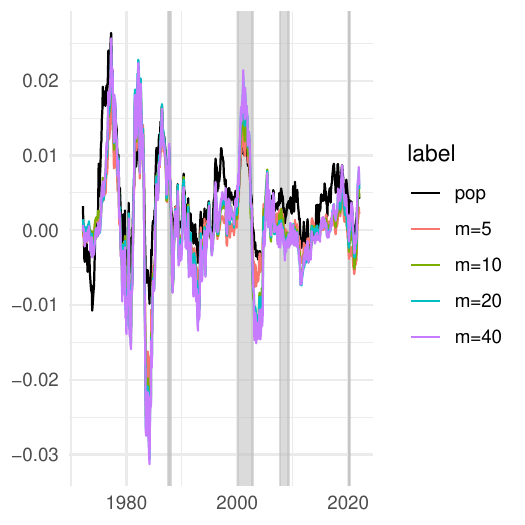}
\caption{\label{fig:simr2:cosine}$R^2$ and $R^{2,\text{pop}}$}
\end{subfigure}
\begin{subfigure}[c]{0.46\textwidth}
 \includegraphics[scale=0.75]{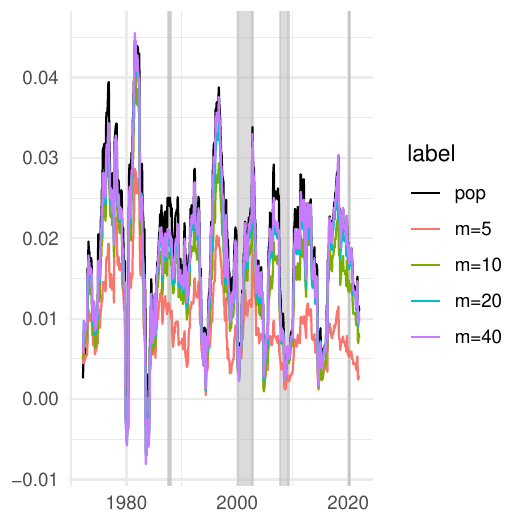}
 \caption{\label{fig:simr2r2:cosine}$R^{2,2}$ and $R^{2,2,\text{pop}}$}
 \end{subfigure}\\
 \begin{subfigure}[c]{0.46\textwidth}
 \includegraphics[scale=0.75]{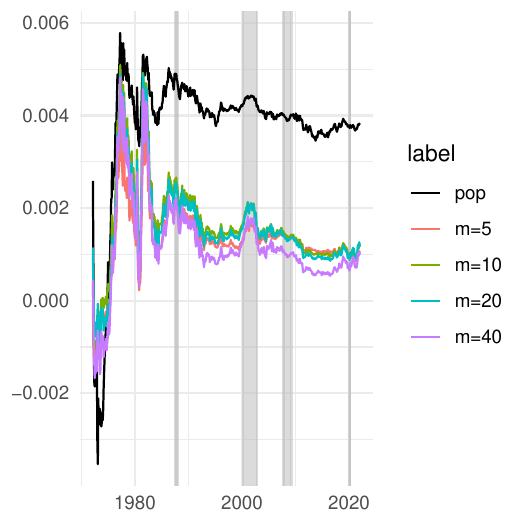}
  \caption{\label{fig:simr2exp:cosine}$R^2$ and $R^{2,\text{pop}}$ expanding}
 \end{subfigure}
 \begin{subfigure}[c]{0.46\textwidth}
 \includegraphics[scale=0.75]{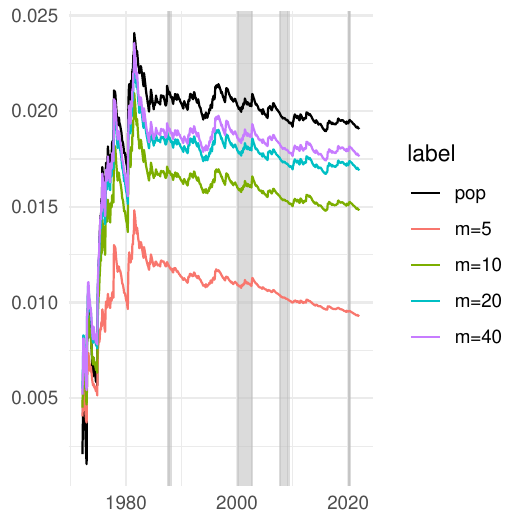}
   \caption{\label{fig:simr2r2exp:cosine}$R^{2,2}$ and $R^{2,2,\text{pop}}$ expanding}
 \end{subfigure}
 \caption{\label{fig_r2_sim}Out-of-sample predictive performance simulated data. The panels display rolling \( R^{2}_{t-r,t,\text{OOS}} \), \( R^{2,\text{pop}}_{t-r,t,\text{OOS}} \), \( R^{2,2}_{t-r,t,\text{OOS}} \), and \( R^{2,2,\text{pop}}_{t-r,t,\text{OOS}} \) (over \(\rolling = \rollnum\) months) and their expanding counterparts as defined in \eqref{eq_kellyoos},  \eqref{eq_kellyoospop}, \eqref{eq_kellysecmomoos}, and \eqref{eq_kellysecmomoospop}, respectively, using the COCO model with \( m = 5,\, 10,\, 20,\, 40 \) systematic factors. The population model is described in \eqref{eq:facrepsimpl} with $m^{\sy}=40$.  The analysis is based on unbalanced US common stock excess returns and associated covariates from 1962 to 2021. Shaded areas indicate major market crashes: the 1987 Crash, the Dot-Com Bubble, the Global Financial Crisis, and the COVID-19 Pandemic.}
\end{figure}

Figure \ref{fig_r2_sim} shows $R^{2}_{t,T,\text{OOS}}$, $R^{2,2}_{t,T,\text{OOS}}$,   $R^{2,\text{pop}}_{t,T,\text{OOS}}$, and $R^{2,2,\text{pop}}_{t,T,\text{OOS}}$ computed from simulated data. The population model accommodates  $R^{2}_{t,T,\text{OOS}}$ of around 0.5\% and  $R^{2,\text{pop}}_{t,T,\text{OOS}}$ of around 7.5\%. While the COCO model does not attain either, higher-factor specifications get quite close to $R^{2,2,\text{pop}}_{t,T,\text{OOS}}$, but less so to $R^{2,\text{pop}}_{t,T,\text{OOS}}$. The patterns observed in the simulated data are quite similar to those of the real data.
\begin{figure}
\centering
\begin{subfigure}[c]{0.46\textwidth}
\includegraphics[scale=0.75]{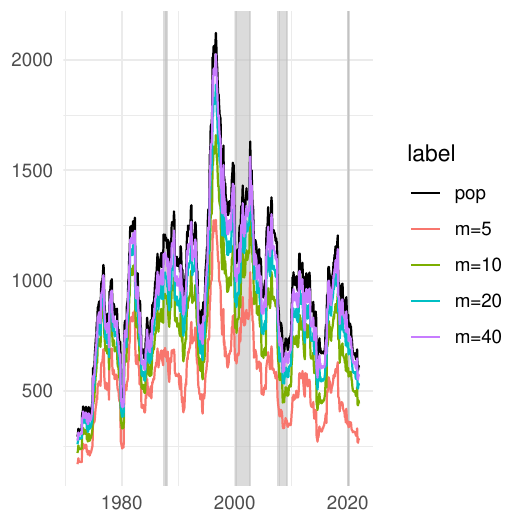}
\caption{\label{fig:sim_scoring_cos}Scoring loss (rolling)}
\end{subfigure}
 \begin{subfigure}[c]{0.46\textwidth}
 \includegraphics[scale=0.75]{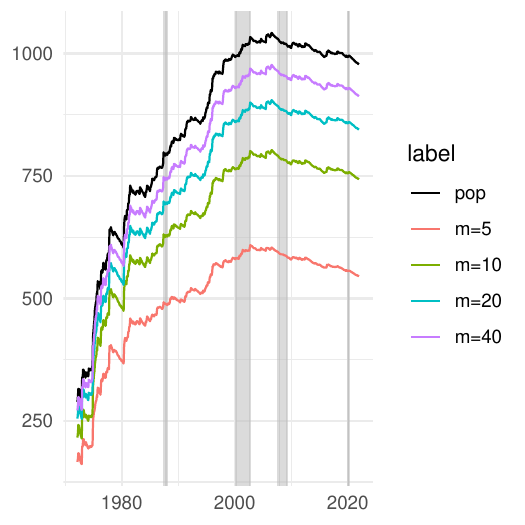}
  \caption{\label{fig:sim_scoring_cos_exp}Scoring loss (expanding)}
 \end{subfigure}
 \caption{\label{fig_sim_scoring}Out-of-sample scoring loss differential performance in simulated data. The panels display the rolling \( \Scal_{t-r,t,\text{OOS}} \) and \( \Scal_{t-r,t,\text{OOS}}^{\text{pop}} \) (over \(\rolling = \rollnum\) months) and expanding \( \Scal_{0,t,\text{OOS}} \) and \( \Scal_{0,t,\text{OOS}}^{\text{pop}} \) as defined in \eqref{eqRcaltT} and \eqref{eqRcaltTpop}, respectively, using the COCO model with \( m = 5,\, 10,\, 20,\, 40 \) systematic factors. The population model is described in \eqref{eq:facrepsimpl} with $m^{\sy}=40$. The analysis is based on unbalanced US common stock excess returns and associated covariates from 1962 to 2021. Shaded areas indicate major market crashes: the 1987 Crash, the Dot-Com Bubble, the Global Financial Crisis, and the COVID-19 Pandemic.}
\end{figure}

Next, we investigate the scoring loss differential of the population, and the COCO estimator with the purely idiosyncratic model in Figure \ref{fig_sim_scoring}. Here, slight differences to the real data become visible in that the performance of the COCO model is best when the number of stocks is the highest, while the real data exhibits some additional patterns at the beginning of the sample (cf. Figure \ref{fig:numstocks}). Simulated data yield higher-dimensional models performing better than lower-dimensional ones.
\begin{figure}
\centering
\begin{subfigure}[c]{0.46\textwidth}
\includegraphics[scale=0.75]{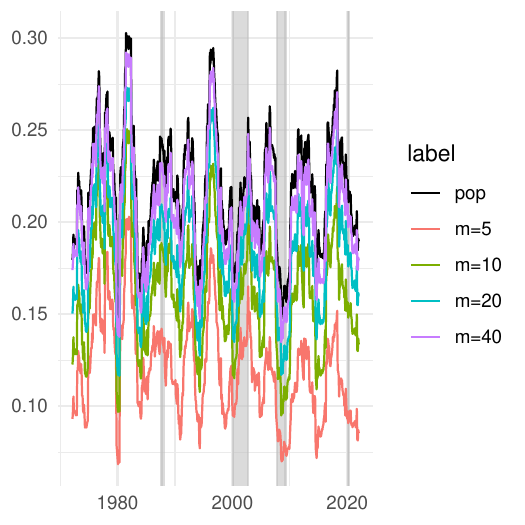}
\caption{\label{fig:simtotalr2}Explained variation}
\end{subfigure}
 \begin{subfigure}[c]{0.46\textwidth}
 \includegraphics[scale=0.75]{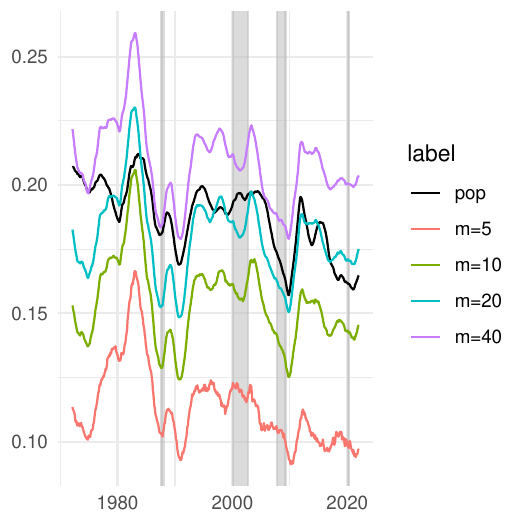}
  \caption{\label{fig:simeigdist}Ratio factor-explained to total variance}
 \end{subfigure}\\
 \begin{subfigure}[c]{0.46\textwidth}
\includegraphics[scale=0.75]{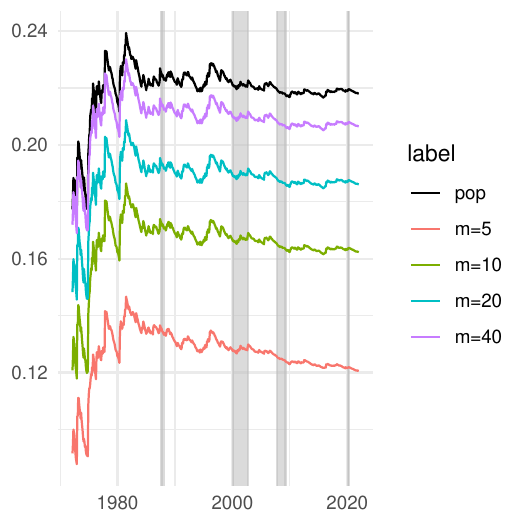}
\caption{\label{fig:simtotalr2exp}Explained variation (expanding)}
\end{subfigure}
 \begin{subfigure}[c]{0.46\textwidth}
 \includegraphics[scale=0.75]{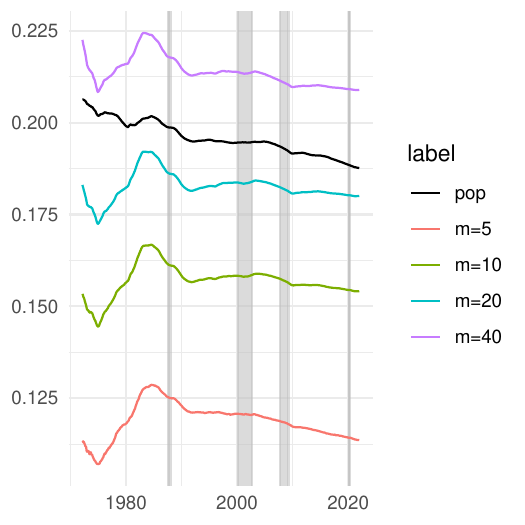}
  \caption{\label{fig:simeigdistexp}Ratio factor-explained to total variance (expanding)}
 \end{subfigure}
 \caption{\label{fig_sim_idioandcros}Out-of-sample explained variation by portfolio factors and idiosyncratic to systematic ratio in simulated data. The panels display the rolling (over \(\rolling = \rollnum\) months) and expanding \( R^{2,\bm f}_{t,t-r,\text{OOS} }\) and average of the ratio $\rho_t^{\bm f}$ as defined in \eqref{eq_totalrsqr} and \eqref{defrhort}, respectively, representing the proportion of factor-explained to total variance  as a measure of idiosyncratic risk, using the COCO model with \( m = 5,\, 10,\, 20,\, 40 \) systematic factors. The analysis is based on unbalanced US common stock excess returns and associated covariates from 1962 to 2021. Shaded areas indicate major market crashes: the 1987 Crash, the Dot-Com Bubble, the Global Financial Crisis, and the COVID-19 Pandemic. }
\end{figure}

Figure \ref{fig_sim_idioandcros} shows the amount of cross-sectional variation explained by the population, and the COCO model. The amount of variation is increasing monotonically with $m^{\sy}$. Differences in variation explained between higher-dimensional and lower-dimensional models    are higher than with real data. While Figure \ref{fig_eigdist} (real data) shows merely a few percentage points difference between $m^\sy=5$ and $m^\sy=40$, Figure \ref{fig:simtotalr2} shows a two-fold increase. As far as the ratio of systematic to idiosyncratic risk is concerned, Figure~\ref{fig:simeigdist} shows that the COCO model with $m^{\sy}=20,40$ gets close to population levels.

\begin{figure}
\centering
\begin{subfigure}[c]{0.46\textwidth}
 \includegraphics[scale=0.75]{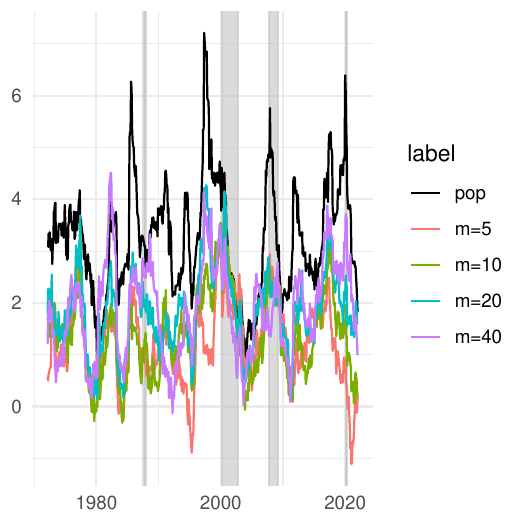}
 \caption{Realized Sharpe ratio (rolling)}
 \end{subfigure}
 \begin{subfigure}[c]{0.46\textwidth}
 \includegraphics[scale=0.75]{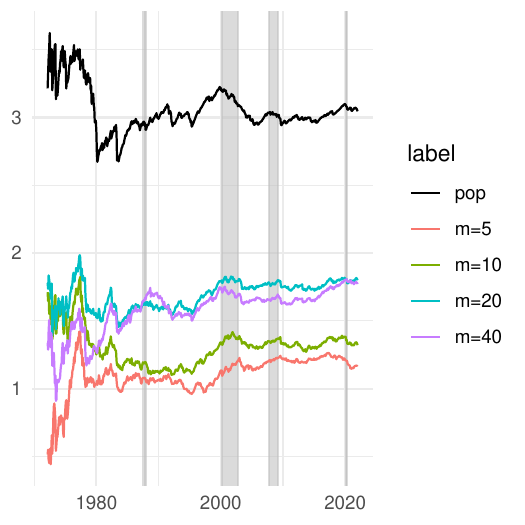}
 \caption{Realized Sharpe ratio (expanding)} 
 \end{subfigure} \\
  \begin{subfigure}[c]{0.46\textwidth}
 \includegraphics[scale=0.75]{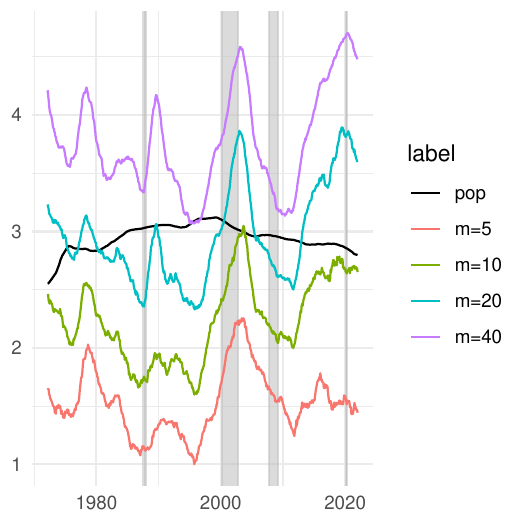}
 \caption{\label{fig:predsharpesim}Predicted  Sharpe ratio} 
 \end{subfigure}
 \caption{\label{fig_sim_sharpe}Predicted and realized maximum Sharpe ratios in simulated data. The upper panels show the rolling (over $\rolling=\rollnum$  months) and expanding estimates of the annualized out-of-sample Sharpe ratio of the cMVE portfolio, and the lower panel shows the rolling average (over $\rolling=\rollnum$  months) of annualized predicted maximum Sharpe ratios based on monthly returns, calculated using the COCO model with \( m = 5,\, 10,\, 20,\, 40 \) systematic factors.  The population model is described in \eqref{eq:facrepsimpl} with $m^{\sy}=40$. The analysis is based on unbalanced US common stock excess returns and associated covariates from 1962 to 2021. Shaded areas indicate major market crashes: the 1987 Crash, the Dot-Com Bubble, the Global Financial Crisis, and the COVID-19 Pandemic.}
\end{figure}

\begin{figure}
\begin{center}
 \includegraphics[scale=0.75]{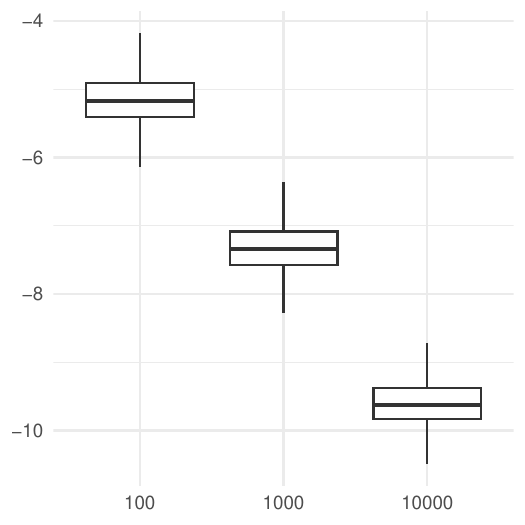}
 \end{center}
  \caption{\label{fig_sim_asymptotics}Sampling distribution of distance  to population parameter. From simulation experiments $i=1,\ldots,2000$, this figure shows the boxplots of the sampling distribution of the log deviation $\log \left ( \| \vecth \bm U ^{\sy}_{T,i} - \vecth \bm U ^{\sy \, \text{pop}}\|_2^2 + (u^{\id}_{T,i}-u^{\id \, \text{pop}})^2 \right )$ for $T=100,1000,10000$ and $m^{\sy}=40$.  }
\end{figure}

Figure \ref{fig_sim_sharpe} shows realized Sharpe ratios in the simulated economy, with stark differences between the population and the COCO model. While the COCO-induced Sharpe ratios can be found in the vicinity of the observed, real data, knowledge of the population moments yield very high Sharpe ratios of three in annualized terms. Predicted Sharpe ratios are lower than  realized ones for specifications with higher $m$, as shown by Figure \ref{fig:predsharpesim}, but agree with $m=5$.

Figure~\ref{fig_sim_asymptotics} shows box plots of the expected distance between the COCO estimator and the population parameter, based on 2000 simulated datasets of sizes \(T = 100, 1000, 10000\), shown on a logarithmic scale. Despite the fact that the COCO estimator arises from a non-standard, constrained optimization problem, the reduction in expected distance aligns closely with the rate predicted by the mean squared error bound in Theorem~\ref{thmasy}\ref{thmasy2}, as well as the finite-sample guarantee in Theorem~\ref{thmasy}\ref{thmasy3}. The figure reveals a nearly linear decay of the upper bound with the logarithm of the sample size $T$, indicating that the theoretical bounds are relatively tight.

\end{appendix}

\end{document}